\newcommand*\diff{\mathop{}\!\mathrm{d}}
\newtheorem{theorem}{Theorem}
\newtheorem{lemma}{Lemma}
\newtheorem{proposition}{Proposition}
\newtheorem{remark}{Remark}
\newtheorem{definition}{Definition}
\newtheorem{assumption}{Assumption}
\newcommand{\subscript}[2]{$#1 _ #2$}
\newcommand{\setword}[2]{%
  \phantomsection
  #1\def\@currentlabel{\unexpanded{#1}}\label{#2}%
}
\def\<#1>{\mathinner{\langle#1\rangle}}
\providecommand{\keywords}[1]
{
  \small	
  \textbf{\textit{Keywords---}} #1
} 
\title{Numerical analysis of a particle system for the calibrated Heston-type local stochastic volatility model}
\author{Christoph Reisinger\thanks{
Mathematical Institute, University of Oxford, Oxford OX2 6GG, UK
 ({\tt christoph.reisinger@maths.ox.ac.uk, 
maria.tsianni@maths.ox.ac.uk})}
\and
Maria Olympia Tsianni\footnotemark[1]
}
\date{\today}
\begin{document}

\maketitle
\begin{abstract}
We analyse a Monte Carlo particle method for the simulation of the calibrated Heston-type local stochastic volatility (H-LSV) model.
The common application of a kernel estimator for a conditional expectation in the calibration condition 
results in a McKean--Vlasov (MV) stochastic differential equation (SDE) with non-standard coefficients. The primary challenges lie in certain mean-field terms in the drift and diffusion coefficients and the $1/2$-H\"{o}lder regularity of the diffusion coefficient. We establish the well-posedness of this equation for a fixed but arbitrarily small bandwidth of the kernel estimator. Moreover, we
prove a strong propagation of chaos result, ensuring convergence of the particle system under a condition on the Feller ratio and up to a critical time. For the numerical simulation, we employ an Euler--Maruyama scheme for the log-spot process and a full truncation Euler scheme for the CIR volatility process. Under certain conditions on the inputs and the Feller ratio, we prove strong convergence of the Euler--Maruyama scheme with rate $1/2$ in time, up to a logarithmic factor. Numerical experiments illustrate the convergence of the discretisation scheme and validate the propagation of chaos in practice.
\end{abstract}
\keywords{McKean-Vlasov SDEs, Irregular coefficients, Interacting particle system, Euler-Maruyama scheme, Well-posedness}
\section{Introduction}

\hspace*{1.5em} Empirical, theoretical, and computational analysis of volatility models for derivative pricing is a classic topic that has given rise
to comprehensive, deep, and elegant mathematical results over the past decades. Yet, there are still intriguing open questions at the core of the well-posedness
and simulation of some of the most prevalent models, as is the case for the calibrated dynamics of the local stochastic volatility (LSV) model we consider here.\bigbreak


The class of LSV models combines features from both stochastic volatility (SV) and local volatility (LV) models, thereby enjoying the principal advantages of each. Stochastic volatility models (e.g., the Heston model~\cite{SHeston}) capture crucial market phenomena, such as clustered volatility and a negative correlation between volatility and asset price. However, as a parametric family with finitely many parameters, they cannot capture the full range of implied volatility smiles. By contrast, local volatility models (e.g., Dupire~\cite{BDup}) achieve perfect calibration to vanilla options but may fail to capture more complex market dynamics. To balance these strengths and limitations, the LSV model, first introduced in~\cite{JexHenWan}, to the best of our knowledge, combines both volatility components within a single framework, offering a powerful tool for practical applications. Empirical studies and industry practice suggest that this hybrid approach often outperforms either pure LV or SV models in the calibration of European options, as well as in pricing and risk management (see, e.g.,~\cite{lipton, renetal, Tianetal, Stoepetal, GuyHen2}).
\newline 



 In this paper, we focus specifically on the Heston-type local-stochastic volatility (H-LSV) model, in which the squared volatility process has Cox--Ingersoll--Ross (CIR) dynamics (see \cite{CIR}). This model is popular in the industry due to the useful properties of the CIR process, such as mean-reversion and non-negativity, and because the Heston model is analytically tractable, allowing for rapid calibration of the Heston parameters. \bigbreak
 
For a given time horizon $[0, T]$, we consider a complete filtered probability space $(\Omega, \mathcal{F}, \mathbb{F}, \mathbb{Q})$ with natural filtration $\mathbb{F} = (\mathcal{F}_t)_{t \in [0, T]}$, which supports an $\mathcal{F}_t$--adapted two-dimensional standard Brownian motion $(W^s, W^v)$. The H-LSV model has the risk-neutral dynamics:
\begin{equation}\label{HestonLVM}
    \begin{split}
    &\diff S_t = \sqrt{V_t}\,S_t\, \sigma(t,S_t) \diff W^s_t, \, S_0>0,\\
    &\diff V_t = k(\theta-V_t)\diff t + \xi\sqrt{V_t}\diff W^v_t,\, V_0>0,\\
    &\diff W^s_t\diff W^v_t = \rho\diff t, \, \rho \in (-1,1),
    \end{split}
\end{equation}
where $S_t$ is the value at time $t$ of the one-dimensional process $S = (S_t)_{t\in [0,T]}$ describing the spot price of the underlying asset, and $V_t$ is the value at time $t$ of the squared volatility process $V = (V_t)_{t\in[0,T]}$. The volatility process has parameters $k, \theta, \xi > 0$ describing its mean-reversion rate, long-term mean, and volatility, respectively.\bigbreak

To calibrate this model to market prices, it is common to follow two steps. Initially, using a set of observed call option prices from the market, one calibrates a pure Heston process to determine parameters under which the model best matches the market prices. Here, a semi-closed form solution is available for the price as a function of the parameters, and optimal parameters can be found by solving a low-dimensional constrained optimisation problem numerically. Second, at each step $t$ of a time-discretisation, one needs to calibrate the so-called \emph{leverage function} $\sigma(t,\cdot)$. A consistency condition for the exact calibration of SV models to market prices is formulated by Dupire in \cite{BDupire} using Gy\"{o}ngy's result \cite{Gyo}. The specification for general LSV models, as outlined in \cite{JexHenWan}, and tailored for the H-LSV model is given by, for all $s>0$ and $t$ up to the largest traded maturity:

\begin{equation}\label{Dupireiff}
\sigma^{2}(t, s)=
\frac{\sigma_{\text{Dup}}^{2}(t,s)}{\mathbb{E}^{\mathbb{Q}}[V_{t}|S_{t} = s]},
\end{equation}
where $\sigma_{\text{Dup}}$ is expressed by the Dupire formula, for given call option market prices $C(T,K)$ with maturity time $T$ and strike $K$:
\begin{equation*}
\sigma_{\text{Dup}}^{2}(T,K) = \frac{\partial C(T,K)}{\partial T}\big/\left(\frac{K^2}{2}\cdot\frac{\partial^2 C(T,K)}{\partial K^2}\right),
\end{equation*}
where, for simplicity, we assume zero interest and dividend rates. To obtain the local volatility surface of option prices for all possible strikes and maturities, we need to interpolate and extrapolate the local volatility. As suggested by the authors in \cite{GuyHen}, cubic spline interpolation and flat extrapolation can be employed.\bigbreak

Therefore, it is plausible that a calibrated dynamics of $S$, i.e., an SDE model which is consistent with the observed prices, can be obtained by
reinserting \eqref{Dupireiff} into \eqref{HestonLVM}. 
Equivalently to the initial problem formulation \eqref{HestonLVM}, we consider the following SDE describing the dynamics of the log-spot process $X = \text{log}(S)$ under the risk-neutral measure ${\mathbb{Q}}$:
\begin{equation}\label{sde}
\begin{split}
    &\diff X_t = -\frac{1}{2}V_t\frac{\sigma^2_{\text{Dup}}(t,e^{X_t})}{\mathbb{E}^{\mathbb{Q}}[V_t|X_{t}]}\diff t + \sqrt{V_t}\frac{\sigma_{\text{Dup}}(t,e^{X_t})}{\sqrt{\mathbb{E}^{\mathbb{Q}}[V_{t}|X_{t}]}}\diff W^x_t,\\
    &\diff V_t = k(\theta-V_t)\diff t + \xi\sqrt{V_t}\diff W^v_t,\\
    &\diff W^x_t \diff W^v_t = \rho \diff t, \, \rho \in (-1,1),\\
\end{split}
\end{equation}
with $X_0 \in \mathbb{R}$, $V_0 \in \mathbb{R}^{+}$, and $(W^x, W^v)$ a two-dimensional Brownian motions.\newline 




A key difficulty in the treatment of \eqref{sde}  arises from the conditional expectation $\mathbb{E}^{\mathbb{Q}}[V_{t}|X_{t}]$ that appears in the diffusion coefficient of the calibrated spot price dynamics. This makes the diffusion coefficient dependent on the underlying joint distribution of the state and volatility processes, and therefore a McKean--Vlasov (MV) SDE, but with a non-standard singular measure-dependence. 
As a result, the well-posedness of \eqref{sde} is currently an open question.
Several works deal with special cases and variants (see \cite{Lackeretal,Djete,ZourdainZhou} for versions of the MV SDE and \cite{AbeTach} for an idealized PDE version), but all the results known to us do not cover the problem \eqref{sde} that is precisely of financial interest.\bigbreak

From a practical perspective, simulating the above model poses additional challenges. Several techniques for the simulation of distribution--dependent SDEs have been established over the years. In the context of the calibrated LSV model, one is a PDE approach that is based on solving the Fokker--Planck equation, see, for example, \cite{renetal}. Furthermore, Bayer et al. propose in \cite{BayBelButScho} a novel regularisation approach using reproducing kernel Hilbert space techniques and provide well-posedness and propagation of chaos results for a regularised model. Alternatively, deep learning techniques could be used, as in \cite{Cuchieroetal}, where the authors employ a set of neural networks to parameterize the leverage function. This method allows for model calibration using a generative adversarial network approach, circumventing the need for traditional interpolation methods.
In our work, to approximate the conditional expectation, we use the particle method introduced for a general LSV model by Guyon and Henry-Labord\`ere in \cite{GuyHen}. This leads to an interacting particle system and the following MV SDE with irregular coefficients, which we can simulate.\bigbreak

We take as starting point for our analysis a regularised MV SDE,
\begin{equation}\label{independentparticles}
\begin{split}
\diff X_t &=  \beta(t,(X_t,V_t),\mathbb{P}_{(X_t, V_t)})\diff t + \sigma(t,(X_t,V_t),\mathbb{P}_{(X_t, V_t)})\diff  W^{x}_t, 
\,\,\,t \in [0,T],
\end{split}
\end{equation}
with $V$ as in \eqref{sde} and
\begin{eqnarray*}
\sigma(t,(x,v),\mathbb{P}_{(X_t, V_t)}) = \sqrt{v}\sigma_{\text{Dup}}(t,e^{x})\frac{\sqrt{\mathbb{E}[K(\frac{X-x}{\epsilon})]+ \delta}}{\sqrt{\mathbb{E}[V K(\frac{X-x}{\epsilon})]+\delta}}, \quad
\beta = -\frac{{\sigma}^2}{2},
\end{eqnarray*}
where $K(\cdot)$ is a suitable kernel function, and $\delta$ is a positive constant. 
This can be seen as the limit for an infinite number of particles of the method in \cite{GuyHen}. In fact, one of the results of the present paper is to show exactly that, namely the convergence of the natural particle approximation to the solution of \eqref{independentparticles} (for details, we refer the reader to Section~\ref{particlemethod}).\newline

{
In our earlier work \cite{ReiTsi}, we considered a variant of this model where the volatility process is given by a bounded Lipschitz function of an Ornstein–Uhlenbeck process. We derive explicit Lipschitz constants of the term $\sqrt{}/\sqrt{}$ in $\sigma$ in terms of $\delta$ and $\epsilon$, and can then appeal to standard methods for the well-posedness and numerical approximations of McKean–Vlasov equations. This is not possible in the case of \eqref{independentparticles} due to the presence of the square-roots of the variance process that are a key feature of the Heston model. The results on well-posedness and numerical approximation here are therefore of broader interest in the context of McKean–Vlasov SDEs with non-Lipschitz coefficients. \newline
}

The strong well-posedness of the CIR process is already established (see, e.g., Chapter 5 in \cite{KarShr}). Here, we tackle the strong existence of a unique solution to the log-spot process describing the regularised dynamics. The main difficulty of this task arises from the combination of the H\"{o}lder--1/2 continuity with the measure dependence of the diffusion coefficient, which is also unbounded. Although different works address the well-posedness of MV SDEs with irregular coefficients, none of them covers the specific setting considered in this paper. To name a few, X.\ Erny in \cite{XErny} considers a MV SDE with locally Lipschitz coefficients and proves the strong well-posedness and a propagation of chaos result assuming a bounded diffusion coefficient. 
Another result comes from \cite{LiMaoSongWuYin} where the authors consider a MV SDE with a locally Lipschitz continuous diffusion coefficient in the state variable and bounded state process and prove the well-posedness using an interpolated Euler-like sequence. Furthermore, the authors in \cite{NingJing} prove the strong well-posedness of a class of MV stochastic variational inequalities with measure-dependent coefficients. The above results, however, are not applicable in our setting where the coefficients are unbounded. For further results on the well-posedness of MV SDEs we refer to \cite{Wang2024WellposednessAP, HamSisSzp, HuangWang, ReiSalTug}.\newline

Having established the well-posedness of the regularised MV SDE, we can then prove the convergence of the interacting particle system to the regularised equation as the number of particles $N \to \infty$. In other words, we establish a \textit{propagation of chaos} result. In this paper, we prove strong convergence in the following pathwise sense:
\begin{equation*}
\lim_{N\to \infty}\sup_{i \in \{1,..,N\}} \mathbb{E} \Bigg[\sup_{t \in [0,..,T]} \lvert X^{i,N}_t - X^i_t \rvert^2\Bigg]=0,
\end{equation*} 
where $X^{i,N}$ and $X^{i}$ denote the solutions to the interacting particle system \eqref{particlessystem} and the regularised MV SDE \eqref{independentparticles}, driven by i.i.d.\ copies $(W^{x,i},W^{v,i})$ of the two-dimensional standard Brownian motion $(W^{x},W^{v})$, respectively. This result is well--established under global Lipschitz conditions (see, e.g., \cite{Sznitman} and \cite{Meleard}). In the case of MV SDEs with super-linear drifts, the authors in \cite{ReiEngSmi} prove the above and also provide the rate of convergence. Moreover, the author in \cite{HZhang} provides a propagation of chaos result for a MV SDE with H\"{o}lder continuous coefficient, however, with a diffusion coefficient independent of the measure and therefore not applicable in our setting. For more propagation of chaos results refer to \cite{Carmona, Lacker,XErny}.\bigbreak
In practice, we need to apply a time-discretisation scheme to approximate the particle system. In our work, we use the classic Euler--Maruyama (EM) scheme for the log-spot process and the full-truncation Euler (FTE) scheme \cite{LorKoe} for the volatility process. This ensures the non-negativity of the latter throughout the simulation, and therefore that our scheme is well-defined. Under global Lipschitz conditions, it is well-documented (see, for example, \cite{KloePlat}) that the Euler scheme converges strongly with rate $1/2$ in time. However, our model does not fall into this category due to the CIR volatility process which makes the diffusion coefficient only $1/2$-H\"{o}lder in the second component and not globally Lipschitz continuous. This renders another challenge in the simulation of the H-LSV model. The available literature concerning the propagation of chaos and strong convergence of time-discretisation schemes for MV SDEs with non-Lipschitz coefficients is relatively limited. For an initial but partial investigation into the strong convergence of the EM scheme for MV SDEs with irregular coefficients, we refer to \cite{Zhan}. Bao and Huang in \cite{BaoHua} consider two different cases of MV SDEs with H\"older continuous (i) diffusion and (ii) drift coefficients and prove strong convergence of the Euler--Maruyama scheme and propagation of chaos. Another study on the strong convergence of the EM scheme is conducted by Liu et al.\ in \cite{LiuShiWu}, focusing on the case of super-linear drift and H\"{o}lder continuous diffusion coefficients. However, the aforementioned results consider diffusion coefficients with no mean-field interactions, and are thus not directly applicable to our context. Last but not least, the works \cite{XErny} and \cite{LiMaoSongWuYin} also provide strong convergence results for the EM scheme, however, they are not applicable in our case as explained in the paragraph for the well-posedness above.\bigbreak

{
In summary, the overall numerical solution is affected by four numerical parameters: two regularisation parameters that include the kernel bandwidth $\epsilon$ and an extrapolation parameter $\delta$, the number of particles $N$, and the stepsize of the timestepping scheme $h$. In this work, we fix $\epsilon$ and $\delta$. It is anticipated that as $\epsilon,\delta \rightarrow 0$, the solution of the regularised system \eqref{independentparticles} converges in a suitable sense to the solution of \eqref{sde}, but as the well-posedness of \eqref{sde} is currently unknown and has eluded ourselves as well as prior works outlined above, this is beyond the scope of this work. We refer to \cite{ReiTsi} for a numerical study of how $\epsilon$ and $\delta$ affect the numerical accuracy of the particle and timestepping approximation on the one hand, and the calibration accuracy to the target option prices on the other.\bigbreak

The focus of this work instead is to fill another gap in the literature that comes from the square-root diffusion in \eqref{sde} and  \eqref{independentparticles}, which makes established results for McKean–Vlasov equations inapplicable due to its reduced regularity.
The main results are the following:
}


\begin{enumerate}
\item the well-posedness of the regularised calibrated Heston-LSV model \eqref{independentparticles}, which is a McKean--Vlasov SDE with an unbounded Hölder continuous diffusion coefficient;
\item a propagation of chaos result for the convergence of the particle system to the regularised equation;
\item and the strong convergence of the Euler--Maruyama scheme for the particle system.

\end{enumerate}

\subsection{Structure of paper}

In Section \ref{HCIR}, we review the particle method for approximating the calibrated dynamics and summarise key properties of the CIR process that will be important in our proofs. To keep the focus on the convergence results, we postpone establishing the well-posedness of the regularised equation until Section \ref{WellposednessSection}. In Section \ref{PropOfChaos}, we prove a strong propagation of chaos result, showing that the particle system converges to the regularised MV equation up to a critical time. We then turn to the time-discretised setting in Section \ref{timediscretisation}, where we establish the strong convergence of the Euler--Maruyama scheme applied to the particle system with order $1/2$ in time (up to a logarithmic factor) up to a critical time. Finally, in Section \ref{numexp}, we perform numerical experiments that illustrate our theoretical findings and explore scenarios in which certain assumptions in our proofs may be violated.

\subsection{Notation and Preliminaries}
Given $T>0$, let $(\Omega, \mathcal{F}, \mathbb{P})$ denote a complete probability space endowed with $\mathbb{F} = (\mathcal{F}_t)_{t \in [0,T]}$, the natural filtration that satisfies the usual conditions, i.e., it is right-continuous and $\mathcal{F}_0$ contains all $\mathbb{P}$-null sets. Also, let $(W_t)_{t\in [0,T]}$ be a standard multi--dimensional Brownian motion on $(\Omega, \mathcal{F}, \mathbb{F},\mathbb{P})$ such that it is $\mathbb{F}$-adapted. We use $L^2(\Omega, \mathcal{F},\mathbb{P};\mathbb{R}^n)$ to denote the space of $\mathbb{R}^n-$valued square integrable random variables on $(\Omega, \mathcal{F},\mathbb{P})$ and for any $\theta \in L^2(\Omega, \mathcal{F},\mathbb{P};\mathbb{R}^n),\lVert \theta \rVert_{L^2} := \mathbb{E} \left[ |\theta|^2 \right]^{1/2}$. By $(\mathbb{R}^n,|\cdot|)$ we denote the $n$-dimensional Euclidean space and $|\cdot|$ the Hilbert--Schmidt norm. Also, by $\langle \cdot,\cdot \rangle$ we denote the usual inner product in a given Euclidean space.\bigbreak

\noindent Given a complete separable metric space $(E, d)$ (often being a Polish space), let $\mathcal{P}(E)$ denote the set of probability measures on $(E,\mathcal{B}(E))$, with $\mathcal{B}(E)$ the Borel $\sigma$-field over $E$, and for any $p\ge 1$, let $\mathcal{P}_p(E)$ be the subspace of $\mathcal{P}(E)$ of the probability measures of order $p$, i.e.
 \begin{equation*}
 \mathcal{P}_p(E) := \bigg\{ \mu \in \mathcal{P}(E): \text{ for all }x_0 \in E, \int_E d(x_0,x)^p\mu (\diff x) < + \infty \bigg\},    
 \end{equation*}
 where $d: E \times E \to [0, \infty)$, is a metric of the "distance" between two points in $E$. The set $\mathcal{P}_p(E)$ is equipped with the $p-$Wasserstein distance $\mathcal{W}_p(\mu,\nu)$ defined as
\begin{equation}
\mathcal{W}_p(\mu,\nu) = \inf_{\pi}\left(\iint_{E \times E}d(x,y)^p \pi (\diff x,\diff y) \right)^{\frac{1}{p}},
\end{equation}
where $\pi \in \mathcal{P}(E\times E)$ such that $\pi(. \times E) = \mu $ and $\pi(E \times .) = \nu$, i.e. $\pi$ is a coupling of $\mu$ and $\nu$, for any $p\ge 1$ and $\mu, \nu \in \mathcal{P}_p(E).$ Here we work on $\mathbb{R}^n$, and $d$ is the standard Euclidean norm.

\section{Problem formulation}\label{HCIR}

We recall for convenience the complete regularised MV-SDE, with $Z=(X,V)$, $z=(x,v)$, $\mu^{Z}_t = \mathbb{P}_{(X_t, V_t)}$:
\begin{equation}\label{independentparticles3}
\begin{split}
\diff X_t &=  \beta(t,Z_t,\mu^{Z}_t)\diff t + \sigma(t,Z_t,\mu^{Z}_t)\diff  W^{x}_t, 
\,\,\,t \in [0,T], \\
\diff V_t &= k(\theta-V_t)\diff t + \xi\sqrt{V_t}\diff W^v_t,
\end{split}
\end{equation}
with $\diff W^x_t \diff W^v_t = \rho \diff t$, $\rho \in (-1,1)$, and
\begin{eqnarray*}
\sigma(t,z,\mu^{Z}_t) = \sqrt{v}\sigma_{\text{Dup}}(t,e^{x})
\frac{\sqrt{\mathbb{E}^{\mu^{Z}_t}[K(\frac{X-x}{\epsilon})]+ \delta}}{\sqrt{\mathbb{E}^{\mu^{Z}_t}[V K(\frac{X-x}{\epsilon})]+\delta}}, \quad
\beta = -\frac{{\sigma}^2}{2}.
\end{eqnarray*}

For the analysis, we make the following assumptions on the coefficient functions.
\begin{assumption}\label{A3} We assume that
\begin{enumerate}[label=(\subscript{A}{{\arabic*}})]
    \item $(t,x) \to \sigma_{\mathrm{Dup}}(t,e^{x})$ is bounded, Lipschitz in $x$, and $\frac{1}{2}$-H\"{o}lder  in $t$, so that there exist positive constants $A_1, \,L_{\mathrm{Dup}}$ such that for all $x_1,x_2 \in \mathbb{R}$, and $t_1,t_2 \in [0,T],$\newline
$|\sigma_{\mathrm{Dup}}(t_1,e^{x_1})|\le A_1, \text{ and } \big|\sigma_{\mathrm{Dup}}(t_1,e^{x_1}) - \sigma_{\mathrm{Dup}}(t_2,e^{x_2})\big| \le L_{\mathrm{Dup}}\big(|t_1-t_2|^{1/2} + |x_1 - x_2| \big).$ 
    \item $K(\cdot)$ is bounded and Lipschitz continuous so that there exist positive constants $A_2, L_K$ such that for all $x_1,x_2 \in \mathbb{R}$, 
    $|K(x_1)|\le A_2, $ and  $|K(x_1)-K(x_2)|\le L_K |x_1-x_2|$.
\end{enumerate}
\end{assumption}

Neither of these assumptions is practically very restrictive. For the function $L_{\mathrm{Dup}}$, a parametric ansatz is typically calibrated to a discrete set of options data. Common choices are splines in the spatial (i.e., second) variable with some extrapolation, which can be made to satisfy the assumption.
The assumption could be weakened to allow a finite number of discontinuities in time (i.e., the first variable) without complicating the analysis.

\begin{subsection}{Particle method}\label{particlemethod}
To approximate the conditional expectation that appears in the calibrated dynamics of the log-spot process $X$ in \eqref{sde}, we apply a particle method as introduced by Guyon and Henry--Labord\`ere in \cite{GuyHen} for a general LSV model. This involves approximating the true measure $\mu^{Z}_t$ of the joint law of $(X_t,V_t)$ by
\begin{equation*}
 \mu^{Z^N}_t(x,v; \epsilon_x,\epsilon_v) := \frac{1}{N}\sum_{i=1}^{N}\Phi_{\epsilon_x}(X_t^{i,N}-x)\Phi_{\epsilon_v}(V_t^{i}-v),
\end{equation*}
where $\Phi_{\epsilon}(\cdot)$ is a regularising kernel function of the form 
\begin{equation} \label{ReTskernel}
    \Phi_{\epsilon}(x) = \epsilon^{-1}K\left(\epsilon^{-1} x\right),
\end{equation}
with $K(\cdot)$ a real--valued, non--negative kernel function satisfying $\int_{-\infty}^{+\infty} K(u)\diff u = 1$, and $\epsilon$ the bandwidth of the kernel. For ease of notation, from now on, we denote $\epsilon_x$ by $\epsilon$. This leads to the Nadaraya--Watson estimator for the conditional expectation as follows:
\begin{equation*}
\mathbb{E}[V_t|X_t=x] \approx \frac{\frac{1}{N}\sum_{i=1}^{N}V_t^{i}\Phi_{\epsilon}(X_t^{i,N}-x)}{\frac{1}{N}\sum_{i=1}^{N}\Phi_{\epsilon}(X_t^{i,N}-x)},
\end{equation*}
where $(\textbf{X}^N_t)_{t\in[0,T]}:= \big(X_t^{1,N}, X_t^{2,N}, ... , X_t^{N,N}\big)^\intercal_{t\in[0,T]} $ denotes the interacting particle system for $X$, and $(\textbf{V}_t)_{t\in[0,T]}:= \big(V_t^{1}, V_t^{2}, ... , V_t^{N}\big)^\intercal_{t\in[0,T]} $ independent Monte Carlo samples for $V$.\newline
$\mathbb{E}^{\mu_t^{Z}}\left[V_t\Phi_{\epsilon}(X_t-x)\right]$ is then estimated by
\begin{equation*}\label{NW22}
 \mathbb{E}^{\mu_t^{Z^N}} \left[V_t\Phi_{\epsilon}(X_t-x)\right] = \iint \left( \tilde{V}\Phi_{\epsilon}(\tilde{X}-x)\right)\diff \mu_t^{Z^N} ( \tilde{X},\tilde{V})= \frac{1}{N} \sum_{i=1}^N\,{V^i}\,\Phi_{\epsilon}(X^{i,N}-x).
\end{equation*}
This leads to the following system that we can simulate.\bigbreak

\noindent \textit{Interacting particle system}\bigbreak

\noindent Consider particles $X^{i,N}$, $i=1,...,N$, satisfying the following SDE,
\begin{equation}\label{particlessystem}
\diff X^{i,N}_t = \beta(t,(X_t^{i,N},V_t^{i}), \mu_t^{Z^N})\diff t+\sigma(t,(X_t^{i,N},V_t^{i}),\mu_t^{Z^N})\diff{W^{x,i}_t},\, t\in [0,T],
\end{equation}
with i.i.d. $X^{i,N}_0=X^{i}_0$, and
\begin{equation*}
\sigma(t,(X_t^{i,N},V_t^{i}), \mu_t^{Z^N}) = \sqrt{V_t^{i}}\,\sigma_{\text{Dup}}(t,e^{X^{i,N}_t})\frac{\sqrt{\sum_{j=1}^{N}K\big(\frac{X_t^{j,N} - X_t^{i,N}}{\epsilon}\big)+\delta}}{\!\!\!\!\sqrt{\sum_{j=1}^{N}V_t^{j}K\big(\frac{X_t^{j,N} - X_t^{i,N}}{\epsilon}\big)+\delta}}, \qquad \beta = -\frac{\sigma^2}{2}.
\end{equation*}
The positive constant $\delta$ is added to the numerator and denominator to avoid potential singularities. Also, for each $i$, $W^{x,i}$ are independent Brownian motions and $V^i$ independent Monte Carlo samples that evolve according to the dynamics
\begin{equation}\label{volparticles}
\diff V^{i}_t = k(\theta - V^{i}_t)\diff t + \xi \sqrt{V_t^i} \diff W^{v}_t, \,\,\, V^i_0 = V^i_0, \,\,\, t\in[0,T].
\end{equation}
The interaction term $\mu_t^{Z^N}$ distinguishes the particle method from the classical Monte Carlo method since the paths in the former are no longer independent. As discussed above, it is necessary for the particle system to satisfy a \textit{propagation of chaos} property. 
We refer to the Introduction, where we discuss some of the available literature on propagation of chaos results for irregular MV SDEs. Since we are interested in the strong convergence, we prove a strong convergence result in Theorem \ref{PropChaos}. In the limit, we expect the particles to become independent and satisfy the following MV SDE, which we restate from the Introduction.\bigbreak

\noindent \textit{Non--interacting particle system}\bigbreak

\noindent For each $i$, let $X^i$ be independent particles that satisfy the MV SDE
\begin{equation}\label{independentparticles2}
\begin{split}
\diff X^i_t &=  \beta(t,(X^i_t,V^i_t),\mu^{Z}_t)\diff t + \sigma(t,(X^i_t,V^i_t),\mu^{Z}_t)\diff  W^{x,i}_t,\,\,\, X^i_0=X^i_0,\,\,\,t \in [0,T],
\end{split}
\end{equation}
with 
\begin{eqnarray*}
\sigma(t,(x,v),\mu^Z) = \sqrt{v}\sigma_{\text{Dup}}(t,e^{x})\frac{\sqrt{\mathbb{E}^{\mu^Z}[K(\frac{X-x}{\epsilon})]+ \delta}}{\sqrt{\mathbb{E}^{\mu^Z}[V K(\frac{X-x}{\epsilon})]+\delta}}, \quad
\beta = -\frac{{\sigma}^2}{2}.
\end{eqnarray*}
Here, $\mu^{Z}_t$ denotes the true joint law of $(X^i_t,V^i_t)$, which is the same for all $(X^i_t,V^i_t)$ since the particles here are independent.\newline

\begin{remark}
For convenience of notation, let $\tilde{\beta}(t,x,\nu) := -\frac{1}{2}\sigma^2_{\text{Dup}}(t,e^{x})\frac{\mathbb{E}^{\nu}[K(\frac{X-x}{\epsilon})]+ \delta}{\mathbb{E}^{\nu}[V K(\frac{X-x}{\epsilon})]+\delta},$ and $\tilde{\sigma}(t,x,\nu) := \sigma_{\text{Dup}}(t,e^{x})\frac{\sqrt{\mathbb{E}^{\nu}[K(\frac{X-x}{\epsilon})]+ \delta}}{\sqrt{\mathbb{E}^{\nu}[V K(\frac{X-x}{\epsilon})]+\delta}}$, so that $\beta(t,(x,v),\nu) := v \tilde{\beta}(t,x,\nu)$ and $\sigma(t,(x,v),\nu) := \sqrt{v} \tilde{\sigma}(t,x,\nu)$. Under Assumption \ref{A3}, and by adapting Proposition 1 in \cite{ReiTsi} to system \eqref{sde}, we conclude that $\tilde{\beta}(t,x,\nu)$ and $\tilde{\sigma}(t,x,\nu)$ are Lipschitz continuous in the state and measure variables with Lipschitz constants $L_{\tilde{\beta}}$ and $L_{\tilde{\sigma}}$ respectively, and $1/2-$H\"older continuous in time. $L_{\tilde{\beta}}$ and $L_{\tilde{\sigma}}$ depend on $\epsilon, \delta$ and the constants in Assumption \ref{A3}. For the exact dependence, we refer to the proof of Proposition 1 in \cite{ReiTsi}. In our work, we keep $\epsilon, \delta$, and the constants in Assumption \ref{A3} fixed. Also, $|\tilde{\beta}(t,x,\nu)|$ and $|\tilde{\sigma}(t,x,\nu)|$ are bounded by positive constants $\tilde{\beta}_{\max}$ and $\tilde{\sigma}_{\max}$, respectively, which depend on $A_1$, $A_2$, and $\delta$. It holds that, $\tilde{\beta}_{\max} = \frac{\tilde{\sigma}^2_{\max}}{2}$.
\end{remark}

\begin{remark}
In the proofs below, we use a technique involving the stopping--time $\tau^i_{\omega}:= \underset{t\ge 0}{\inf}\{\int_0^t V^i_u \diff u \ge \omega\}$. Although using stopping--times for McKean--Vlasov SDEs is tricky to handle, this is not the case here since the stopped process follows a standard SDE.
\end{remark}

\subsection{Properties of the Cox--Ingersoll--Ross process}
Before establishing the convergence of the particle system, we recall key properties of the CIR process that play a crucial role in our analysis. For further details, see for instance \cite{AndPit}, \cite{CozMarRei}, \cite{CozRei}, \cite{DNS}, \cite{HurKuz}, and the references therein.\bigbreak

\noindent Let $\nu = \frac{2k\theta}{\xi^2}$ be the Feller ratio and $\nu^* = 2+\sqrt{3}$. Also, for $\lambda >0$, let 
\begin{equation}\label{Tbound}
T^* :=
\begin{cases} 
\frac{2}{\sqrt{2\lambda \xi^2 -k^2}}\left[ \frac{\pi}{2} +\text{tan}^{-1}\left(\frac{k}{\sqrt{2\lambda \xi^2 -k^2}} \right)\right],& \text{if } k^2  < 2 \lambda \xi^2 , \\
\infty, & \text{if } k^2  \geq 2 \lambda \xi^2.
\end{cases}
\end{equation}
\begin{lemma}\label{momentboundvol}
  Let $V^i$ denote the CIR volatility process as in \eqref{volparticles}. It holds that 
\begin{enumerate}
\item $\underset{t\in [0,T]}{\sup}\mathbb{E}\left[{V_t^i}^p\right] < \infty, \, \forall \, p>-\nu, \, \textit{i.e. it has bounded moments,}$
\item $     \mathbb{E}\left[\underset{t\in [0,T]}{\sup}{V_t^i}^p\right] < \infty, \, \forall \, p\ge 1, \, \textit{i.e. it has uniformly bounded moments}.$
\end{enumerate}
\end{lemma}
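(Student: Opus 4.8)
The plan is to prove the two moment bounds for the CIR process $V^i$ separately, leveraging the well-known representation and tail properties of the noncentral chi-squared distribution together with standard time-change and martingale arguments.

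For part (1), the key observation is that for fixed $t$, the law of $V_t^i$ is a scaled noncentral chi-squared distribution with $\nu = 2k\theta/\xi^2$ degrees of freedom and noncentrality parameter proportional to $V_0 e^{-kt}$. Positive integer moments (and indeed all positive real moments) are finite and uniformly bounded in $t$ because the noncentrality parameter stays bounded on $[0,T]$; one can either quote the closed-form expression for moments of the noncentral chi-squared distribution or apply Itô's formula to $V_t^p$ and Gronwall's inequality to get $\sup_{t\in[0,T]}\mathbb{E}[(V_t^i)^p]<\infty$ for $p\ge 1$, then interpolate for $p\in(0,1)$. For the negative moments $p\in(-\nu,0)$, the condition $p>-\nu$ is exactly the integrability threshold: $\mathbb{E}[(V_t^i)^p]<\infty$ iff $p>-\nu$, which follows from the behaviour of the noncentral chi-squared density near the origin (density $\sim c\,x^{\nu/2-1}$ as $x\to 0^+$ when the noncentrality parameter is positive, and uniformly so for $t$ bounded away from $0$; near $t=0$ the process is close to the deterministic value $V_0>0$, so there is no issue). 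Taking the supremum over $t\in[0,T]$ then uses continuity in $t$ of these moments together with the uniform control just described.

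For part (2), the uniform-in-time bound $\mathbb{E}[\sup_{t\in[0,T]}(V_t^i)^p]<\infty$ for $p\ge 1$, I would apply Itô's formula to $(V_t^i)^p$, giving
\begin{equation*}
(V_t^i)^p = V_0^p + \int_0^t p(V_u^i)^{p-1}k(\theta-V_u^i)\,\diff u + \int_0^t \tfrac{1}{2}p(p-1)(V_u^i)^{p-2}\xi^2 V_u^i\,\diff u + \int_0^t p\xi (V_u^i)^{p-1/2}\,\diff W_u^v.
\end{equation*}
Taking the supremum over $t$, then expectations, the drift terms are controlled by $\sup_{u\in[0,T]}\mathbb{E}[(V_u^i)^p]$ and $\sup_{u\in[0,T]}\mathbb{E}[(V_u^i)^{p-1}]$, both finite by part (1). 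For the stochastic integral, the Burkholder--Davis--Gundy inequality bounds $\mathbb{E}[\sup_{t\le T}|\int_0^t p\xi(V_u^i)^{p-1/2}\,\diff W_u^v|]$ by a constant times $\mathbb{E}[(\int_0^T (V_u^i)^{2p-1}\,\diff u)^{1/2}]\le C\,(\int_0^T \mathbb{E}[(V_u^i)^{2p-1}]\,\diff u)^{1/2}$ after Jensen/Cauchy--Schwarz, again finite by part (1) since $2p-1\ge 1>-\nu$. Combining these estimates yields the claim.

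The main obstacle is getting the negative-moment bound in part (1) cleanly and uniformly over $t\in[0,T]$, since the standard Itô/Gronwall route for $V_t^p$ breaks down for $p<0$ (the function $x\mapsto x^p$ is singular at $0$, and local-martingale/localization arguments require care to rule out the boundary). The robust way around this is to rely directly on the explicit transition density of the CIR process (equivalently, the noncentral chi-squared representation), from which the finiteness of $\mathbb{E}[(V_t^i)^p]$ for $p>-\nu$ and a uniform bound on $[\epsilon_0,T]$ for any $\epsilon_0>0$ follow by inspection of the density's behaviour at $0$ and at $\infty$; the interval $[0,\epsilon_0]$ is then handled by continuity of $t\mapsto \mathbb{E}[(V_t^i)^p]$ down to $t=0$ and the fact that $V_0>0$ is deterministic. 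This is precisely where the Feller-type threshold $p>-\nu$ enters, and it is the only genuinely delicate point; the positive moments and the pathwise supremum bounds are then routine via Itô and BDG as sketched above.
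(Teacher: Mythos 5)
Your proposal is essentially correct, but the paper's own ``proof'' is simply a citation: part (1) is referred to formula (3.1) in Dereich--Neuenkirch--Szpruch (bounded moments, including negative ones above the $-\nu$ threshold), and part (2) is referred to Proposition 3.7 in Cozma--Mariapragassam--Reisinger (uniform moments for the CIR/FTE process). Your sketch reconstructs, in a self-contained way, what those references establish: positive and negative moments via the noncentral chi-squared representation, and the pathwise supremum via It\^o plus BDG. This is a reasonable and more transparent route, and the BDG step in part (2) (bounding the bracket by $\int_0^T \mathbb{E}[(V_u^i)^{2p-1}]\,\diff u$) is exactly the kind of argument the cited proposition uses.

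One small inaccuracy worth fixing: the noncentral chi-squared law attached to $V_t^i$ has $d = 4k\theta/\xi^2 = 2\nu$ degrees of freedom, so its density near the origin behaves like $x^{d/2-1} = x^{\nu-1}$, not $x^{\nu/2-1}$ as you wrote. With the exponent $x^{\nu-1}$, the integral $\int_0^\epsilon x^{p}\,x^{\nu-1}\,\diff x$ is finite precisely when $p > -\nu$, which matches the stated threshold; with your written exponent you would instead get the (wrong) threshold $p>-\nu/2$. Also, for $1\le p<2$ the map $x\mapsto x^p$ is only $C^1$ at the origin, so the It\^o step in part (2) strictly speaking needs a localization or mollification argument (although the resulting quadratic-variation term $(V_u^i)^{p-1}$ is not singular, so this is routine). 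Neither issue affects the conclusion, but both should be stated carefully.
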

\begin{proof}
For the first bound we refer the reader to formula $(3.1)$ in \cite{DNS}, while for a proof of the second bound to Proposition 3.7 in \cite{CozMarRei}.
\end{proof}
\begin{lemma}\label{ExpIntVol}
Let $V^i$ denote the CIR volatility process as in \eqref{volparticles} and $\lambda >0$. For $T<T^*$, with $T^*$ the critical time as in \eqref{Tbound}, we have that
\begin{equation}
   \mathbb{E}\left[\exp\{\lambda \int_0^T {V_t}^i\diff t\}\right]< \infty.
\end{equation}
\end{lemma}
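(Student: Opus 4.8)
The plan is to use the affine structure of the CIR dynamics \eqref{volparticles}: the quantity of interest admits an exponential-affine representation whose coefficients solve a scalar Riccati equation, and the critical time $T^*$ in \eqref{Tbound} will turn out to be exactly the explosion time of that equation. Fix $i$ and write $V=V^i$, $V_0 = V^i_0$.

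\textbf{Step 1 (exponential-affine ansatz).} Motivated by the affine property, one looks for $A, B \in C^1([0,T])$ with $A(T)=B(T)=0$ such that $M_t := \exp\{A(t) + B(t)V_t + \lambda\int_0^t V_s\,\diff s\}$ is a (local) martingale. Applying It\^o's formula and collecting the drift, the coefficient of $\diff t$ vanishes identically provided
\begin{equation*}
A'(t) = -k\theta\,B(t), \qquad B'(t) = k B(t) - \tfrac12\xi^2 B(t)^2 - \lambda, \qquad A(T) = B(T) = 0 .
\end{equation*}
Passing to forward time $\tau = T-t$ with $b(\tau):=B(T-\tau)$, $a(\tau):=A(T-\tau)$, this becomes
\begin{equation*}
b'(\tau) = \tfrac12\xi^2 b(\tau)^2 - k\,b(\tau) + \lambda, \quad b(0)=0, \qquad a(\tau) = k\theta\!\int_0^\tau b(s)\,\diff s .
\end{equation*}

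\textbf{Step 2 (analysis of the Riccati equation).} Let $q(b):=\tfrac12\xi^2 b^2 - k b + \lambda$, a quadratic with discriminant $k^2 - 2\lambda\xi^2$. If $k^2 \ge 2\lambda\xi^2$, then $q$ has two nonnegative real roots $b_- \le b_+$ (their product $2\lambda/\xi^2$ and sum $2k/\xi^2$ are positive); since $b(0)=0<b_-$ and $q>0$ on $[0,b_-)$, the solution $b$ is increasing and bounded above by $b_-$, hence global, in agreement with $T^*=\infty$. If $k^2 < 2\lambda\xi^2$, then $q>0$ everywhere, so $b$ is strictly increasing and, by separation of variables, blows up at
\begin{equation*}
\tau^* = \int_0^{\infty} \frac{\diff b}{\tfrac12\xi^2 b^2 - k b + \lambda};
\end{equation*}
completing the square, $q(b) = \tfrac{\xi^2}{2}\big[(b-k/\xi^2)^2 + (2\lambda\xi^2-k^2)/\xi^4\big]$, and using $\int \diff x/(x^2+c^2) = c^{-1}\arctan(x/c)$ gives $\tau^* = \tfrac{2}{\sqrt{2\lambda\xi^2-k^2}}\big[\tfrac{\pi}{2} + \arctan(k/\sqrt{2\lambda\xi^2-k^2})\big] = T^*$. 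In both regimes, for $T<T^*$ the solution $b$ exists on $[0,T]$ with $0 \le b(\tau) \le b(T)<\infty$, and hence $a(T) = k\theta\int_0^T b(s)\,\diff s < \infty$.

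\textbf{Step 3 (supermartingale bound).} With $A(t):=a(T-t)$, $B(t):=b(T-t)$ now well defined and continuous on $[0,T]$, It\^o's formula together with the two ODEs of Step 1 yields $\diff M_t = M_t\,\xi\sqrt{V_t}\,B(t)\,\diff W^v_t$, so $M$ is a nonnegative local martingale and therefore a supermartingale. Since $A(T)=B(T)=0$,
\begin{equation*}
\mathbb{E}\Big[\exp\Big\{\lambda\!\int_0^T V_s\,\diff s\Big\}\Big] = \mathbb{E}[M_T] \le M_0 = \exp\{a(T) + b(T)\,V_0\} < \infty ,
\end{equation*}
which is the claim.

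\textbf{Main obstacle.} The only substantive work is the ODE analysis in Step 2: proving the dichotomy and, in the subcritical regime, matching the explosion time of the Riccati equation explicitly with $T^*$ via the $\arctan$ integral, plus the elementary verification that no blow-up occurs before $\tau^*$. The stochastic part is routine; crucially, only the upper bound $\mathbb{E}[M_T]\le M_0$ is needed, so it suffices that $M$ is a nonnegative local martingale and no uniform-integrability argument to upgrade it to a true martingale is required. Alternatively, the statement can simply be quoted from the literature on affine processes and exponential functionals of CIR (e.g.\ \cite{AndPit}, \cite{HurKuz}), but the argument above is short and self-contained.
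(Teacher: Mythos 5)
The paper proves this lemma by a single citation to Proposition 3.1 of Andersen and Piterbarg \cite{AndPit}; your proof instead works out the underlying affine/Riccati argument from scratch, and it is correct. What you gain is a fully self-contained derivation that also makes transparent where the explicit form of $T^*$ in \eqref{Tbound} comes from: it is precisely the blow-up time of the backward Riccati ODE $b'=\tfrac12\xi^2 b^2 - kb + \lambda$, $b(0)=0$, and your completing-the-square/$\arctan$ computation recovers the formula exactly in both regimes (finite when $k^2<2\lambda\xi^2$, infinite otherwise). A particularly clean point is using only the supermartingale inequality $\mathbb{E}[M_T]\le M_0$ for the nonnegative local martingale $M$, which sidesteps any need to verify true martingality (the standard Novikov/uniform-integrability headache) since only an upper bound on the Laplace transform is required. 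The paper's approach buys brevity and shifts the technical burden to a well-known reference; your approach buys transparency about the origin of $T^*$ at the cost of a page of ODE analysis. Both are sound, and the mathematical content is the same, since the cited Andersen--Piterbarg result is itself established via this Riccati analysis.
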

\begin{proof}
 This follows from Proposition 3.1 in \cite{AndPit}.  
\end{proof}

\subsection{Well-posedness of the regularised system}
For completeness, we state the existence and uniqueness result for equation \eqref{independentparticles3} here, while deferring its proof and analysis to Section~\ref{WellposednessSection} to maintain the focus on the convergence analysis.
\begin{theorem}[Well-posedness]\label{wellposedness}
Let Assumptions \ref{A3} hold and also assume that $\nu \ge 1$ and $\mathbb{E}[|X_0|^4] < \infty$. Then, there exists a unique strong solution to equation \eqref{independentparticles3} up to a critical time $T^*$, for $T^*$ as in $\eqref{Tbound}$.
\end{theorem}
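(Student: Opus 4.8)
The plan is to prove strong existence and uniqueness for the log-spot component of \eqref{independentparticles3} by a Picard-type fixed-point argument on the space of flows of measures, exploiting the decomposition $\beta(t,(x,v),\nu) = v\tilde\beta(t,x,\nu)$ and $\sigma(t,(x,v),\nu) = \sqrt{v}\,\tilde\sigma(t,x,\nu)$, where $\tilde\beta$ and $\tilde\sigma$ are bounded and Lipschitz in state and measure (Remark~1). First I would fix the already well-posed CIR process $V$ (Chapter~5 of \cite{KarShr}), which supplies the moment bounds of Lemma~\ref{momentboundvol} and, crucially for $T<T^*$, the exponential integrability $\mathbb{E}[\exp\{\lambda\int_0^T V_t\,\diff t\}]<\infty$ of Lemma~\ref{ExpIntVol}. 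Then, for a candidate marginal flow $(\mu_t)_{t\in[0,T]}$ in $\mathcal{P}_2$ with finite fourth moments, I would freeze the measure argument and solve the resulting SDE with coefficients $v\tilde\beta(t,x,\mu_t)$ and $\sqrt{v}\,\tilde\sigma(t,x,\mu_t)$: here the $x$-dependence is Lipschitz and the $v$-dependence enters only through the (independent, pathwise-known) $V$, so this is a standard Itô SDE with random, time-dependent but $x$-Lipschitz coefficients — strong well-posedness follows from a Gronwall/Picard argument once one controls $\int_0^t V_u\,\diff u$, which is finite a.s. Define the map $\Phi$ sending $(\mu_t)$ to the law of the solution $(X_t,V_t)$.

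The core estimate is to show $\Phi$ is a contraction (after iteration, or on a short interval that is then extended) in a suitable Wasserstein-type metric on path space. Given two flows $\mu,\mu'$ with solutions $X,X'$ driven by the same Brownian motions and the same $V$, I would write the difference, apply Itô to $|X_t - X'_t|^2$, and estimate. The drift difference splits as $V_t(\tilde\beta(t,X_t,\mu_t) - \tilde\beta(t,X'_t,\mu_t)) + V_t(\tilde\beta(t,X'_t,\mu_t) - \tilde\beta(t,X'_t,\mu'_t))$; the first term is bounded by $V_t L_{\tilde\beta}|X_t-X'_t|$ and the second by $V_t L_{\tilde\beta}\mathcal{W}_2(\mu_t,\mu'_t)$. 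The diffusion difference $\sqrt{V_t}(\tilde\sigma(t,X_t,\mu_t)-\tilde\sigma(t,X'_t,\mu'_t))$ is handled similarly, contributing $V_t L_{\tilde\sigma}^2(|X_t-X'_t|^2 + \mathcal{W}_2(\mu_t,\mu'_t)^2)$ in the quadratic-variation term after Burkholder--Davis--Gundy. The resulting inequality has the schematic form $\mathbb{E}[\sup_{s\le t}|X_s-X'_s|^2] \le C\,\mathbb{E}\big[\int_0^t V_u\big(|X_u-X'_u|^2 + \mathcal{W}_2(\mu_u,\mu'_u)^2\big)\diff u\big]$. Since the multiplier $V_u$ is random and unbounded, a plain Gronwall argument does not directly close it; the standard remedy is to run the estimate under the stopping time $\tau_\omega = \inf\{t: \int_0^t V_u\,\diff u \ge \omega\}$ (as flagged in Remark~2), obtain a stochastic-Gronwall or iterated-Picard bound with constant $e^{C\omega}$ on $\{t\le\tau_\omega\}$, and then remove the localisation using the exponential integrability of $\int_0^T V_u\,\diff u$ from Lemma~\ref{ExpIntVol} — this is exactly where $T<T^*$ is needed, and where $\nu\ge 1$ (via the negative-moment and uniform-moment bounds of Lemma~\ref{momentboundvol}) secures the requisite integrability of products like $V_u$ against $\sup|X-X'|^2$ by Cauchy--Schwarz/Young. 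Fourth-moment integrability of $X_0$ propagates to $\sup_{t\le T}\mathbb{E}[|X_t|^4]<\infty$ by the same scheme (using boundedness of $\tilde\beta,\tilde\sigma$ and $\mathbb{E}[\sup_t V_t^p]<\infty$), which keeps the iteration inside $\mathcal{P}_4$ and justifies all the moment manipulations.

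For uniqueness I would argue directly: two strong solutions of \eqref{independentparticles3} have marginal flows that are both fixed points of $\Phi$, so the contraction estimate above (now with $\mu=\mathbb{P}_{(X,V)}$, $\mu'=\mathbb{P}_{(X',V')}$ and $\mathcal{W}_2(\mu_u,\mu'_u)^2 \le \mathbb{E}[|X_u-X'_u|^2]$) gives $\mathbb{E}[\sup_{s\le t\wedge\tau_\omega}|X_s-X'_s|^2] \le C\omega\int_0^t \mathbb{E}[\sup_{s\le u\wedge\tau_\omega}|X_s-X'_s|^2]\diff u$, hence $X\equiv X'$ on $[0,\tau_\omega]$ by Gronwall, and then on $[0,T]$ by letting $\omega\to\infty$ using $\int_0^T V_u\,\diff u<\infty$ a.s. I expect the main obstacle to be precisely the interplay between the unbounded random multiplier $V_u$ and the Wasserstein contraction: one must be careful that the localisation argument is compatible with the McKean--Vlasov structure (the stopping time $\tau_\omega$ depends only on the exogenous $V$, not on the law, which is what makes Remark~2 work), and that the exponential moment bound is invoked with a $\lambda$ compatible with the constant $C$ produced by BDG and the Lipschitz constants — in principle one may need to first establish the contraction on a short time interval $[0,T_1]$ with $T_1$ small enough that $C T_1<1$ after taking expectations of the localised bound, then concatenate, checking that the number of steps stays finite on $[0,T]$ for any $T<T^*$. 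A secondary technical point is verifying that $\Phi$ indeed maps flows with finite fourth moments to flows with finite fourth moments uniformly along the Picard iteration, which again reduces to Lemma~\ref{momentboundvol} and Lemma~\ref{ExpIntVol} together with the boundedness of $\tilde\beta$ and $\tilde\sigma$.
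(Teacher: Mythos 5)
Your uniqueness argument is essentially the paper's (Proposition~\ref{pathwiseuniqueness}): apply It\^o to $|X^{(1)}-X^{(2)}|^2$, absorb the random unbounded multiplier $V_u$ by running the estimate up to $\tau_\omega=\inf\{t:\int_0^t V_u\,\diff u\ge\omega\}$ with a Lebesgue time change, use Gr\"onwall to conclude $X^{(1)}\equiv X^{(2)}$ on $[0,\tau_\omega]$, and remove the localisation by sending $\omega\to\infty$, for which one needs the fourth-moment bound on $\sup_t|X|$ from Lemma~\ref{aprioriestimates} (the place where $\mathbb{E}[|X_0|^4]<\infty$ and $T<T^*$ enter via Lemmas~\ref{momentboundvol} and~\ref{ExpIntVol}). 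That part of your proposal reproduces the paper's route faithfully.

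For existence your route is genuinely different. You propose a Picard fixed-point iteration on flows of measures: freeze $(\mu_t)$, solve, map to the new law, and close a Wasserstein contraction. The paper (Proposition~\ref{existencetheorem}, following Li--Mao--Song--Wu--Yin) instead builds an Euler-like interpolated sequence $X^n$ in which the \emph{measure} is held piecewise constant on a uniform time grid of mesh $T/n$, and shows $(X^n)_n$ is Cauchy in $L^2$; the limit is then checked to satisfy the SDE by uniform integrability and dominated convergence. The difference matters here because the contraction estimate you derive, after the de-localisation, produces a ``constant'' of the form $\mathbb{E}\big[e^{C\int_0^TV_u\diff u}\big]$, which Lemma~\ref{ExpIntVol} only guarantees to be \emph{finite} for $T<T^*$, not small; you flag that one may need to first obtain a contraction on a short window $[0,T_1]$ and then concatenate, but you do not verify (i) that a single window length $T_1$ works uniformly (the contraction constant depends on $\mathbb{E}[e^{\lambda g(T_1)}]$, hence on the CIR law, not only on $T_1$), nor (ii) that concatenation reaches any $T<T^*$ in finitely many steps while the admissible $\lambda$ in Lemma~\ref{ExpIntVol} remains compatible with the $C$ produced by BDG and the Lipschitz constants. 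This compatibility of $\lambda$ with $C$ is exactly the kind of bookkeeping the paper does explicitly in the convergence proof of the Euler scheme (end of the proof of Theorem~\ref{hlsvEM}), but for existence the paper deliberately sidesteps it: the Cauchy estimate between two Euler interpolants $X^n$ and $X^l$ is driven by the grid-mesh terms $C\,T/n$ and $C\,T/l$, not by a contraction factor, so one never has to argue that a constant is strictly less than one. Your plan could be made rigorous, but it requires this additional careful argument; the paper's discretisation route is more robust precisely because it avoids it.

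Two smaller remarks. First, you write that $V$ is ``independent'' of the Brownian motion driving $X$; in fact $W^x$ and $W^v$ are $\rho$-correlated — what is used is that $V$ is an exogenous, adapted process whose law is fixed independently of the measure component, so $\tau_\omega$ is a genuine stopping time that does not feed back into the McKean--Vlasov structure (the content of Remark~2), not independence. Second, your a~priori fourth-moment propagation for $X$ is in the paper as Lemma~\ref{aprioriestimates}, proved for $n\le 4$ by an interpolation $n<m\le 4$ combined with Markov's inequality and Lemma~\ref{ExpIntVol} on the de-localisation step; this is consistent with what you sketch.
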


 We are now well-equipped to prove a strong propagation of chaos result that establishes the convergence of the particle system \eqref{particlessystem} to the McKean--Vlasov SDE \eqref{independentparticles2},
 and in the following section a timestepping approximation scheme. \newline

\section{Propagation of chaos}\label{PropOfChaos}
 To prove the result below we use techniques from \cite{ReiEngSmi} and \cite{CozRei2}. The authors in \cite{ReiEngSmi} prove propagation of chaos for a MV SDE with drift of super-linear growth and also provide the rate of convergence. However, their analysis assumes a global Lipschitz diffusion in the state variable. The study in \cite{CozRei2} addresses the square--root diffusion by considering a Heston-type LSV model and proves strong convergence of the Euler scheme. However, they do not consider the calibrated system so their equation is a standard SDE, without mean-field dependence.
\begin{theorem}\label{PropChaos}
    Let $Z^i := (X^i,V^i)$ and $Z^{i,N} := (X^{i,N},V^i)$ with $X^i$, $X^{i,N}$, and $V^i$ solutions to equations \eqref{independentparticles2}, \eqref{particlessystem}, and \eqref{volparticles} respectively. Further, let $Z^{i}_0,Z^{i,N}_0 \in L^2(\mathbb{R}^2)$. Suppose that  $\nu \ge 1$ and that Assumption \ref{A3} holds. For $T<T^*$ we have that,
\begin{equation*}
\lim_{N\to \infty}\sup_{i \in \{1,..,N\}} \mathbb{E} \Bigg[\sup_{t \in [0,T]} \lvert Z^{i,N}_t - Z^i_t \rvert^2\Bigg]=0.
\end{equation*} 
\end{theorem}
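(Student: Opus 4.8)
The plan is to proceed by a direct coupling argument, comparing $X^{i,N}$ with $X^i$ along the same Brownian path $W^{x,i}$ and the same volatility sample $V^i$. Since $Z^{i,N}-Z^i=(X^{i,N}-X^i,0)$, the volatility components cancel exactly and it suffices to estimate $\mathbb{E}[\sup_{t\le T}|X^{i,N}_t-X^i_t|^2]$. First I would localise in the CIR integrated variance: introduce the stopping time $\tau^i_\omega:=\inf\{t: \int_0^t V^i_u\,du\ge\omega\}$ and work on $\{t\le\tau^i_\omega\}$, where the quadratic-variation clock $\int_0^{t\wedge\tau^i_\omega}V^i_u\,du$ is bounded by $\omega$; this is exactly the device flagged in the second Remark, which makes the square-root factors manageable. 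Writing the difference $X^{i,N}_{t}-X^i_{t}$ as an integral of the drift difference plus a stochastic integral of the diffusion difference, apply Burkholder--Davis--Gundy and then split each coefficient difference into a \emph{state} part (same empirical measure, different spatial argument) and a \emph{measure} part ($\tilde\beta$, $\tilde\sigma$ evaluated at $\mu^{Z^N}_t$ versus $\mu^Z_t$). Using $\sigma(t,(x,v),\nu)=\sqrt v\,\tilde\sigma(t,x,\nu)$, the Lipschitz continuity of $\tilde\beta,\tilde\sigma$ in state and measure (first Remark), and the elementary bound $|\sqrt a-\sqrt b|\le|a-b|/(2\sqrt\delta)$-type estimates already absorbed into $L_{\tilde\beta},L_{\tilde\sigma}$, the state part is controlled by $\int_0^{t}V^i_u|X^{i,N}_u-X^i_u|^2\,du$-type terms and a Grönwall argument closes it; the measure part is controlled by $\int_0^t V^i_u\,\mathcal{W}_2^2(\mu^{Z^N}_u,\mu^Z_u)^{}\,du$.

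The crux is then to bound the Wasserstein term $\mathcal{W}_2^2(\mu^{Z^N}_u,\mu^Z_u)$. I would insert the intermediate empirical measure $\bar\mu^Z_u:=\frac1N\sum_j\delta_{(X^j_u,V^j_u)}$ built from the \emph{independent} non-interacting particles, and split via the triangle inequality: $\mathcal{W}_2^2(\mu^{Z^N}_u,\mu^Z_u)\lesssim \mathcal{W}_2^2(\mu^{Z^N}_u,\bar\mu^Z_u)+\mathcal{W}_2^2(\bar\mu^Z_u,\mu^Z_u)$. The first term is bounded pathwise by the empirical average $\frac1N\sum_j|X^{j,N}_u-X^j_u|^2$ using the synchronous coupling that pairs particle $j$ with particle $j$; taking expectations and using exchangeability this is exactly $\sup_i\mathbb{E}[\sup_{s\le u}|X^{i,N}_s-X^i_s|^2]$, feeding back into the Grönwall loop. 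The second term, $\mathbb{E}[\mathcal{W}_2^2(\bar\mu^Z_u,\mu^Z_u)]$, is a standard law-of-large-numbers / quantitative empirical-measure estimate (à la Fournier--Guillin), which goes to $0$ as $N\to\infty$ uniformly in $u\in[0,T]$; here Lemma~\ref{momentboundvol} supplies the moment bounds on $V^i$ (and the $L^2$ assumption on $X^i_0$ together with the bounds in the first Remark give uniform $L^2$ bounds on $X^i_u$) needed for this convergence. One subtlety: $\sigma$ depends on $\nu$ only through the integrals $\mathbb{E}^\nu[K((\cdot-x)/\epsilon)]$ and $\mathbb{E}^\nu[VK((\cdot-x)/\epsilon)]$ with $K$ bounded Lipschitz, so convergence in $\mathcal{W}_2$ (indeed even $\mathcal{W}_1$ against the bounded-Lipschitz class) is precisely what controls these mean-field terms — this is why the propagation of chaos can be phrased with $\mathcal{W}_2$ rather than anything stronger.

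After the Grönwall step on $\{t\le\tau^i_\omega\}$ I would obtain, for each fixed $\omega$,
\[
\sup_i\mathbb{E}\Big[\sup_{t\le T\wedge\tau^i_\omega}|X^{i,N}_t-X^i_t|^2\Big]\le C(\omega)\Big(\varepsilon_N+\int_0^T\sup_i\mathbb{E}\big[\sup_{s\le u\wedge\tau^i_\omega}|X^{i,N}_s-X^i_s|^2\big]\,du\Big),
\]
with $\varepsilon_N\to0$, whence the localised error vanishes. To remove the localisation I would split $\mathbb{E}[\sup_{t\le T}|X^{i,N}_t-X^i_t|^2]$ over $\{\tau^i_\omega>T\}$ and its complement; on the complement use $\mathbb{P}(\tau^i_\omega\le T)=\mathbb{P}(\int_0^T V^i_u\,du\ge\omega)\le e^{-\lambda\omega}\,\mathbb{E}[\exp\{\lambda\int_0^T V^i_u\,du\}]$, which is finite for $T<T^*$ by Lemma~\ref{ExpIntVol} with $\lambda$ chosen so that $T<T^*(\lambda)$; combined with a uniform-in-$N$ $L^{2+}$ bound on $\sup_{t\le T}|X^{i,N}_t-X^i_t|$ (again from BDG, boundedness of $\tilde\sigma,\tilde\beta$, and the moments of $V^i$) and Hölder's inequality, the tail contribution is made arbitrarily small by taking $\omega$ large, uniformly in $N$. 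Letting $N\to\infty$ and then $\omega\to\infty$ gives the claim. The main obstacle I anticipate is the interplay between the stopping-time localisation and the mean-field term: the constant $C(\omega)$ from Grönwall grows with $\omega$ (through the $\int V^i_u\,du\le\omega$ bound), so one must balance $\omega\to\infty$ against $N\to\infty$ carefully — this is exactly where the restriction $T<T^*$ and the Feller condition $\nu\ge1$ (ensuring enough negative and positive moments of $V^i$ via Lemma~\ref{momentboundvol}) enter, and getting the order of limits right is the delicate point.
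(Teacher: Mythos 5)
Your proposal follows the paper's structure almost exactly up to the localised Gr\"onwall estimate: synchronous coupling along the same $(W^{x,i},V^i)$, decomposition of the coefficient difference into state and measure parts using the Lipschitz bounds on $\tilde\beta,\tilde\sigma$, insertion of the intermediate empirical measure built from the independent particles, and the stochastic clock $g^i(t)=\int_0^t V^i_u\,du$ with the stopping times $\tau^i_\omega$ and Lebesgue change of time. Where you genuinely diverge is in two places. First, for the empirical Wasserstein term $\mathbb{E}[\mathcal{W}_2^2(\bar\mu^Z_u,\mu^Z_u)]\to0$ you invoke Fournier--Guillin type quantitative estimates, whereas the paper settles for the qualitative Varadarajan/Glivenko--Cantelli fact (Lemma~1.9 in \cite{Carmona}), upgraded to $L^1$ via the uniform integrability provided by Lemma~\ref{aprioriestimates} and Vitali's theorem; the qualitative route has weaker moment requirements, and since the theorem only asserts convergence (no rate), it is actually the more economical choice here. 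Second, and more substantively, your delocalisation differs from the paper's: you split over $\{\tau^i_\omega>T\}$ and its complement, controlling the tail via a uniform-in-$N$ $L^{2q}$ bound on the error together with $\mathbb{P}(\tau^i_\omega\le T)\le e^{-\lambda\omega}\mathbb{E}[e^{\lambda\int_0^T V^i_u\,du}]$ (which is where you invoke Lemma~\ref{ExpIntVol} and $T<T^*$), then let $N\to\infty$ and $\omega\to\infty$. The paper instead works with $|E^N|^{2p}$ for $1<p<2$ in the localised step, and then removes the localisation by taking $\tau=T$ in a separate $p=1$ computation and applying H\"older with exponents $p,\,p/(p-1)$ to decouple $V^i_u$ from $\sup_t|E^N_{t\wedge u}|^2$ via the already-proved decay of the $2p$-th moment. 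Both delocalisations are legitimate; yours avoids the $2p$-moment bookkeeping and is in this respect closer to what the paper actually does in the proof of Theorem~\ref{hlsvEM}. Two minor quibbles: your displayed Gr\"onwall inequality should, after the stochastic time-change, have $\int_0^\omega\cdots\,ds$ rather than $\int_0^T\cdots\,du$, and the final inequality should be \emph{closed} (no error term on the right); and the order-of-limits is not actually delicate — the tail bound $C_1\mathbb{P}(\tau^i_\omega\le T)^{1-1/q}$ is independent of $N$, so one simply sends $N\to\infty$ for fixed $\omega$ and then $\omega\to\infty$.
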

\begin{proof}
In our proof, we let $C$ denote a constant independent of $N$, and only possibly dependent on $T$ and the Lipschitz constants $L_{\tilde{\beta}}$ and $L_{\tilde{\sigma}}$ which depend on $\epsilon$ and $\delta$. We allow $C$ to vary from line to line.\bigbreak

Note that $\lvert Z^{i,N}_t - Z^i_t \rvert^2 =  \lvert X^{i,N}_t - X^i_t \rvert^2$ since the second component of $Z^i$ and $Z^{i,N}$ is identical for all $i$. We fix $i \in \{1,...,N\}$ and let $E^N_t:=X^{i,N}_t - X^i_t,\,\,E^N_0=0 $. Applying It\^{o}'s formula to $|E^N_{t \wedge \tau}|^{2p}$, for $\tau$ a stopping time and $1<p<2$, we get
\begin{equation*}
\begin{split}
|E^N_{t \wedge \tau}|^{2p} &= 2p \int_0^{t \wedge \tau} |E^N_{u}|^{2p-1}\text{sgn}(E^N_{u})(\beta(u,(X_u^{i,N},V^i_u), \mu_u^{Z^N})-\beta(u,(X_u^{i},V^i_u), \mu_u^{Z})) \diff u \\
&+2p \int_0^{t \wedge \tau} |E^N_{u}|^{2p-1}\text{sgn}(E^N_{u})(\sigma(u,(X_u^{i,N},V^i_u), \mu_u^{Z^N})-\sigma(u,(X_u^{i},V^i_u), \mu_u^{Z})) \diff  W^{x,i}_u\\
&+\frac{1}{2}2p(2p-1)\int_0^{t \wedge \tau}|E^N_{u}|^{2p-2}\lvert\sigma(u,(X_u^{i,N},V^i_u), \mu_u^{Z^N})-\sigma(u,(X_u^{i},V^i_u), \mu_u^{Z})\lvert^2 \diff u, \\ 
\end{split}
\end{equation*}
where $\text{sgn}(E^N_u)=1$ if $E^N_u>0$ and $\text{sgn}(E^N_u)=-1$ otherwise.\newline
Taking the supremum over $t \in [0,T]$ on both sides, noting that the supremum of the integrals occurs at $t=T$, and then expectations yields
\begin{equation}\label{propsplit1}
    \begin{split}
        &\mathbb{E}\left[\sup_{t\in[0,T]} |E^N_{t \wedge \tau}|^{2p}\right] \le 2p  \mathbb{E}\left[\int_0^{T \wedge \tau} |E^N_{u}|^{2p-1}\lvert\beta(u,(X_u^{i,N},V^i_u), \mu_u^{Z^N})-\beta(u,(X_u^{i},V^i_u), \mu_u^{Z})\lvert \diff u\right] \\
        &+2p \mathbb{E}\left[\sup_{t\in[0,T]}\int_0^{t \wedge \tau} |E^N_{u}|^{2p-1}\text{sgn}(E^N_{u})(\sigma(u,(X_u^{i,N},V^i_u), \mu_u^{Z^N})-\sigma(u,(X_u^{i},V^i_u), \mu_u^{Z})) \diff  W^{x,i}_u\right]\\
        &+\frac{1}{2}2p(2p-1)\mathbb{E}\left[\int_0^{T \wedge \tau}|E^N_{u}|^{2p-2}\lvert\sigma(u,(X_u^{i,N},V^i_u), \mu_u^{Z^N})-\sigma(u,(X_u^{i},V^i_u), \mu_u^{Z})\lvert^2 \diff u\right].\\ 
    \end{split}
\end{equation}
We treat each term on the RHS above separately. Looking at the first term of \eqref{propsplit1}, 
\begin{equation*}
\begin{split}
  |E^N_{u}|^{2p-1}&V^i_u\lvert\tilde{\beta}(u,X_u^{i,N}, \mu_u^{Z^N})-\tilde{\beta}(u,X_u^{i},\mu_u^{Z})\lvert \le L_{\tilde{b}} |E^N_{u}|^{2p-1}V^i_u \Big( \lvert E^N_{u} \lvert + \mathcal{W}_2(\mu_u^{Z^N},\mu_u^{Z} )\Big)\\
  &\le L_{\tilde{b}} |E^N_{u}|^{2p-1} V^i_u \Big( |E^N_{u}| + \mathcal{W}_2(\mu_u^{Z^N},\mu_u^{N}) +\mathcal{W}_2(\mu_u^{N},\mu_u^{Z} )\Big),
\end{split} 
\end{equation*}
where $\mu_u^{N}$ denotes the empirical measure of $Z^{i}_u$ and we applied the property of the Wasserstein metric, see Chapter 6 in \cite{Vil}, that for $\mu_1,\mu_2,\mu_3 \in \mathcal{P}_2(\mathbb{R}^{2n}), \, \mathcal{W}_2(\mu_1,\mu_2) \le \mathcal{W}_2(\mu_1,\mu_3)+\mathcal{W}_2(\mu_3,\mu_2).$\newline
Also, since $\mu_u^{Z^N}$ and $\mu_u^{N}$ are empirical measures we have the standard bound
\begin{equation*}
\mathcal{W}_2(\mu_u^{Z^N},\mu_u^{N})\le \left( \frac{1}{N}\sum_{j=1}^N |Z^{j,N}_u-Z^{j}_u|^2 \right)^{1/2} =\left( \frac{1}{N}\sum_{j=1}^N |X^{j,N}_u-X^{j}_u|^2 \right)^{1/2}.   
\end{equation*}
Also, by Young's inequality one gets
\begin{equation*}
    \begin{split}
    |E^N_{u}|^{2p-1}\mathcal{W}_2(\mu_u^{N},\mu_u^{Z}) &\le \frac{2p-1}{2p} |E^N_{u}|^{2p} +\frac{1}{2p} \mathcal{W}_2(\mu_u^{N},\mu_u^{Z})^{2p},\\
    |E^N_{u}|^{2p-1}\left( \frac{1}{N}\sum_{j=1}^N |X^{j,N}_u-X^{j}_u|^2 \right)^{1/2} & \le \frac{2p-1}{2p}|E^N_{u}|^{2p} +\frac{1}{2p}\left( \frac{1}{N}\sum_{j=1}^N |X^{j,N}_u-X^{j}_u|^2 \right)^{p}\\
    & \le \frac{2p-1}{2p}|E^N_{u}|^{2p} + \frac{1}{2p}N^{-1}\sum_{j=1}^N |X^{j,N}_u-X^{j}_u|^{2p},\\
    \end{split}
\end{equation*}
where the last step follows by applying H\"{o}lder's inequality to the sum.
We therefore have
\begin{equation*}
\begin{split}
    &\mathbb{E}\left[ \int_0^{T \wedge \tau}|E^N_{u}|^{2p-1}\lvert\beta(u,(X_u^{i,N},V^i_u), \mu_u^{Z^N})-\beta(u,(X_u^{i},V^i_u), \mu_u^{Z})\lvert \diff u \right]\\
    &\le C \mathbb{E}\left[ \int_0^{T \wedge \tau}\Big( |E^N_{u}|^{2p} +N^{-1}\sum_{j=1}^N |X^{j,N}_u-X^{j}_u|^{2p}+ \mathcal{W}_2(\mu_u^{N},\mu_u^{Z})^{2p} \Big)V^i_u\diff u\right].\\   
\end{split}
\end{equation*}
Notice that by exchangeability of the components of $X$ and $X^N$, 
\begin{equation*}
  \mathbb{E}\left[N^{-1}\sum_{j=1}^N |X^{j,N}_u-X^{j}_u|^{2p} \right] = \mathbb{E}\left[|X^{i,N}_u-X^{i}_u|^{2p} \right]= \mathbb{E}\left[|E^N_{u}|^{2p} \right].
\end{equation*}
We now define the following stochastic process which is strictly increasing since $V^i$ has almost surely strictly positive paths for $\nu \ge 1$:
\begin{equation}\label{g}
    g^i(t) = \int_0^tV^i_u \diff u.
\end{equation}
Applying Fubini's theorem yields
\begin{equation}
    \begin{split}
       &\mathbb{E}\left[ \int_0^{T \wedge \tau}|E^N_{u}|^{2p-1}\lvert\beta(u,(X_u^{i,N},V^i_u), \mu_u^{Z^N})-\beta(u,(X_u^{i},V^i_u), \mu_u^{Z})\lvert \diff u \right]\\
       & \le C \int_0^{T \wedge \tau} \Big(\mathbb{E}\left[\sup_{t \in [0,u]} |E^N_{t}|^{2p}\right]+\mathbb{E}\left[\mathcal{W}_2(\mu_u^{N},\mu_u^{Z})^{2p}\right]\Big) \diff g^i(u)\\
       & \le T_1:= C  \int_0^{T \wedge \tau}\left(\mathbb{E}\left[\sup_{t \in [0,T]} |E^N_{t\wedge u}|^{2p}\right] +\mathbb{E}\left[\mathcal{W}_2(\mu_u^{N},\mu_u^{Z})^{2p}\right]\right) \diff g^i(u).\\
    \end{split}
\end{equation}
Now looking at the second term of equation \eqref{propsplit1}, to remove the stochastic integral one may apply the Burkholder--Davis--Gundy (BDG) inequality to get for $C$ a positive constant,
\begin{equation}
\begin{split}\label{propsplit2}
&2p\mathbb{E}\left[\sup_{t\in[0,T]}\int_0^{t \wedge \tau} |E^N_{u}|^{2p-1}\text{sgn}(E^N_{u})(\sigma(u,(X_u^{i,N},V^i_u), \mu_u^{Z^N})-\sigma(u,(X_u^{i},V^i_u), \mu_u^{Z})) \diff  W^{x,i}_u\right] \\
 & \le C\mathbb{E}\left[\left(\int_0^{T \wedge \tau} |E^N_{u}|^{2(2p-1)}|\sigma(u,(X_u^{i,N},V^i_u), \mu_u^{Z^N})-\sigma(u,(X_u^{i},V^i_u), \mu_u^{Z})|^2 \diff u\right)^{\frac{1}{2}}\right]\\ 
  & \le \mathbb{E}\left[\left(\sup_{t\in [0,T]}|E^N_{t \wedge \tau}|^{2p}C^2\int_0^{T \wedge \tau} |E^N_{u}|^{2p-2}|\sigma(u,(X_u^{i,N},V^i_u), \mu_u^{Z^N})-\sigma(u,(X_u^{i},V^i_u), \mu_u^{Z})|^2 \diff u\right)^{\frac{1}{2}}\right]\\ 
  &\le \frac{1}{2}\mathbb{E}\left[\sup_{t\in [0,T]}|E^N_{t \wedge \tau}|^{2p}\right]+\frac{1}{2}C^2\mathbb{E}\left[\int_0^{T \wedge \tau} |E^N_{u}|^{2p-2}|\sigma(u,(X_u^{i,N},V^i_u), \mu_u^{Z^N})-\sigma(u,(X_u^{i},V^i_u), \mu_u^{Z})|^2 \diff u\right],\\ 
\end{split}
\end{equation}
which follows from the arithmetic and geometric mean (AM--GM) inequality. It is left to find a bound for the last term in equation \eqref{propsplit1}.
\begin{equation*}
\begin{split}
& \mathbb{E}\left[\int_0^{T \wedge \tau}|E^N_{u}|^{2p-2}\lvert\sigma(u,(X_u^{i,N},V^i_u), \mu_u^{Z^N})-\sigma(u,(X_u^{i},V^i_u), \mu_u^{Z})\lvert^2 \diff u\right]\\
 &= \mathbb{E}\left[\int_0^{T \wedge \tau}|E^N_{u}|^{2p-2}V^i_u\lvert\tilde{\sigma}(u,X_u^{i,N}, \mu_u^{Z^N})-\tilde{\sigma}(u,X_u^{i}, \mu_u^{Z})\lvert^2 \diff u\right]\\
 &\le L_{\tilde{\sigma}}^2 \mathbb{E}\left[\int_0^{T \wedge \tau}|E^N_{u}|^{2p-2}V^i_u \big(|E^N_{u}| + \mathcal{W}_2(\mu_u^{Z^N},\mu_u^{Z}) \big)^2 \diff u\right]\\
&\le T_2:= L_{\tilde{\sigma}}^2 \mathbb{E}\left[\int_0^{T \wedge \tau}|E^N_{u}|^{2p-2}V^i_u \big(2|E^N_{u}|^2 + 4\mathcal{W}^2_2(\mu_u^{Z^N},\mu_u^{N}) + 4\mathcal{W}^2_2(\mu_u^{N},\mu_u^{Z}) \big) \diff u\right],\\
\end{split}
\end{equation*}
where $\mu_u^{N}$ is the empirical measure of $Z^{i}_u$ as above. Recall the definition of process $g^i$ in \eqref{g}. We then have for $C$ a positive constant that depends on $p$ and is of order $L_{\tilde{\sigma}}^2$,
\begin{equation*}
    \begin{split}
        &T_2 := L_{\tilde{\sigma}}^2 \mathbb{E}\left[\int_0^{T \wedge \tau}|E^N_{u}|^{2p-2}\big(2|E^N_{u}|^2 + 4\mathcal{W}^2_2(\mu_u^{Z^N},\mu_u^{N}) + 4\mathcal{W}^2_2(\mu_u^{N},\mu_u^{Z}) \big) \diff g^i(u)\right]\\
        & \le C \mathbb{E}\left[\int_0^{T \wedge \tau}\big(|E^N_{u}|^{2p} +|E^N_{u}|^{2p-2}\mathcal{W}^2_2(\mu_u^{Z^N},\mu_u^{N}) + |E^N_{u}|^{2p-2}\mathcal{W}^2_2(\mu_u^{N},\mu_u^{Z}) \big) \diff g^i(u)\right]\\
        & \le C \mathbb{E}\left[\int_0^{T \wedge \tau}\Big(|E^N_{u}|^{2p}+\frac{2(p-1)}{p}|E^N_u|^{2p}+\frac{1}{p}\mathcal{W}^{2p}_{2}(\mu_u^{Z^N}, \mu_u^{N})+\frac{1}{p}\mathcal{W}^{2p}_{2}(\mu_u^{N}, \mu_u^{Z}) \Big) \diff g^i(u)\right]\\
        & \le C \mathbb{E}\left[\int_0^{T \wedge \tau}\Big(|E^N_{u}|^{2p}+N^{-1}\sum_{j=1}^N |X^{j,N}_u-X^{j}_u|^{2p}+\mathcal{W}^{2p}_{2}(\mu_u^{N}, \mu_u^{Z}) \Big) \diff g^i(u)\right]\\
        & \le C \mathbb{E}\left[\int_0^{T \wedge \tau}\left(\sup_{t\in[0,u]}|E^N_{t}|^{2p}+N^{-1}\sum_{j=1}^N \sup_{t\in[0,u]}|X^{j,N}_t-X^{j}_t|^{2p}+\mathcal{W}^{2p}_{2}(\mu_u^{N}, \mu_u^{Z}) \right) \diff g^i(u)\right]\\
        & \le T_3 := C \mathbb{E}\left[\int_0^{T \wedge \tau}\left(\sup_{t\in[0,T]}|E^N_{t \wedge u}|^{2p}+N^{-1}\sum_{j=1}^N \sup_{t\in[0,T]}|X^{j,N}_{t\wedge u}-X^{j}_{t\wedge u}|^{2p}+\mathcal{W}^{2p}_{2}(\mu_{u}^{N}, \mu_{ u}^{Z}) \right) \diff g^i(u)\right].\\
    \end{split}
\end{equation*}
Note that by exchangeability, 
\begin{equation*} 
  \mathbb{E}\left[N^{-1}\sum_{j=1}^N \sup_{t\in[0,T]}|X^{j,N}_{t\wedge u}-X^{j}_{t\wedge u}|^{2p}\right] = \mathbb{E}\left[\sup_{t\in[0,T]}|X^{j,N}_{t\wedge u}-X^{j}_{t\wedge u}|^{2p}\right]= \mathbb{E}\left[\sup_{t\in[0,T]}|E^N_{t\wedge u}|^{2p} \right].
\end{equation*}
Adopting a technique from \cite{BerBosDio} and \cite{CozRei2}, we let $\tau^i_{\omega}$ be a stopping time such that $\tau^i_{\omega}:= \inf\{t\ge 0 \, \lvert \, g^i(t) \ge \omega\}$, for any $\omega \ge 0$ and $\tau^i_0 = 0$. It holds that $g^i(\tau^i_{\omega}) = \omega$ and $g^i(T \wedge \tau^i_{\omega}) = g^i(T)\wedge g^i(\tau^i_{\omega}) = g^i(T)\wedge \omega$.
\bigbreak
We now fix $\omega >0$, set $\tau = \tau^i_{\omega}$, and consider a stochastic time change $s= g^i(u)$ so that $u = \tau^i_s$. Notice that $s$ also depends on $i$. From now on, we drop the superscript $i$ from $\tau^i_{\cdot}$ and $g^i(\cdot)$ for ease of notation. Applying the Lebesgue's change of time formula and Fubini's theorem gives
\begin{equation}\label{propsplit3}
\begin{split}
T_1 +T_3 &\le C \mathbb{E}\left[\int_0^{g(T) \wedge \omega}\left(\sup_{t\in[0,T]}|E^N_{t \wedge {\tau}_s}|^{2p}+\mathcal{W}^{2p}_{2}(\mu_{{\tau}_s}^{N}, \mu_{ {\tau}_s}^{Z}) \right) \diff s\right]\\
& \le C \int_0^{\omega}\left(\mathbb{E}\left[\sup_{t\in[0,T]}|E^N_{t \wedge {\tau}_s}|^{2p}\right]+\mathbb{E}\left[\mathcal{W}^{2p}_{2}(\mu_{{\tau}_s}^{N}, \mu_{ {\tau}_s}^{Z}) \right]\right) \diff s.\\
\end{split}
\end{equation}
Putting equations \eqref{propsplit1}, \eqref{propsplit2}, and \eqref{propsplit3} together we have
\begin{equation}\label{propsplit4}
    \begin{split}
        &\mathbb{E}\left[\sup_{t\in[0,T]} |E^N_{t \wedge \tau_{\omega}}|^{2p}\right] \le C \int_0^{\omega}\mathbb{E}\left[\mathcal{W}^{2p}_{2}(\mu_{{\tau}_s}^{N}, \mu_{ {\tau}_s}^{Z}) \right]\diff s +C\int_0^{\omega}\mathbb{E}\left[\sup_{t\in[0,T]}|E^N_{t \wedge {\tau}_s}|^{2p}\right]\diff s.\\ 
        \end{split}
\end{equation}
It follows by Gr\"{o}nwall's inequality that for $\omega >0$,
\begin{equation}\label{gronwalls1}
  \mathbb{E}\left[\sup_{t\in[0,T]} |E^N_{t \wedge \tau_{\omega}}|^{2p}\right] \le   C  e^{C\omega}\int_0^{\omega}\mathbb{E}\left[\mathcal{W}^{2p}_{2}(\mu_{{\tau}_s}^{N}, \mu_{ {\tau}_s}^{Z}) \right]\diff s.
\end{equation}
By Lemma 1.9 in \cite{Carmona} we have that $\mathcal{W}^{2}_{2}(\mu_{{\tau}_s}^{N}, \mu_{ {\tau}_s}^{Z})$ converges to $0$ as $N\to \infty$ almost surely. Building on this result and by the estimate in Lemma \ref{aprioriestimates}, the sequence $\left(\mathcal{W}^{2p}_{2}(\mu_{{\tau}_s}^{N}, \mu_{ {\tau}_s}^{Z})\right)_{N \ge 1}$ of random variables is uniformly integrable and $\mathcal{W}^{2p}_{2}(\mu_{{\tau}_s}^{N}, \mu_{ {\tau}_s}^{Z})$ converges to $0$ almost surely for $\tau_s <T^*$. By the Vitali convergence theorem, the convergence also holds in the sense of $L^1$. We therefore conclude that for all $\tau_s <T^*$, 

\begin{equation}\label{Wassbound3}
\underset{N \to \infty}{\lim}\mathbb{E}\left[\mathcal{W}^{2p}_{2}(\mu_{{\tau}_s}^{N}, \mu_{ {\tau}_s}^{Z}) \right]=0.
\end{equation}
Let $u = \tau_{\omega}$ for ease of notation. Therefore, by taking limits in \eqref{gronwalls1} and applying Fubini's theorem, we get that for all $u < T^*$:
\begin{equation}\label{gronwalls2}
 \underset{N\to \infty}{\lim} \mathbb{E}\left[\sup_{t\in[0,T]} |E^N_{t \wedge u}|^{2p}\right] \le C \sup_{k \in [0,u]} \underset{N\to \infty}{\lim}\mathbb{E}\left[\mathcal{W}^{2p}_{2}(\mu_{k}^{N}, \mu_{k}^{Z}) \right] =0.
\end{equation}

\noindent Following similar steps as above for $|E^N_{t \wedge \tau}|^{2}$ and setting $\tau = T$ gives
\begin{equation}\label{propsplit4}
    \begin{split}
        &\mathbb{E}\left[\sup_{t\in[0,T]} |E^N_{t}|^{2}\right] \le C \mathbb{E}\left[\int_0^{T}\Big(\sup_{t\in[0,T]}|E^N_{t \wedge u}|^{2}+N^{-1}\sum_{j=1}^N \sup_{t\in[0,T]}|X^{j,N}_{t\wedge u}-X^{j}_{t\wedge u}|^{2}+\mathcal{W}^{2}_{2}(\mu_{u}^{N}, \mu_{u}^{Z}) \Big) \diff g(u)\right]\\
        &\le C \mathbb{E}\left[\int_0^{T}\Big(\sup_{t\in[0,T]}|E^N_{t \wedge u}|^{2}+N^{-1}\sum_{j=1}^N \sup_{t\in[0,T]}|X^{j,N}_{t\wedge u}-X^{j}_{t\wedge u}|^{2}+\mathcal{W}^{2}_{2}(\mu_{u}^{N}, \mu_{u}^{Z}) \Big) V^i_u\diff u\right]\\
        & \le C \int_0^{T} \mathbb{E}\left[ \left(\sup_{t\in[0,T]}|E^N_{t \wedge u}|^{2}+N^{-1}\sum_{j=1}^N \sup_{t\in[0,T]}|X^{j,N}_{t\wedge u}-X^{j}_{t\wedge u}|^{2} +\mathcal{W}^{2}_{2}(\mu_{u}^{N}, \mu_{u}^{Z}) \right) V^i_u \right]\diff u\\
        & \le C \int_0^{T} \mathbb{E}\left[ V^i_u\left(\sup_{t\in[0,T]}|E^N_{t \wedge u}|^{2}\right)\right] + \mathbb{E}\left[V^i_u N^{-1}\sum_{j=1}^N \sup_{t\in[0,T]}|X^{j,N}_{t\wedge u}-X^{j}_{t\wedge u}|^{2}\right] +\mathbb{E}\left[ V^i_u\mathcal{W}^{2}_{2}(\mu_{u}^{N}, \mu_{u}^{Z})\right]\diff u\\
        &\le C \int_0^{T} \mathbb{E}\left[ \sup_{t\in[0,T]}|E^N_{t \wedge u}|^{2p}\right]^{\frac{1}{p}}\mathbb{E}\left[{V^i_u}^{\frac{p}{p-1}} \right]^{\frac{p-1}{p}}+\mathbb{E}\left[N^{-1}\sum_{j=1}^N \sup_{t\in[0,T]}|X^{j,N}_{t\wedge u}-X^{j}_{t\wedge u}|^{2p} \right]^{\frac{1}{p}}  \mathbb{E}\left[{V^i_u}^{\frac{p}{p-1}} \right]^{\frac{p-1}{p}} \diff u\\ 
        &+C \int_0^{T} \mathbb{E}\left[\mathcal{W}^{2p}_{2}(\mu_u^{N}, \mu_u^{Z})\right]^{\frac{1}{p}}\mathbb{E}\left[{V^i_u}^{\frac{p}{p-1}} \right]^{\frac{p-1}{p}}\diff u\\
        &\le C \int_0^{T}\left(\mathbb{E}\left[ \sup_{t\in[0,T]}|E^N_{t \wedge u}|^{2p}\right]^{\frac{1}{p}} + \mathbb{E}\left[\mathcal{W}^{2p}_{2}(\mu_{u}^{N}, \mu_{u}^{Z})\right]^{\frac{1}{p}}\right) \mathbb{E}\left[{V^i_u}^{\frac{p}{p-1}} \right]^{\frac{p-1}{p}} \diff u\\
        &= C \sup_{t\in [0,T]}\mathbb{E}\left[{V^i_t}^{\frac{p}{p-1}} \right]^{\frac{p-1}{p}}\int_0^{T}\left(\mathbb{E}\left[ \sup_{t\in[0,T]}|E^N_{t \wedge u}|^{2p}\right]^{\frac{1}{p}} + \mathbb{E}\left[\mathcal{W}^{2p}_{2}(\mu_{u}^{N}, \mu_{u}^{Z})\right]^{\frac{1}{p}}\right)  \diff u.\\
\end{split}
\end{equation}
By Lemma \ref{momentboundvol}, we know that $\underset{t\in [0,T]}{\sup}\mathbb{E}\left[{V^i_t}^{\frac{p}{p-1}} \right]^{\frac{p-1}{p}}$ is finite. Taking the limit $N\to \infty $ on both sides and applying Fubini's theorem gives
\begin{equation}\label{propchaoslast}
\begin{split}
&\lim_{N \to \infty} \mathbb{E}\left[\sup_{t\in[0,T]} |E^N_{t}|^{2}\right]\\
&\le C \sup_{t\in[0,T]}\mathbb{E}\left[{V^i_t}^{\frac{p}{p-1}} \right]^{\frac{p-1}{p}} \int_0^{T}\lim_{N \to \infty}\mathbb{E}\left[ \sup_{t\in[0,T]}|E^N_{t \wedge u}|^{2p}\right]^{\frac{1}{p}}+ \lim_{N \to \infty}\mathbb{E}\left[\mathcal{W}^{2p}_{2}(\mu_{u}^{N}, \mu_{u}^{Z})\right]^{\frac{1}{p}}\diff u \\ 
& \le C T \sup_{t\in[0,T]}\mathbb{E}\left[{V^i_t}^{\frac{p}{p-1}} \right]^{\frac{p-1}{p}} \sup_{k \in[0,T]}\left(\lim_{N \to \infty}\mathbb{E}\left[ \sup_{t\in[0,T]}|E^N_{t \wedge k}|^{2p}\right]^{\frac{1}{p}} + \lim_{N \to \infty}\mathbb{E}\left[\mathcal{W}^{2p}_{2}(\mu_{k}^{N}, \mu_{k}^{Z})\right]^{\frac{1}{p}} \right).\\
\end{split}
\end{equation}
By \eqref{Wassbound3} and \eqref{gronwalls2}, $\underset{k \in[0,T]}{\sup} \underset{N \to \infty}{\lim}\mathbb{E}\left[\mathcal{W}^{2p}_{2}(\mu_{k}^{N}, \mu_{k}^{Z})\right]^{\frac{1}{p}} $ and $\underset{k \in[0,T]}{\sup} \underset{N \to \infty}{\lim}\mathbb{E}\left[ \underset{t \in[0,T]}{\sup}|E^N_{t \wedge k}|^{2p}\right]^{\frac{1}{p}} $are both equal to $0$, and therefore so is the RHS of equation \eqref{propchaoslast}. This concludes the proof.
\bigbreak

\end{proof}
\end{subsection}
\section{Time-discretisation scheme}\label{timediscretisation}
To simulate the particle system \eqref{particlessystem} on $[0,T]$, we consider a time--grid with $M$ uniform time-steps of width $\Delta t = T/M$, and use the classical Euler--Maruyama (EM) scheme for the state process $X$, and a full--truncation Euler (FTE) scheme for the squared volatility process $V$.\bigbreak

\noindent Let $\{t_0=0,t_1,t_2, ... ,t_M=T\}$ denote the time discretisation of $[0,T]$ so that $t_m = m\Delta t$, and for $m \in \{0,1,...,M-1\}$,
\begin{equation} \label{EM}
\begin{split}
&\hat{X}^{i,N}_{t_{m+1}} = \hat{X}^{i,N}_{t_{m}}+ \beta(t_m,(\hat{X}^{i,N}_{t_m},\hat{V}^{i,+}_{t_m}),\mu_{t_m}^{\hat{Z}^N})\Delta t+\sigma(t_m,(\hat{X}^{i,N}_{t_m},\hat{V}^{i,+}_{t_m}),\mu_{t_{m}}^{\hat{Z}^N})\Delta W^{x,i}_{t_m}, \, \hat{X}^{i,N}_{0} = X^i_0,\\
&\hat{V}^{i}_{t_{m+1}} = \hat{V}^{i}_{t_{m}} + k(\theta -\hat{V}^{i,\!+}_{t_{m}})\Delta t+ \xi\sqrt{\hat{V}^{i,\!+}_{t_{m}}} \Delta W^{v}_{t_m},\, \hat{V}^{i}_{0} = V^i_0,\\
\end{split}
\end{equation}
where $\Delta W^{\cdot,i}_{t_m}=(W^{\cdot,i}_{t_{m+1}}-W^{\cdot,i}_{t_m})\sim N(0,\Delta t)$, increments $\Delta W^{x,i}_{t_m}, \Delta W^{v,i}_{t_m}$ have correlation $\rho$, and $\hat{V}^{i,\!+}_{t_m} = \text{max}(\hat{V}^{i}_{t_m},0)$ for all particles $i \in \{1,...,N\}$.\bigbreak
\noindent It is well--established (see, e.g., \cite{KloePlat}) that the standard explicit EM scheme achieves strong convergence with an order of $1/2$ in the step--size for classical SDEs with Lipschitz continuous drift and diffusion coefficients. In the Introduction, we reviewed existing results on the strong convergence of the EM scheme for MV SDEs with H\"{o}lder--continuous coefficients. However, these results do not apply to our setting, where the diffusion coefficient includes a mean-field term and is unbounded. In this work, we establish strong convergence under Assumption \ref{A3}.\bigbreak 

\noindent We first introduce the continuous-time version of the discretised process defined in \eqref{EM}. Let $m_t := \underset{m\in\{0,...,M-1\}}{\max}\{t_m \le t\} $, $t' := \underset{t_m\in\{t_0,...,t_{M-1}\}}{\max}\{t_m \le t\}$.\bigbreak 
\noindent For $t\in[0,T]$, we define the continuous-time processes by
\begin{equation} \label{continuoustimehlsv}
\begin{split}
&\hat{X}^{i,N}_t = \hat{X}^{i,N}_{t'} + \beta(t',(\hat{X}^{i,N}_{t'},\hat{V}^{i,+}_{t'} ),\mu_{t'}^{\hat{Z}^{N}})(t-t')+ \sigma(t',(\hat{X}^{i,N}_{t'},\hat{V}^{i,+}_{t'} ),\mu_{t'}^{\hat{Z}^{N}})(W_t^{x,i}-W_{t'}^{x,i}),\\
&\tilde{V}^{i}_{t} = \tilde{V}^{i}_{t'} + k(\theta -\tilde{V}^{i,\!+}_{t'})(t-t')+ \xi\sqrt{\tilde{V}^{i,\!+}_{t'}} (W_t^{v}-W_{t'}^{v}),\\
\end{split}
\end{equation}
with $ (\hat{X}^{i,N}_0,\tilde{V}^i_0) = (X^i_0,V^i_0)$ for all particles $i \in \{1,...,N\}$.\bigbreak

\noindent Also, for convenience of notation, we define the piecewise constant process
\begin{equation}\label{piecewiseconstvol}
    \hat{V}^i_t := \tilde{V}^{i,+}_{t'}.
\end{equation}

\noindent To show the convergence of the approximated log-spot process, we use the following established results for the full truncation Euler discretised squared volatility process.

\begin{subsection}{The full truncation Euler (FTE) scheme}
\begin{lemma}
    The FTE scheme has uniformly bounded moments so that for $\tilde{V}^{i}_{t}$ as in \eqref{continuoustimehlsv}, 
    \begin{equation*}
        \mathbb{E}\left[ \sup_{t\in [0,T]}|\tilde{V}^{i}_{t}|^p \right] < \infty,\, \forall p\ge 1,\, i \in \{1,...,N\}, \, \, M\ge 1.
    \end{equation*}
\end{lemma}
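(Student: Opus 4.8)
The plan is to establish the claim first for the even integer moments $p=2q$, $q\in\mathbb{N}$, on the time grid, with a bound uniform in the number of steps $M$, and then lift it to the continuous-time supremum; the general case $p\ge 1$ then follows from $\mathbb{E}[\sup_{t\le T}|\tilde V^i_t|^p]\le 1+\mathbb{E}[\sup_{t\le T}|\tilde V^i_t|^{2q}]$ for any even $2q\ge p$. Since all particles are driven by i.i.d.\ copies of the same Brownian motion, the bound is automatically uniform in $i\in\{1,\dots,N\}$; we fix $i$ and suppress it. This is the discrete counterpart of Lemma~\ref{momentboundvol}(2), and the argument mirrors the continuous-time one, carried out on the grid.

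\textbf{Step 1 (uniform discrete moment bound).} Write $U_m:=\tilde V^i_{t_m}$, so that $U_{m+1}=A_m+\xi\sqrt{U_m^+}\,\Delta W^v_m$ with $A_m:=U_m+k(\theta-U_m^+)\Delta t$ and $\Delta W^v_m$ independent of $\mathcal F_{t_m}$, $\mathbb E[(\Delta W^v_m)^{2l}]=(2l-1)!!\,(\Delta t)^l$. Expanding $U_{m+1}^{2q}$ by the binomial theorem and taking $\mathcal F_{t_m}$-conditional expectation annihilates the odd powers of $\Delta W^v_m$ and leaves $A_m^{2q}$ together with terms $\binom{2q}{2l}A_m^{2q-2l}\xi^{2l}(U_m^+)^l(2l-1)!!\,(\Delta t)^l$ for $l\ge 1$. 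Using $|A_m|\le C(1+|U_m|)$, $U_m^+\le|U_m|$ and $\Delta t\le 1$ (the finitely many coarser grids with $\Delta t>1$ are absorbed into the constant), each $l\ge 1$ term is bounded by $C_q\Delta t(1+|U_m|^{2q})$. For $A_m^{2q}$ one expands once more in $\Delta t$: the $O(\Delta t^2)$ remainder is again $\le C_q\Delta t(1+|U_m|^{2q})$, and the $O(\Delta t)$ part is $2qk\theta\,U_m^{2q-1}\Delta t - 2qk\,U_m^{2q-1}U_m^+\Delta t$; here the key algebraic point is that $U_m^{2q-1}U_m^+=(U_m^+)^{2q}\ge 0$, so the last term is $\le 0$ and can be dropped, while $2qk\theta\,U_m^{2q-1}\le C_q(1+|U_m|^{2q})$. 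Altogether $\mathbb E[U_{m+1}^{2q}]\le(1+C_q\Delta t)\,\mathbb E[U_m^{2q}]+C_q\Delta t$, and the discrete Grönwall inequality yields $\sup_{0\le m\le M}\mathbb E[U_m^{2q}]\le(V_0^{2q}+1)e^{C_qT}=:K_{2q}$, independently of $M$.

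\textbf{Step 2 (continuous-time supremum).} Summing the one-step updates, the continuous interpolation \eqref{continuoustimehlsv} is $\tilde V^i_t=V_0+k\theta t-k\int_0^t\hat V^i_s\,\diff s+\xi\int_0^t\sqrt{\hat V^i_s}\,\diff W^v_s$ with $\hat V^i_s=\tilde V^{i,+}_{s'}$ as in \eqref{piecewiseconstvol}; since $\hat V^i_s\ge 0$ and $\hat V^i_s\le|U_{m_s}|$, Step~1 gives $\sup_{s\le T}\mathbb E[(\hat V^i_s)^{2q}]\le K_{2q}$. Raising to the power $2q$, applying $|a+b+c+d|^{2q}\le C_q(|a|^{2q}+\dots+|d|^{2q})$, Hölder's inequality to the $\diff s$-integral and the Burkholder--Davis--Gundy inequality to the stochastic integral gives
\begin{equation*}
\mathbb E\Big[\sup_{t\in[0,T]}|\tilde V^i_t|^{2q}\Big]\le C_{q,T}\Big(1+\int_0^T\mathbb E[(\hat V^i_s)^{2q}]\,\diff s+\int_0^T\mathbb E[(\hat V^i_s)^{q}]\,\diff s\Big)\le C_{q,T}\big(1+2K_{2q}\big)<\infty,
\end{equation*}
where the last bound uses $\mathbb E[(\hat V^i_s)^q]\le 1+K_{2q}$ by Jensen. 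The right-hand side depends only on $q,T,V_0,k,\theta,\xi$ and not on $M$, which is the uniform bound.

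I expect the main obstacle to be Step~1: keeping the binomial bookkeeping clean enough to verify that every $O(\Delta t)$ contribution either has a nonnegative coefficient that can be discarded or is dominated by $C_q\Delta t(1+|U_m|^{2q})$. This hinges entirely on the truncation ($\hat V^i\ge 0$) and on the identity $U_m^{2q-1}U_m^+=(U_m^+)^{2q}$; once the recursion $\mathbb E[U_{m+1}^{2q}]\le(1+C_q\Delta t)\mathbb E[U_m^{2q}]+C_q\Delta t$ is in place, Step~2 is routine. The restriction $\Delta t\le 1$ is cosmetic and will be flagged explicitly.
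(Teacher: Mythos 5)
Your proof is correct and follows essentially the same route as the paper, which simply cites Proposition~3.7 of \cite{CozMarRei} for the grid-point moment bound and then invokes the Burkholder--Davis--Gundy inequality; your Step~1 reconstructs the content of that cited result from first principles (via the binomial expansion of $U_{m+1}^{2q}$, the sign observation $U_m^{2q-1}U_m^+=(U_m^+)^{2q}\ge0$, and a discrete Gr\"onwall argument), and your Step~2 is the BDG application to the continuous interpolation, exactly as the paper indicates. The reduction of general $p\ge 1$ to even integers $2q\ge p$ and the uniformity in $i$ and $M$ are handled cleanly.
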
 
\begin{proof}
    The proof follows by Proposition 3.7 in \cite{CozMarRei}, together with applying the Burkholder--Davis--Gundy inequality.
\end{proof}
The following proposition, previously established, demonstrates the strong convergence of the discretised volatility process in $L^p$.
\begin{proposition}{(Theorem 1.1 in \cite{CozRei})}\label{convdiscretisedvol}
    Assume that $\nu >3$ and let $2 \le p < \nu-1$. Let ${V}^i_{t}$ be the solution to 
 \eqref{volparticles} and $\hat{V}^{i,N}_{t}$ the solution to 
\eqref{piecewiseconstvol}. Then there exist $M_0 \in \mathbb{N}$ and a positive constant $C>0$ such that $\forall \, i \in \{1,...,N\}, \, M > M_0$, 
    \begin{equation*}
       \sup_{t\in [0,T]} \mathbb{E}[|{V}^{i}_{t}-\hat{V}^{i}_{t}|^p]^{\frac{1}{p}}\le CM^{-\frac{1}{2}},
    \end{equation*}
    i.e., the discretised process converges strongly with order $1/2$ in $L^p$.
\end{proposition}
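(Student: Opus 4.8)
The strategy is to establish the bound first for the continuous-time interpolant $\tilde V^i$ of \eqref{continuoustimehlsv} and then transfer it to the piecewise-constant skeleton $\hat V^i = \tilde V^{i,+}_{t'}$ of \eqref{piecewiseconstvol}. For the transfer, since $V^i_t \ge 0$ we have $|V^i_t - \hat V^i_t| \le |V^i_t - \tilde V^i_{t'}| \le |V^i_t - \tilde V^i_t| + |\tilde V^i_t - \tilde V^i_{t'}|$, and the one-step increment $\tilde V^i_t - \tilde V^i_{t'}$ has $L^p$-norm of order $M^{-1/2}$: its drift part is $O(\Delta t)$, and, conditioning on $\mathcal F_{t'}$ and using the Burkholder--Davis--Gundy inequality, its martingale part is bounded in $L^p$ by $C\,\mathbb E\bigl[(\tilde V^{i,+}_{t'})^{p/2}\bigr]^{1/p}(\Delta t)^{1/2}$, which is $O(M^{-1/2})$ by the uniform-in-$M$ moment bound for the FTE scheme recalled above. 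It therefore suffices to show $\sup_{t\in[0,T]}\mathbb E\bigl[|V^i_t - \tilde V^i_t|^p\bigr]^{1/p}\le CM^{-1/2}$.

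To this end I would introduce the error $e_t := V^i_t - \tilde V^i_t$, which solves $\diff e_t = -k\bigl(V^i_t - \hat V^i_t\bigr)\,\diff t + \xi\bigl(\sqrt{V^i_t}-\sqrt{\hat V^i_t}\bigr)\,\diff W^v_t$ with $e_0 = 0$, and apply It\^o's formula to $|e_t|^p$ (this is clean since $p\ge 2$). Taking expectations kills the martingale term and leaves a drift term and a quadratic-variation term. The drift term is linear and hence bounded by $C\int_0^t\mathbb E[|e_u|^p]\,\diff u + C\int_0^t\mathbb E\bigl[|\tilde V^i_u - \hat V^i_u|^p\bigr]\,\diff u$; writing $\tilde V^i_u - \hat V^i_u = (\tilde V^i_u - \tilde V^i_{u'}) - (\tilde V^i_{u'})^-$, the second integrand is $O(M^{-p/2})$ by the increment estimate above together with a bound on the negative part $(\tilde V^i_{u'})^-$ of the scheme, which is of the same order or better by the moment and inverse-moment estimates.

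The crux is the quadratic-variation term $\mathbb E\bigl[\int_0^t |e_u|^{p-2}\,\bigl|\sqrt{V^i_u}-\sqrt{\hat V^i_u}\bigr|^2\,\diff u\bigr]$, where the $1/2$-H\"older square root must be controlled uniformly in $M$ without losing the rate. Since $\hat V^i\ge 0$, I would use $\sqrt a-\sqrt b = (a-b)/(\sqrt a+\sqrt b)$ so that $\bigl|\sqrt{V^i_u}-\sqrt{\hat V^i_u}\bigr|^2 \le |V^i_u-\hat V^i_u|^2/V^i_u$, and then localise over the event that $V^i_u$ is close to $0$: on $\{V^i_u\ge\varepsilon\}$ the contribution is $\le C\varepsilon^{-1}\bigl(\int_0^t\mathbb E[|e_u|^p]\,\diff u + M^{-p/2}\bigr)$, while on $\{V^i_u<\varepsilon\}$ it is handled, via H\"older's inequality, by the finite positive moments of $V^i$ and $\tilde V^i$ (hence of $e$), the finite inverse moments $\sup_u\mathbb E[(V^i_u)^{-q}]$ for $q<\nu$ from Lemma \ref{momentboundvol}, and the small-ball bound $\mathbb P(V^i_u<\varepsilon)\le\varepsilon^{q}\mathbb E[(V^i_u)^{-q}]$. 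Feeding this into Gr\"onwall's inequality, with $\varepsilon$ chosen as a suitable negative power of $M$, gives $\sup_{t\in[0,T]}\mathbb E[|e_t|^p]\le CM^{-p/2}$, and the proposition follows on taking $p$-th roots and the supremum over $t$.

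The main obstacle, which I expect to be the heart of the argument, is precisely this last step: the factor $\varepsilon^{-1}$ coming from the non-Lipschitz square root near $0$ would naively enter the \emph{exponent} of the Gr\"onwall estimate, so simply letting $\varepsilon=\varepsilon(M)\to0$ does not recover the order-$1/2$ rate. Circumventing this requires a more careful localisation — for instance, stopping \emph{both} the exact and the discretised process when they leave a region $[\varepsilon,1/\varepsilon]$ and estimating $\mathbb E\bigl[\sup_s|e_s|^p\,\mathbf{1}_{\{\text{excursion near }0\}}\bigr]$ by H\"older against a higher moment of $e$ and a small exit probability, or by working in the Lamperti variable $\sqrt{V^i}$ and exploiting the common driving noise — and then balancing $\varepsilon$ against $M$. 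It is exactly this trade-off that consumes the hypotheses $\nu>3$ and $2\le p<\nu-1$: one needs enough finite inverse moments of the CIR process and enough room between the error exponent $p$ and the Feller ratio for the near-zero contribution to be negligible at the full rate. The remaining ingredients — the reduction to the interpolant, the moment and inverse-moment bounds (the preceding FTE moment lemma together with Lemma \ref{momentboundvol}), and the linear-drift estimates — are routine given what is already recorded.
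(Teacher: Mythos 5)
The paper does not prove this proposition: as flagged in its title, it is a direct citation of Theorem 1.1 in \cite{CozRei}, and the body of the paper contains no argument for it. You are therefore attempting a reconstruction of a cited external result, which goes beyond what the authors themselves do, so there is no in-paper proof to compare against.

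Taken on its own terms, your sketch correctly identifies the architecture---reduce to the continuous interpolant $\tilde V^i$, apply It\^o's formula to $|e_t|^p$ with $e_t = V^i_t - \tilde V^i_t$, and control the quadratic-variation term $|\sqrt{V^i_u}-\sqrt{\hat V^i_u}|^2 \le |V^i_u-\hat V^i_u|^2/V^i_u$ via inverse moments of the CIR process---and it correctly locates the hypotheses $\nu>3$, $2\le p<\nu-1$ in the trade-off between the available inverse moments and the error exponent. But the gap you flag is genuine, and as written the argument does not close: the localization over $\{V^i_u\ge\varepsilon\}$ feeds $\varepsilon^{-1}$ into the Gr\"onwall exponent, so you cannot subsequently let $\varepsilon(M)\to 0$ at the full order-$1/2$ rate, while H\"older against finite inverse moments on $\{V^i_u<\varepsilon\}$ produces a higher moment of the error that reintroduces a bootstrap you have not resolved. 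The device that actually closes arguments of this type---and the one this paper itself deploys for the analogous difficulty in Theorem~\ref{hlsvEM} and Lemma~\ref{aprioriestimates}, citing \cite{CozRei2} and \cite{BerBosDio}---is the stochastic time-change with stopping times: replace the deterministic threshold $\varepsilon$ by a stopping level $\omega$ on an increasing additive functional of the volatility path, apply Gr\"onwall to the stopped error to get a bound proportional to $M^{-p/2}$ with an $\omega$-dependent but $M$-independent exponential, and then remove the stopping by integrating over $\omega$ against a tail probability controlled by an exponential-moment estimate of the kind recorded in Lemma~\ref{ExpIntVol}. You gesture at exactly this in your closing paragraph (``stopping both processes'', ``balancing $\varepsilon$ against $M$''), but it is named rather than carried out. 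In short: right diagnosis of the obstruction, right provenance of the hypotheses, but the crucial closing mechanism is left as an acknowledged open step rather than supplied.
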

\end{subsection}
\begin{subsection}{Strong convergence of Euler--Maruyama (EM) scheme}

We are now well-equipped to demonstrate the strong convergence of the discretised log-spot process in $L^2$, achieving a rate of $1/2$ in time, up to a logarithmic factor. Our proof builds on ideas from Theorem 2.2 and Lemma 3.4 in \cite{BerBosDio}, as well as Proposition 4.3 in \cite{CozRei2}. While the aforementioned works also address non-Lipschitz diffusion coefficients, our setting poses an additional challenge due to the law dependence of the coefficients.

\begin{theorem}\label{hlsvEM}
    Let $X^{i,N}$ be the solution to the particle system (\ref{particlessystem}) and $\hat{X}^{i,N}$ its approximation (\ref{continuoustimehlsv}). Furthermore, suppose that $X^{i}_0 \in L^2(\mathbb{R})$ and $\nu >{\nu}^* = 2 + \sqrt{3}$. Under Assumption \ref{A3}, we have that for all $T< T^*$ and $M >M_0$,
    \begin{equation*}
        \underset{i\in \{1,..., N\}}{\sup}\mathbb{E}\left[\sup_{t\in[0,T]}|X^{i,N}_t-\hat{X}^{i,N}_t|^2 \right]^{1/2} \le C \sqrt{\log(2M)}M^{-\frac{1}{2}},
    \end{equation*}
for a positive constant $C$.
\end{theorem}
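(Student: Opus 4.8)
The plan is to run the It\^o/Burkholder--Davis--Gundy/Gr\"onwall argument from the proof of Theorem \ref{PropChaos}, now comparing the particle system \eqref{particlessystem} with its Euler--Maruyama approximation \eqref{continuoustimehlsv}, and to tame the $1/2$-H\"older square root via the localisation-by-time-change device of \cite{BerBosDio} and \cite{CozRei2}. Set $E^{i,N}_t := X^{i,N}_t - \hat X^{i,N}_t$, $E^{i,N}_0 = 0$. For a stopping time $\tau$, apply It\^o to $|E^{i,N}_{t\wedge\tau}|^2$ (and to $|E^{i,N}_{t\wedge\tau}|^{2p}$, $1<p<2$, for an auxiliary a priori step), take $\sup_{t\le T}$, expectations, BDG and AM--GM, exactly as in \eqref{propsplit1}--\eqref{propsplit2}. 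Using $\beta = v\tilde\beta$, $\sigma = \sqrt{v}\,\tilde\sigma$ and that $\tilde\beta,\tilde\sigma$ are bounded, Lipschitz in the state and measure variables and $1/2$-H\"older in time (the Remark after \eqref{independentparticles2}), decompose each coefficient increment into a Lipschitz part, e.g. $\sqrt{V^i_u}\,|\tilde\sigma(u,X^{i,N}_u,\mu^{Z^N}_u) - \tilde\sigma(u',\hat X^{i,N}_{u'},\mu^{\hat Z^N}_{u'})| \le L_{\tilde\sigma}\sqrt{V^i_u}\,(\Delta t^{1/2} + |E^{i,N}_u| + |\hat X^{i,N}_u - \hat X^{i,N}_{u'}| + \mathcal{W}_2(\mu^{Z^N}_u,\mu^{\hat Z^N}_{u'}))$, and a square-root part $|\sqrt{V^i_u} - \sqrt{\hat V^i_{u'}}|\,\tilde\sigma_{\max}$ (and analogously for $\beta$). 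The mean-field term is bounded by $\frac1N\sum_j |Z^{j,N}_u - \hat Z^{j,N}_{u'}|^2$, whose $X$-part reduces in expectation, by exchangeability of the pairs $(X^{j,N},\hat X^{j,N})$, to $\mathbb{E}[\sup_{r\le u}|E^{i,N}_r|^2]$ (to be absorbed by Gr\"onwall) and whose $V$-part is a volatility discretisation error.

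Three error ingredients then have to be shown to be $O(\Delta t)$ after integration in $\diff u$. First, the one-step Euler displacement $|\hat X^{i,N}_u - \hat X^{i,N}_{u'}| \le \hat V^i_{u'}\tilde\beta_{\max}\Delta t + \sqrt{\hat V^i_{u'}}\,\tilde\sigma_{\max}|W^{x,i}_u - W^{x,i}_{u'}|$ is $O(\Delta t^{1/2})$ in every $L^q$, by the moment bound for the FTE scheme and the Gaussian increments. Second, the volatility increment $|V^i_u - \hat V^i_{u'}|$ is split into the FTE strong error $|V^i_{u'} - \hat V^i_{u'}|$, of order $\Delta t^{1/2}$ in $L^q$ for $q < \nu-1$ by Proposition \ref{convdiscretisedvol}, and the CIR one-step modulus $|V^i_u - V^i_{u'}|$, also of order $\Delta t^{1/2}$ in $L^q$ by Lemma \ref{momentboundvol}. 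Third, and crucially, the square-root part: since $V^i > 0$ a.s. for $\nu \ge 1$, one has $|\sqrt{V^i_u} - \sqrt{\hat V^i_{u'}}| \le |V^i_u - \hat V^i_{u'}|/\sqrt{V^i_u}$, so by H\"older with conjugate exponents $r,r'$, $\mathbb{E}[|\sqrt{V^i_u} - \sqrt{\hat V^i_{u'}}|^2] \le \mathbb{E}[|V^i_u - \hat V^i_{u'}|^{2r}]^{1/r}\sup_t\mathbb{E}[(V^i_t)^{-r'}]^{1/r'}$; the first factor is $O(\Delta t)$ provided $2r < \nu-1$ (by the second ingredient) and the second is finite provided $r' < \nu$ (negative moments of the CIR process, Lemma \ref{momentboundvol}). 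Since $1/r + 1/r' = 1$, such an $r$ exists iff $\nu/(\nu-1) < (\nu-1)/2$, i.e. iff $\nu > 2+\sqrt{3} = \nu^*$ — exactly the standing hypothesis, and the reason the time rate stays $1/2$.

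Collecting the estimates yields, schematically, $\mathbb{E}[\sup_{t\le T}|E^{i,N}_{t\wedge\tau}|^2] \le C\,\mathbb{E}[\int_0^{T\wedge\tau} V^i_u\,\sup_{r\le u}|E^{i,N}_{r\wedge\tau}|^2\,\diff u] + C\Delta t$, where $C\Delta t$ collects the three ingredients (each integrated against $\diff u$ and bounded using $\sup_t\mathbb{E}[(V^i_t)^q] < \infty$, the FTE moment bound, and $\mathbb{E}[\int_0^T V^i_u\,\diff u] < \infty$, all from Lemma \ref{momentboundvol}). Introducing $g^i(t) = \int_0^t V^i_u\,\diff u$ and $\tau = \tau^i_\omega := \inf\{t : g^i(t) \ge \omega\}$, the substitution $s = g^i(u)$ turns the $V^i_u$-weighted integral into $\int_0^\omega(\cdot)\,\diff s$, so Gr\"onwall gives $\mathbb{E}[\sup_{t\le T}|E^{i,N}_{t\wedge\tau^i_\omega}|^2] \le C\,\Delta t\,e^{C\omega}$. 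The localisation is removed through $\mathbb{E}[\sup_{t\le T}|E^{i,N}_t|^2] \le \mathbb{E}[\sup_{t\le T}|E^{i,N}_{t\wedge\tau^i_\omega}|^2] + \mathbb{E}[\sup_{t\le T}|E^{i,N}_t|^{2q}]^{1/q}\,\mathbb{P}(\tau^i_\omega < T)^{1-1/q}$: the moment factor is bounded uniformly in $M$ by the a priori estimates (Lemma \ref{aprioriestimates}, noting $E^{i,N}_0 = 0$), and $\mathbb{P}(\tau^i_\omega < T) = \mathbb{P}(g^i(T) > \omega) \le e^{-\lambda\omega}\mathbb{E}[e^{\lambda\int_0^T V^i_u\,\diff u}]$ is finite and exponentially small in $\omega$ for a suitable $\lambda > 0$ by Lemma \ref{ExpIntVol} — this is where $T < T^*$ enters. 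Balancing the localisation error against $\Delta t\,e^{C\omega}$ forces an $M$-dependent choice of $\omega$, of order $\log M$, and careful tracking of the constants produces $\mathbb{E}[\sup_{t\le T}|E^{i,N}_t|^2] \le C\,\log(2M)/M$, hence the claim, uniformly in $i$ by exchangeability. The two principal obstacles are the square-root estimate above — where the exponent constraints pin the admissible Feller ratios to $\nu > \nu^*$ — and carrying the mean-field term through the time-change and localisation without degrading the rate; the logarithmic factor is the unavoidable cost of the localisation needed for the square-root coefficient.
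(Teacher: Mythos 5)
Your argument tracks the paper closely in its opening and middle stages — the It\^o/BDG/AM--GM setup, the coefficient decomposition via $\tilde\beta,\tilde\sigma$, the use of [FisNap] and Proposition~\ref{convdiscretisedvol}, the key square-root estimate $|\sqrt{V}-\sqrt{\hat V}|\le |V-\hat V|/\sqrt V$ with the H\"older conjugate-exponent count pinning $\nu>\nu^*$, and the stochastic time-change $s=g^i(u)$. These are all essentially the paper's steps.

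The gap is the de-localisation at the end. You propose to pass from the localised bound $\mathbb{E}[\sup_{t\le T}|E_{t\wedge\tau_\omega}|^2]\le A\,e^{C\omega}$ (with $A\sim\log(2M)/M$) to the unlocalised quantity via the crude split
\[
\mathbb{E}\bigl[\sup_{t\le T}|E_t|^2\bigr]\le \mathbb{E}\bigl[\sup_{t\le T}|E_{t\wedge\tau_\omega}|^2\bigr] + \mathbb{E}\bigl[\sup_{t\le T}|E_t|^{2q}\bigr]^{1/q}\,\mathbb{P}(\tau_\omega<T)^{1-1/q},
\]
and then optimise a single $\omega$. Since $\mathbb{P}(\tau_\omega<T)\le e^{-\lambda\omega}\mathbb{E}[e^{\lambda g(T)}]$ and the moment factor is $O(1)$, the two sides are $A\,e^{C\omega}$ and $E\,e^{-\mu\omega}$ with $\mu=\lambda(1-1/q)$. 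Balancing gives $\mathbb{E}[\sup|E_t|^2]\lesssim A^{\mu/(C+\mu)}$, i.e.\ a rate $\bigl(\log(2M)/M\bigr)^{\mu/(C+\mu)}$, strictly \emph{worse} than the claimed $\log(2M)/M$ whenever $\mu<\infty$. And $\mu$ cannot be taken arbitrarily large: $q$ is limited by the available a priori moments, and, more importantly, the constraint $T<T^*$ in \eqref{Tbound} caps $\lambda$ from above for a fixed horizon $T$, since $T^*$ is decreasing in $\lambda$. So your localisation removal loses a power of $M$ and does not recover the theorem's rate. (Incidentally, your claim that the logarithm ``is produced by the balancing'' is a symptom of this: in the correct argument the $\log(2M)$ appears already at the localised $L^{2p}$ level via the modulus-of-continuity estimate, not from the de-localisation.)

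The paper de-localises differently, and this is the part you are missing. It first proves the $L^{2p}$ bound $\mathbb{E}[\sup_{t\le T}|E_{t\wedge\tau_\omega}|^{2p}]\le C\,e^{C\omega}\bigl(\log(2M)/M\bigr)^p$ for some fixed $p>1$. Then, at the $L^2$ level, it sets $\tau=T$ (no localisation at all in the final quantity) and keeps the Gr\"onwall integral in the form $\mathbb{E}[\int_0^T \sup_{t\le T}|E_{t\wedge u}|^2\,\diff g(u)]$. Applying the time-change and Fubini rewrites this as $\int_0^\infty \mathbb{E}[\sup_{t\le T}|E_{t\wedge\tau_s}|^2\,\mathbbm{1}_{\{s\le g(T)\}}]\,\diff s$, and \emph{inside this $s$-integral} one applies H\"older with exponents $(p,\,p/(p-1))$ to split into the $L^{2p}$ bound (which carries the $e^{Cs/p}\,\log(2M)/M$ factor) and $\mathbb{P}(s\le g(T))^{1-1/p}$ (which by Markov decays like $e^{-\lambda s(1-1/p)}$). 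The $s$-integral then converges as long as $\lambda(p-1)>C$, using Lemma~\ref{ExpIntVol} for $T<T^*$, and the whole thing collapses to a constant multiple of $\log(2M)/M$. No $M$-dependent choice of $\omega$ is ever made. To fix your argument you need to (i) keep the auxiliary $L^{2p}$ estimate for a $p>1$ through to the end rather than using it only ``for an a priori step'', and (ii) replace your one-shot localisation split by the time-change--Fubini--H\"older manipulation of the Gr\"onwall integral at the $L^2$ level. A minor further point: in your schematic Gr\"onwall inequality the constant term should be $C\,\Delta t\,\log M$, not $C\,\Delta t$, and the measure of integration should be $\diff g^i(u)$ with $g^i(t)=\int_0^t(V^i_u+1)\,\diff u$ (the $+1$ absorbs the Young terms that carry no $V$-factor).
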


\begin{proof}
For fixed $i$, let $E^i_t := X^{i,N}_t-\hat{X}^{i,N}_t$, $E^i_0 = 0$ and $\Delta {X}^{i,N}_t := {X}^{i,N}_t-{X}^{i,N}_{t'}$. To simplify the notation, we drop the superscript $i$ from $E^i_t$ from now on.  We apply It\^{o}'s formula to $|E_{t \wedge \tau}|^{2p}$, where $\tau$ is a stopping time and  $1<p< \frac{\nu^2 - \nu}{3\nu -1}$, to get that
  \begin{equation*}
  \begin{split}
     |E_{t \wedge \tau}|^{2p} &= {2p}\int_0^{t \wedge \tau} |E_u|^{2p-1}\text{sgn}(E_u)\big(\beta(u,(X^{i,N}_u,V^i_u),\mu_u^{Z^N}) - \beta(u',(\hat{X}^{i,N}_{u'},\hat{V}^{i}_{u} ),\mu_{u'}^{\hat{Z}^{N}})\big)\diff u\\
     &+2p\int_0^{t \wedge \tau} |E_u|^{2p-1}\text{sgn}(E_u)\left(\sqrt{V^i_u}\tilde{\sigma}(u,X^{i,N}_u,\mu_u^{Z^N}) - \sqrt{\hat{V}^{i}_u}\tilde{\sigma}(u',\hat{X}^{i,N}_{u'},\mu_{u'}^{\hat{Z}^{N}})\right)\diff W^{x,i}_u\\  
     &+\frac{1}{2}2p(2p-1)\int_0^{t \wedge \tau} |E_u|^{2p-2}\left(\sqrt{V^i_u}\tilde{\sigma}(u,X^{i,N}_u,\mu_u^{Z^N}) - \sqrt{\hat{V}^{i}_u}\tilde{\sigma}(u',\hat{X}^{i,N}_{u'},\mu_{u'}^{\hat{Z}^{N}})\right)^2\diff u,\\  
  \end{split}
\end{equation*}
where $\text{sgn}(E)=1$ if $E>0$, and $\text{sgn}(E)=-1$ otherwise.\newline
Taking the supremum over time (noting that the supremum value of the integrals over time is at $t=T$) and expectations on both sides, we get 
  \begin{equation}\label{error1}
  \begin{split}
     \mathbb{E}&\left[\sup_{t\in[0,T]}|E_{t \wedge \tau}|^{2p}\right] = 2p\mathbb{E}\Bigg[\int_0^{T \wedge \tau} |E_u|^{2p-1}|\beta(u,(X^{i,N}_u,V^i_u),\mu_u^{Z^N}) - \beta(u',(\hat{X}^{i,N}_{u'},\hat{V}^{i}_{u} ),\mu_{u'}^{\hat{Z}^{N}})|\diff u\Bigg]\\
     &+2p\mathbb{E}\left[\sup_{t\in[0,T]}\int_0^{t \wedge \tau} |E_u|^{2p-1}\text{sgn}(E_u)\left(\sqrt{V^i_u}\tilde{\sigma}(u,X^{i,N}_u,\mu_u^{Z^N}) - \sqrt{\hat{V}^{i}_u}\tilde{\sigma}(u',\hat{X}^{i,N}_{u'},\mu_{u'}^{\hat{Z}^{N}})\right)\diff W^{x,i}_u\right]\\  
     &+\frac{1}{2}2p(2p-1)\mathbb{E}\left[\int_0^{T \wedge \tau} |E_u|^{2p-2} |\sqrt{V^i_u}\tilde{\sigma}(u,X^{i,N}_u,\mu_u^{Z^N}) - \sqrt{\hat{V}^{i}_u}\tilde{\sigma}(u',\hat{X}^{i,N}_{u'},\mu_{u'}^{\hat{Z}^{N}})|^2\diff u\right].\\  
  \end{split}
\end{equation}
We now bound each term above separately. Looking at the second term, to remove the stochastic integral we apply the BDG inequality to get that for $C_p>0$,
\begin{equation}\label{subst1}
\begin{split}
    &2p\mathbb{E}\left[\sup_{t\in[0,T]}\int_0^{t \wedge \tau} |E_u|^{2p-1}\text{sgn}(E_u)\left(\sqrt{V^i_u}\tilde{\sigma}(u,X^{i,N}_u,\mu_u^{Z^N}) - \sqrt{\hat{V}^{i}_u}\tilde{\sigma}(u',\hat{X}^{i,N}_{u'},\mu_{u'}^{\hat{Z}^{N}})\right)\diff W^{x,i}_u\right]\\
    &\le C_p\mathbb{E}\left[\left(\int_0^{T \wedge \tau}|E_u|^{2(2p-1)}|\sqrt{V^i_u}\tilde{\sigma}(u,X^{i,N}_u,\mu_u^{Z^N}) - \sqrt{\hat{V}^{i}_u}\tilde{\sigma}(u',\hat{X}^{i,N}_{u'},\mu_{u'}^{\hat{Z}^{N}})|^2\diff u \right)^{1/2}\right]\\
    &\le \mathbb{E}\left[\left(\sup_{t\in[0,T]}|E_{t\wedge \tau}|^{2p} C_p^2\int_0^{T \wedge \tau}|E_u|^{2p-2}|\sqrt{V^i_u}\tilde{\sigma}(u,X^{i,N}_u,\mu_u^{Z^N}) - \sqrt{\hat{V}^{i}_u}\tilde{\sigma}(u',\hat{X}^{i,N}_{u'},\mu_{u'}^{\hat{Z}^{N}})|^2\diff u \right)^{1/2}\right]\\
 &\le \frac{1}{2}\mathbb{E}\left[\sup_{t\in[0,T]}|E_{t \wedge \tau}|^{2p}\right] + \frac{1}{2}C_p^2\mathbb{E}\left[\int_0^{T \wedge \tau}|E_u|^{2p-2}|\sqrt{V^i_u}\tilde{\sigma}(u,X^{i,N}_u,\mu_u^{Z^N}) - \sqrt{\hat{V}^{i}_u}\tilde{\sigma}(u',\hat{X}^{i,N}_{u'},\mu_{u'}^{\hat{Z}^{N}})|^2\diff u \right],\\
\end{split}
\end{equation}
where the last step follows from the AM--GM inequality. We now treat the first term in \eqref{error1}. 
\begin{equation*}
  \begin{split}
     \Lambda_1 &:= \mathbb{E}\Bigg[\int_0^{T\wedge\tau} |E_u|^{2p-1}|\beta(u,(X^{i,N}_u,V^i_u),\mu_u^{Z^N}) - \beta(u',(\hat{X}^{i,N}_{u'},\hat{V}^{i}_{u} ),\mu_{u'}^{\hat{Z}^{N}})|\diff u\Bigg]\\
     &\le \mathbb{E}\Bigg[\int_0^{T\wedge\tau} |E_u|^{2p-1}V^i_u|\tilde{\beta}(u,X^{i,N}_u,\mu_u^{Z^N}) - \tilde{\beta}(u',\hat{X}^{i,N}_{u'},\mu_{u'}^{\hat{Z}^{N}})| + \tilde{\beta}_{\max}|E_u|^{2p-1}|V^i_u-\hat{V}^{i}_{u}|\diff u\Bigg]\\
     &\le \mathbb{E}\Bigg[\int_0^{T\wedge\tau}L_{\tilde{\beta}} |E_u|^{2p-1}V^i_u\left(|u-u'|^{1/2}+|X^{i,N}_u-\hat{X}^{i,N}_{u'}|+\mathcal{W}_2(\mu_u^{Z^N},\mu_{u'}^{\hat{Z}^{N}})\right)\\
     & + \quad \quad \quad \quad \quad \quad \quad \quad \quad \quad \quad \quad \quad \quad \quad \quad \quad \quad \quad \quad \quad \quad \quad \quad \quad \quad +\tilde{\beta}_{\max}|E_u|^{2p-1}|V^i_u-\hat{V}^{i}_{u}|\diff u\Bigg]\\
     &\le \mathbb{E}\Bigg[\int_0^{T\wedge\tau}L_{\tilde{\beta}} |E_u|^{2p-1}V^i_u\left(|u-u'|^{1/2}+|\Delta X^{i,N}_u|+|E_{u'}|+\mathcal{W}_2(\mu_u^{Z^N},\mu_{u'}^{\hat{Z}^{N}})\right)+\\
     & \quad \quad \quad \quad \quad \quad \quad \quad \quad \quad \quad \quad \quad \quad \quad \quad \quad \quad \quad \quad \quad \quad \quad \quad \quad \quad +\tilde{\beta}_{\max}|E_u|^{2p-1}|V^i_u-\hat{V}^{i}_{u}|\diff u\Bigg].\\
  \end{split}
\end{equation*}
Applying Young's inequality, we get the following bounds
\begin{equation}
\begin{split}
    |E_u|^{2p-1}V^i_u|u-u'|^{1/2} &\le \frac{2p-1}{2p}|E_u|^{2p}+\frac{1}{2p}{V^i_u}^{2p}|u-u'|^{p},\\
    |E_u|^{2p-1}V^i_u|\Delta {X}^{i,N}_u|&\le \frac{2p-1}{2p}|E_u|^{2p}+\frac{1}{2p}{V^i_u}^{2p}|\Delta {X}^{i,N}_u|^{2p},\\
    \tilde{\beta}_{\max}|E_u|^{2p-1}|V^{i}_u-\hat{V}^{i}_u|&\le \frac{2p-1}{2p}|E_u|^{2p}+\frac{1}{2p}\tilde{\beta}_{\max}^{2p}|V^{i}_u-\hat{V}^{i}_u|^{2p},\\
    |E_u|^{2p-1}V^i_u\mathcal{W}_2(\mu_u^{Z^N},\mu_{u'}^{\hat{Z}^{N}})&\le \frac{2p-1}{2p}|E_u|^{2p}+\frac{1}{2p}{V^i_u}^{2p}\mathcal{W}_2(\mu_u^{Z^N},\mu_{u'}^{\hat{Z}^{N}})^{2p}\\
    &\le \frac{2p-1}{2p}|E_u|^{2p}+\frac{1}{2p}{V^i_u}^{2p}\Big(\mathbb{E}\left[|X^{i,N}_u-\hat{X}^{i,N}_{u'}|^2+|V^{i}_u-\hat{V}^{i}_u|^2 \right]\Big)^{p},\\
\end{split}
\end{equation}
which follows by the definition of the Wasserstein metric. Further, we have that
\begin{equation}\label{Wassbound1}
\begin{split}
 |X^{i,N}_u-\hat{X}^{i,N}_{u'}|^2+|V^{i}_u-\hat{V}^{i}_u|^2 &\le 2|E_{u'}|^2+2|\Delta X^{i,N}_u|^2+|V^{i}_u-\hat{V}^{i}_u|^2\\
 &\le 3^{\frac{p-1}{p}}\Big(2^p|E_{u'}|^{2p}+2^p|\Delta X^{i,N}_u|^{2p}+|V^{i}_u-\hat{V}^{i}_u|^{2p}\Big)^{\frac{1}{p}},\\
\end{split}
\end{equation}
where we applied H\"{o}lder's inequality for $p>1$ in the second step. Taking the expectation on both sides and applying H\"{o}lder's inequality for $p>1$ once again, we get that
\begin{equation*}
\begin{split}
 \mathbb{E}\left[|X^{i,N}_u-\hat{X}^{i,N}_{u'}|^2+|V^{i}_u-\hat{V}^{i}_u|^2\right] & \le \mathbb{E}\left[\big(|X^{i,N}_u-\hat{X}^{i,N}_{u'}|^2+|V^{i}_u-\hat{V}^{i}_u|^2\big)^{p}\right]^{\frac{1}{p}}\\
 &\le 3^{p-1}\mathbb{E}\left[2^p|E_{u'}|^{2p}+2^p|\Delta X^{i,N}_u|^{2p}+|V^{i}_u-\hat{V}^{i}_u|^{2p}\right]^{\frac{1}{p}}.\\
\end{split}
\end{equation*}
Finally, raising both sides to the power of $p$ we get that
\begin{equation}\label{Wassbound2}
 \Big(\mathbb{E}\left[|X^{i,N}_u-\hat{X}^{i,N}_{u'}|^2+|V^{i}_u-\hat{V}^{i}_u|^2\right]\Big)^p\le 3^{p(p-1)}\mathbb{E}\left[2^p|E_{u'}|^{2p}+2^p|\Delta X^{i,N}_u|^{2p}+|V^{i}_u-\hat{V}^{i}_u|^{2p}\right].
\end{equation}
We now use Fubini's theorem and the fact that 
\begin{equation*}
    |E_{u'}| \le \sup_{t\in [0,u]}|E_{t}|, \text{\, \,} \mathbb{E}[|E_{u'}|^2]\le \mathbb{E}[\sup_{t\in [0,u]}|E_{t}|^2]\text{\, and \,} \mathbb{E}[|V^{i}_u-\hat{V}^{i}_u|^2] \le \sup_{t\in[0,u]}\mathbb{E}[|V^{i}_t-\hat{V}^{i}_t|^2],
\end{equation*} to get

\begin{equation*}
  \begin{split}
     &\Lambda_1 \le \mathbb{E}\int_0^{T\wedge\tau} \frac{2p-1}{2p}(4L_{\tilde{\beta}}+1)|E_u|^{2p}+\frac{L_{\tilde{\beta}}}{2p}V^i_u|E_{u'}|^{2p} +\frac{L_{\tilde{\beta}}}{2p}{V^i_u}^{2p}|u-u'|^{p} + \frac{L_{\tilde{\beta}}}{2p}{V^i_u}^{2p} |\Delta X^{i,N}_u|^{2p} \\
     &+ L_{\tilde{\beta}} \frac{3^{p(p-1)}}{2p}{V^i_u}^{2p} \mathbb{E}\left[2^p |E_{u'}|^{2p} + 2^{p}|\Delta X^{i,N}_u|^{2p}+|V^i_u - \hat{V}^i_u|^{2p}\right] + \frac{1}{2p}\tilde{\beta}^{2p}_{\max}|V^i_u - \hat{V}^i_u|^{2p}\diff u \\ 
     & \le C_{L_{\tilde{\beta}},p} \Bigg(T\Delta t^p \mathbb{E}[\sup_{t \in [0,T]}{V^i_t}^{2p}] + T\mathbb{E}\left[ \sup_{t\in [0,T]} {V^i_t}^{2p} \sup_{t\in [0,T]}|\Delta {X}^{i,N}_t|^{2p} \right] \\
     &+ \mathbb{E}\left[\int_0^{T\wedge\tau} (V^i_u+1)\sup_{t\in [0,u]}|E_t|^{2p}\diff u\right] +T {\tilde{\beta}_{\max}}^{2p} \sup_{t\in [0,T]} \mathbb{E}[|V^i_t - \hat{V}^i_t|^{2p}] \\
     &+T \mathbb{E}\left[ \sup_{t\in [0,T]} {V^i_t}^{2p}\right] \mathbb{E}\left[\sup_{t\in [0,T]}|\Delta {X}^{i,N}_t|^{2p} \right] +T \mathbb{E}\left[ \sup_{t\in [0,T]} {V^i_t}^{2p}\right] \mathbb{E}\left[ \sup_{t\in [0,T]}|V^{i}_t-\hat{V}^{i}_t|^{2p}\right] \\
     &+ \mathbb{E}\left[\sup_{t\in [0,T]}{V^i_t}^{2p}\right] \mathbb{E}\left[\int_0^{T \wedge \tau} \sup_{t\in [0,u]}|E_{t}|^{2p} \diff u\right] \Bigg),\\
    \end{split}
\end{equation*}
where $C_{L_{\tilde{\beta}},p}$ is a positive constant.\newline
By Lemma \ref{momentboundvol}, Proposition \ref{convdiscretisedvol}, and the Cauchy--Schwarz inequality together with Theorem 1 in \cite{FisNap} for the process $X^{i,N}$, we get that for $C>0$,
\begin{equation}\label{bounds}
\begin{split}
&\mathbb{E}\left[ \sup_{t\in [0,T]} {V^i_t}^{2p} \sup_{t\in [0,T]}|\Delta {X}^{i,N}_t|^{2p} \right] \le \mathbb{E}\left[ \sup_{t\in [0,T]} {V^i_t}^{4p}\right]^{\frac{1}{2}} \mathbb{E}\left[ \sup_{t\in [0,T]}|\Delta {X}^{i,N}_t|^{4p} \right]^{\frac{1}{2}} \le C \Big(\frac{\text{log}(2M)}{M}\Big)^p,\\
& \mathbb{E}\left[ \sup_{t\in [0,T]}|\Delta {X}^{i,N}_t|^{2p} \right] \le C \left(\frac{\text{log}(2M)}{M}\right)^p,\\
\end{split}
\end{equation}
so that 
\begin{equation}\label{subst2}
\Lambda_1 \le C\left(M^{-p}+\left(\frac{M}{\text{log}(2M)}\right)^{-p}+ \mathbb{E}\left[\int_0^{T \wedge \tau} \sup_{t\in [0,u]}|E_{t}|^{2p} \big(V^i_u +1 \big)\diff u\right]\right).
\end{equation}
Now, looking at the third term in \eqref{error1} and using similar techniques as above we get 

\begin{equation}\label{secondsplit}
\begin{split}
&\mathbb{E}\left[\int_0^{T \wedge \tau} |E_u|^{2p-2} \left|\sqrt{V^i_u}\tilde{\sigma}(u,X^{i,N}_u,\mu_u^{Z^N}) - \sqrt{\hat{V}^{i}_u}\tilde{\sigma}(u',\hat{X}^{i,N}_{u'},\mu_{u'}^{\hat{Z}^{N}})\right|^2 \diff u\right] \\
&\le 2 \mathbb{E}\Bigg[\int_0^{T \wedge \tau} |E_u|^{2p-2} |\sqrt{V^i_u}|^2 \left|\tilde{\sigma}(u,X^{i,N}_u,\mu_u^{Z^N}) - \tilde{\sigma}(u',\hat{X}^{i,N}_{u'},\mu_{u'}^{\hat{Z}^{N}})\right|^2 \\
&\quad\quad\quad\quad\quad\quad\quad\quad\quad \quad\quad\quad\quad\quad\quad\quad\quad\quad\quad + |E_u|^{2p-2} \left|\tilde{\sigma}(u',\hat{X}^{i,N}_{u'},\mu_{u'}^{\hat{Z}^{N}})\right|^2 \left|\sqrt{V^i_u}-\sqrt{\hat{V}^{i}_u}\right|^2 \diff u \Bigg] \\
&\le 2 \mathbb{E}\Bigg[\int_0^{T \wedge \tau} L_{\tilde{\sigma}}^2 |E_u|^{2p-2} V^i_u \left(|u-u'|^{1/2} + |X^{i,N}_u - \hat{X}^{i,N}_{u'}| + \mathcal{W}_2(\mu_u^{Z^N},\mu_{u'}^{\hat{Z}^{N}})\right)^2 \\
&\quad\quad\quad\quad \quad\quad\quad\quad\quad\quad\quad\quad\quad\quad\quad\quad\quad\quad\quad\quad\quad\quad\quad\quad+ |E_u|^{2p-2} |\tilde{\sigma}_{\max}|^2 \left|\sqrt{V^i_u} - \sqrt{\hat{V}^{i}_u}\right|^2 \diff u \Bigg] \\
&\le 2 \mathbb{E}\Bigg[\int_0^{T \wedge \tau} L_{\tilde{\sigma}}^2 |E_u|^{2p-2} V^i_u \left(|u-u'|^{1/2} + |E_{u'}| + |\Delta{X}^{i,N}_u| + \mathcal{W}_2(\mu_u^{Z^N},\mu_{u'}^{\hat{Z}^{N}})\right)^2 \\
&\quad\quad\quad\quad\quad\quad\quad\quad \quad\quad\quad\quad \quad\quad\quad\quad\quad\quad\quad\quad \quad\quad\quad\quad+ |E_u|^{2p-2} |\tilde{\sigma}_{\max}|^2 \left|\sqrt{V^i_u} - \sqrt{\hat{V}^{i}_u}\right|^2 \diff u \Bigg] \\
&\le C \mathbb{E}\Bigg[\int_0^{T \wedge \tau} L_{\tilde{\sigma}}^2 |E_u|^{2p-2} V^i_u \left(\Delta t + |E_{u'}|^2 + |\Delta{X}^{i,N}_u|^2 + \mathcal{W}^2_2(\mu_u^{Z^N},\mu_{u'}^{\hat{Z}^{N}})\right) \\
&\quad\quad\quad\quad \quad\quad\quad\quad \quad\quad\quad\quad \quad\quad\quad\quad\quad\quad\quad\quad\quad\quad\quad\quad + |E_u|^{2p-2} |\tilde{\sigma}_{\max}|^2 \left|\sqrt{V^i_u} - \sqrt{\hat{V}^{i}_u}\right|^2 \diff u \Bigg]. \\
\end{split}
\end{equation}

\noindent We now treat each term in the equation above separately using similar techniques as above. Specifically, we apply Young's inequality to get 
\begin{equation}
\begin{split}
    |E_u|^{2p-2}V^i_u \Delta t &\le \frac{2p-2}{2p}|E_u|^{2p}+\frac{1}{p}{V^i}^p_u \Delta t^p ,\\
    |E_u|^{2p-2}V^i_u |\Delta {X}^{i,N}_u|^2&\le \frac{2p-2}{2p}|E_u|^{2p}+\frac{1}{p}{V^i}^p_u|\Delta {X}^{i,N}_u|^{2p},\\
    |E_u|^{2p-2}V^i_u\mathcal{W}^2_2(\mu_u^{Z^N},\mu_{u'}^{\hat{Z}^{N}})&\le \frac{2p-2}{2p}|E_u|^{2p}+\frac{1}{p}{V^i}^p_u\mathcal{W}^{2p}_{p}(\mu_u^{Z^N},\mu_{u'}^{\hat{Z}^{N}})\\
    &\le \frac{p-1}{p}|E_u|^{2p}+\frac{1}{p}{V^i}^p_u\Big(\mathbb{E}\left[|X^{i,N}_u-\hat{X}^{i,N}_{u'}|^2+|V^{i}_u-\hat{V}^{i}_u|^2 \right]\Big)^{p}\\
    & \le \frac{p-1}{p}|E_u|^{2p} + \frac{3^{p(p-1)}}{p}{V^i}^p_u \mathbb{E}\left[2^p|E_{u'}|^{2p}+2^p|\Delta X^{i,N}_u|^{2p}+|V^{i}_u-\hat{V}^{i}_u|^{2p}\right],\\
\end{split}
\end{equation}
where the last step follows by using bound \eqref{Wassbound2} derived above. Substituting the above bounds back in \eqref{secondsplit} we get 

\begin{equation}\label{fourthsplit}
\begin{split}
&\mathbb{E}\left[\int_0^{T \wedge \tau} |E_u|^{2p-2} \left|\sqrt{V^i_u}\tilde{\sigma}(u,X^{i,N}_u,\mu_u^{Z^N}) - \sqrt{\hat{V}^{i}_u}\tilde{\sigma}(u',\hat{X}^{i,N}_{u'},\mu_{u'}^{\hat{Z}^{N}})\right|^2 \diff u\right]\\
& \le C \mathbb{E}\Bigg[\int_0^{T \wedge \tau}  |E_u|^{2p} + {V^i_u}^p \Delta t^p + V^i_u |E_{u'}|^2 |E_u|^{2p-2} + {V^i_u}^p |\Delta{X}^{i,N}_u|^{2p} \\
& \qquad + {V^i_u}^p \mathbb{E}\Big[2^p|E_{u'}|^{2p} + 2^p|\Delta X^{i,N}_u|^{2p} + |V^i_u - \hat{V}^i_u|^{2p}\Big] + |\tilde{\sigma}_{u'}|^{2p} \left|\sqrt{V^i_u} - \sqrt{\hat{V}^i_u}\right|^{2p} \diff u\Bigg]\\
& \le C \Bigg(T\Delta t^p \mathbb{E}\big[\sup_{t \in [0,T]} {V^i_t}^p\big] 
+ T\mathbb{E}\big[\sup_{t \in [0,T]} {V^i_t}^p \sup_{t \in [0,T]} |\Delta {X}^{i,N}_t|^{2p}\big] \\
& \qquad + \mathbb{E}\Bigg[\int_0^{T \wedge \tau} (V^i_u+1) \sup_{t \in [0,u]} |E_t|^{2p} \diff u\Bigg] 
+ \mathbb{E}\big[\sup_{t \in [0,T]} {V^i_t}^p\big] \mathbb{E}\Bigg[\int_0^{T \wedge \tau} \sup_{t \in [0,u]} |E_t|^{2p} \diff u\Bigg] \\
& \qquad + T \mathbb{E}\big[\sup_{t \in [0,T]} {V^i_t}^p\big] \mathbb{E}\big[\sup_{t \in [0,T]} |\Delta {X}^{i,N}_t|^{2p}\big] 
+ T \mathbb{E}\big[\sup_{t \in [0,T]} {V^i_t}^p\big] \mathbb{E}\big[\sup_{t \in [0,T]} |V^i_t - \hat{V}^i_t|^{2p}\big] \\
& \qquad + T \tilde{\sigma}_{\text{max}}^{2p} \sup_{t \in [0,T]} \mathbb{E}\big[|\sqrt{V^i_t} - \sqrt{\hat{V}^i_t}|^{2p}\big] \Bigg),
\end{split}
\end{equation}
which follows from Fubini's theorem. Using similar arguments as above, see \eqref{bounds}, we get that terms $\mathbb{E}\left[ \underset{t\in [0,T]}{\sup} {V^i_t}^p \underset{t\in [0,T]}{\sup}|\Delta {X}^{i,N}_t|^{2p} \right]$ and $\mathbb{E}\left[ \underset{t\in [0,T]}{\sup}|\Delta {X}^{i,N}_t|^{2p} \right] $ are bounded above by $C \Big(\frac{\text{log}(2M)}{M}\Big)^p$ for $C$ a positive constant.\newline
Also, since $\nu \ge 1$, the squared volatility process has strictly positive paths almost surely, so that we can write 
\begin{equation}
   |\sqrt{V^i_t}-\sqrt{\hat{V}^{i}_t}|^2 \le  {V^i_t}^{-1}|V^i_t-\hat{V}^{i}_t|^2,
\end{equation}
so that by H\"{o}lder's inequality, Lemma \ref{momentboundvol}, and Proposition \ref{convdiscretisedvol}, we have that
\begin{equation}
   \sup_{t \in [0,T]}\mathbb{E}\left[|\sqrt{V^i_t}-\sqrt{\hat{V}^{i}_t}|^{2p} \right]\le \sup_{t \in [0,T]}\mathbb{E}\left[|V^i_t|^{-pr}\right]^{\frac{1}{r}} \sup_{t \in [0,T]}\mathbb{E}\left[|V^i_t-\hat{V}^{i}_t|^{\frac{2pr}{r-1}}\right]^{\frac{r-1}{r}} \le CM^{-p},
\end{equation}
for $r>1$ such that $-pr>- \nu$ and $2 \le \frac{2pr}{r-1} < \nu -1$ that after some algebra are combined to
\begin{equation*}
    \frac{\nu}{\nu-p}<\frac{r}{r-1}<\frac{\nu-1}{2p}.
\end{equation*}
For the above condition to be satisfied, we need to choose $p$ such that $p< \frac{\nu^2 - \nu}{3\nu -1}$. Recall that $p>1$ so that we need $\nu > 2+\sqrt{3}$. Substituting everything back in equation \eqref{fourthsplit}, we get
\begin{equation}\label{subst3}
\begin{split}
 \mathbb{E}&\left[\int_0^{T \wedge \tau} |E_u|^{2p-2} |\sqrt{V^i_u}\tilde{\sigma}(u,X^{i,N}_u,\mu_u^{Z^N}) - \sqrt{\hat{V}^{i}_u}\tilde{\sigma}(u',\hat{X}^{i,N}_{u'},\mu_{u'}^{\hat{Z}^{N}})|^2\diff u\right] \\
 & \le C\left(M^{-p}+\Big(\frac{M}{\text{log}(2M)}\Big)^{-p}+ \mathbb{E}\left[\int_0^{T \wedge \tau} \sup_{t\in [0,u]}|E_{t}|^{2p} \big(V^i_u +1 \big)\diff u\right]\right).
\end{split}
\end{equation}
Substituting bounds \eqref{subst1}, \eqref{subst2}, and \eqref{subst3} back in \eqref{error1} and rearranging, we get that for a positive constant $C$ that only depends on $p, T, \tilde{\beta}_{\max}$ and $\tilde{\sigma}_{\max}$,
\begin{equation}\label{subst4}
    \mathbb{E}\left[\sup_{t\in[0,T]}|E_{t \wedge \tau}|^{2p}\right] \le  C\left(M^{-p}+\Big(\frac{M}{\text{log}(2M)}\Big)^{-p}+ \mathbb{E}\left[\int_0^{T \wedge \tau} \sup_{t\in [0,u]}|E_{t}|^{2p} \big(V^i_u +1 \big)\diff u\right]\right).
\end{equation}
The following steps use ideas from \cite{CozRei2} and \cite{BerBosDio}. First, we consider the strictly increasing stochastic process
\begin{equation}
  g^i(t) := \int_0^t \Big(V^i_u+1\Big) \diff u,
\end{equation}
so that \eqref{subst4} can be written as 
\begin{equation}\label{subst5}
    \mathbb{E}\left[\sup_{t\in[0,T]}|E_{t \wedge \tau}|^{2p}\right] \le  C\left(M^{-p}+\Big(\frac{M}{\text{log}(2M)}\Big)^{-p}+ \mathbb{E}\left[\int_0^{T \wedge \tau} \sup_{t\in [0,T]}|E_{t \wedge u}|^{2p} \diff g^i(u)\right]\right).
\end{equation}
Let ${\tau}_{\omega}$ be a stopping time defined by ${\tau}^i_{\omega}:= \text{inf}\{t\ge 0 \,| \,g^i(t)\ge \omega\}$ for $\omega \ge 0$ with ${\tau}_{0}=0.$\newline
Notice that $g^i(\tau^i_{\omega})=\omega, \, g^i(T\wedge \tau^i_{\omega}) = g^i(T) \wedge g^i(\tau^i_{\omega}) = g^i(T) \wedge \omega ,$ and $\tau^i_{\omega}$ is finite since 
\begin{equation*}
    \tau^i_{\omega} = \omega - \int_0^{\tau^i_{\omega}}V^i_u \diff u \le \omega.
\end{equation*}
We now set $\tau = \tau^i_{\omega}$ for fixed $\omega >0$ in \eqref{subst5} and to simplify the notation, in what follows, we drop the superscript $i$ from $\tau^i_{\cdot}$ and $g^i(\cdot)$. Also, consider the stochastic time-change $s=g(u)$, so that $u = \tau_s$. Applying Lebesgue's change-of-time integral formula we get
\begin{equation*}
\mathbb{E}\left[\int_0^{T \wedge \tau_{\omega}} \sup_{t\in [0,T]}|E_{t \wedge u}|^{2p} \diff g(u)\right] = \mathbb{E}\left[\int_0^{g(T) \wedge \omega} \sup_{t\in [0,T]}|E_{t \wedge \tau_s}|^{2p} \diff s\right] \le \int_0^{\omega} \mathbb{E}\left[ \sup_{t\in [0,T]} |E_{t \wedge \tau_s}|^{2p} \right] \diff s,
\end{equation*}
so that 
\begin{equation*}
\mathbb{E}\left[\sup_{t\in[0,T]}|E_{t \wedge \tau_\omega}|^{2p}\right] \le C\left(\frac{M}{\text{log}(2M)} \right)^{-p}+C\int_0^{\omega} \mathbb{E}\left[ \sup_{t\in [0,T]} |E_{t \wedge \tau_s}|^{2p} \right] \diff s,
\end{equation*}
and by applying Gr\"{o}nwall's inequality we get that for all $\omega>0$ and $p> 1$,
\begin{equation}
\mathbb{E}\left[\sup_{t\in[0,T]}|E_{t \wedge \tau_\omega}|^{2p}\right] \le C e^{C\omega} \left(\frac{M}{\text{log}(2M)} \right)^{-p}. 
\end{equation}
We now take $\tau = T$ in \eqref{subst5} and use similar techniques as above to find
\begin{equation*}
\mathbb{E}\left[\sup_{t\in[0,T]}|E_{t}|^{2}\right] \le C\left(\frac{M}{\text{log}(2M)} \right)^{-1}+C\mathbb{E}\left[ \int_0^{T} \sup_{t\in [0,T]} |E_{t \wedge u}|^{2}  \diff g(u)\right].
\end{equation*}
Using a stochastic time-change as before, together with Fubini's theorem and H\"{o}lder's inequality one gets
\begin{equation*}
\begin{split}
\mathbb{E}\left[ \int_0^{T} \sup_{t\in [0,T]} |E_{t \wedge u}|^{2}  \diff g(u)\right] &\le \int_0^\infty \mathbb{E}\left[  \sup_{t\in [0,T]} |E_{t \wedge \tau_s}|^{2p} \right]^{\frac{1}{p}}   \mathbb{E}\left[ \mathbbm{1}_{s \le g(T)} \right]^{1-\frac{1}{p}} \diff s\\
& \le C \int_0^\infty e^{\frac{Cs}{p}} \left(\frac{M}{\text{log}(2M)} \right)^{-1}   \mathbb{P}\Big( s \le g(T)\Big)^{1-\frac{1}{p}} \diff s.
\end{split}
\end{equation*}
We therefore deduce that
\begin{equation}\label{FinBound} \mathbb{E}\left[\sup_{t\in[0,T]}|E_{t}|^{2}\right] \le C \frac{\text{log}(2M)}{M} \left(1+\int_0^\infty e^{\frac{Cs}{p}} \mathbb{P}\Big( s \le g(T)\Big)^{1-\frac{1}{p}} \diff s \right).
\end{equation}
By Markov's inequality, we have that for $\lambda>\frac{C}{p-1}$,
\begin{equation}\label{MarkIneq}
 \mathbb{P}\Big( s \le g(T)\Big) \le e^{-\lambda s} \mathbb{E}\left[ e^{\lambda g(T)} \right] = e^{-\lambda s} \mathbb{E}\left[ e^{\lambda \int_0^T (V^i_t+1) \diff t} \right].
\end{equation}
Substituting this bound in \eqref{FinBound}, we get that 
\begin{equation}\label{FinBound2} \mathbb{E}\left[\sup_{t\in[0,T]}|E_{t}|^{2}\right] \le C \frac{\text{log}(2M)}{M} \left(1 + \int_0^\infty e^{\frac{Cs}{p}-\lambda s ( 1-\frac{1}{p})} \mathbb{E}\left[ e^{\lambda \int_0^T (V^i_t+1) \diff t} \right]^{1-\frac{1}{p}} \diff s \right).
\end{equation}
By Lemma \ref{ExpIntVol}, we know that for $T<T^*$, the expectation on the right--hand side of equation \eqref{FinBound2} is finite. Also, since $\lambda>\frac{C}{p-1}$,
\begin{equation*}
 \int_0^{\infty}\text{exp}\Big(\frac{Cs}{p}-\lambda s\big(1-\frac{1}{p}\big)\Big) \diff s =  \int_0^{\infty}\text{exp}\Big(-\frac{s}{p}\Big(\lambda(p-1)-C\Big)\Big) \diff s = \frac{p}{\big(\lambda (p-1)-C\big)}.
\end{equation*}
The result follows by substituting the above bound in \eqref{FinBound2}.
\end{proof}

\begin{remark}
The condition on the Feller ratio in Theorem \ref{hlsvEM} may appear restrictive. An alternative approach to ensure positivity of the variance process is to employ an implicit discretisation scheme as proposed in \cite{BriAlf}, which leads to the backward Euler--Maruyama scheme analysed in \cite{Alf} for the CIR process. Retracing the proof of Theorem \ref{hlsvEM} using this implicit scheme instead of the FTE scheme for the variance component shows that we can relax the Feller condition to $\nu > 2$ in that case. For a broader discussion of discretisation schemes for the CIR process, we also refer the reader to \cite{DeeDel} and \cite{ADiop}. We choose to work with the FTE scheme for a representative example here, as it remains one of the most widely used methods in practice due to its favourable properties and low bias, as detailed in \cite{LorKoe}.
\end{remark}
\end{subsection}

\section{Proof of well-posedness of the regularised equations}\label{WellposednessSection}
In this section, we prove Theorem \ref{wellposedness}, the well-posedness of the regularised calibrated dynamics \eqref{independentparticles3}. Recall that this is a McKean--Vlasov SDE with an unbounded $1/2$-H\"{o}lder continuous diffusion coefficient. The primary challenge in this task arises from the non-Lipschitz continuity and the measure dependence of the diffusion coefficient. We first recall the definition of a strong solution for MV SDE \eqref{independentparticles3} on a given probability space $(\Omega, \mathcal{F}, \mathbb{P})$ and with respect to the fixed Brownian motion $W$ and initial condition $\xi$, as stated in, e.g., \cite{CarDel}.

\begin{definition}\label{strongsolndefn}
Process $(X_t)_{t \in [0,T]}$ is a unique strong solution of equation \eqref{independentparticles3} if:
\begin{enumerate}[label=(\roman*)]
\item it is $\mathbb{F}$-adapted and continuous,
\item $\mathbb{P}[X_0 = \xi] = 1$, 
\item $\int_0^T \left(|\beta(t, (X_t, V_t), \mu^Z_t)| + \left\| \sigma(t, (X_t, V_t), \mu^Z_t) \right\|^2\right) \diff t < \infty, \quad \mathbb{P}$--a.s.,
\item for $t \in [0,T]$, 
\begin{equation*}
X(t) = X(0)+ \int_0^t \beta(s, (X_s, V_s), \mu^Z_s) \diff s + \int_0^t \sigma(s, (X_s, V_s), \mu^Z_s) \diff W_s, \quad \mathbb{P}\text{--a.s.}, 
\end{equation*}
\item for another solution $(\tilde{X}_t)_{t \in [0,T]}$ with $\tilde{X}_0 = \xi$, $\mathbb{P}\{{X}_t = \tilde{X}_t \, \, \, \forall \, \, \,  t \in [0,T] \} = 1$.
\end{enumerate}
\end{definition}



The result of Theorem \ref{wellposedness} follows by Lemma \ref{aprioriestimates}, Proposition \ref{pathwiseuniqueness}, and Proposition \ref{existencetheorem} below.

\begin{lemma}[Estimate for a solution]\label{aprioriestimates}
Let Assumptions \ref{A3} hold. Furthermore, assume that $\nu \ge 1$ and $\mathbb{E}[|X_0|^4] < \infty$. Then for $t< T^*$ for $T^*$ as in $\eqref{Tbound}$ and $(X_t)_{t \in [0,T]}$ a solution to \eqref{independentparticles3}, we have that for $n \in \{1,2,3, 4\}$,
\[\mathbb{E}\left[\underset{s \in [0,t]}{\sup}|X_s|^n\right]< \infty.\]
\end{lemma}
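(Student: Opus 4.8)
The plan is to exploit the special structure of the coefficients of the $X$-equation. By the Remark following \eqref{independentparticles2} we may write $\beta(t,Z_t,\mu^Z_t) = V_t\,\tilde{\beta}(t,X_t,\mu^Z_t)$ and $\sigma(t,Z_t,\mu^Z_t) = \sqrt{V_t}\,\tilde{\sigma}(t,X_t,\mu^Z_t)$, where $|\tilde{\beta}|\le \tilde{\beta}_{\max}$ and $|\tilde{\sigma}|\le\tilde{\sigma}_{\max}$ under Assumption \ref{A3}. The crucial point is that the dependence on $X_t$ (and on the law $\mu^Z_t$) enters \emph{only through the bounded functions} $\tilde{\beta},\tilde{\sigma}$, so that along any solution
\[
|\beta(t,Z_t,\mu^Z_t)| \le \tilde{\beta}_{\max}\,V_t, \qquad |\sigma(t,Z_t,\mu^Z_t)|^2 \le \tilde{\sigma}_{\max}^2\,V_t .
\]
In particular no term involving $|X_t|$ appears on the right-hand side, so there is no need for a Gr\"onwall/self-consistency argument: the whole estimate reduces to integrability of the additive functional $\int_0^T V_u\,\diff u$ of the CIR process. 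I would also first reduce to the case $n=4$, since $\mathbb{E}\big[\sup_{s\le t}|X_s|^n\big]\le \mathbb{E}\big[\sup_{s\le t}|X_s|^4\big]^{n/4}$ by Jensen's inequality, so the exponents $n=1,2,3$ follow from the bound for $n=4$ (and $\mathbb{E}[|X_0|^4]<\infty$ handles the initial condition for all these $n$).

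Then I would start from the integral form of \eqref{independentparticles3}, use a convexity bound $|a+b+c|^4\le C(|a|^4+|b|^4+|c|^4)$, take the supremum over $s\in[0,t]$ and then expectations, to get
\[
\mathbb{E}\Big[\sup_{s\le t}|X_s|^4\Big] \le C\,\mathbb{E}[|X_0|^4] + C\,\tilde{\beta}_{\max}^4\,\mathbb{E}\Big[\Big(\int_0^t V_u\,\diff u\Big)^{4}\Big] + C\,\mathbb{E}\Big[\sup_{s\le t}\Big|\int_0^s \sigma(u,Z_u,\mu^Z_u)\,\diff W^x_u\Big|^{4}\Big].
\]
The drift term is handled by monotonicity of $u\mapsto\int_0^u V_r\,\diff r$ together with either the elementary polynomial estimate $\mathbb{E}\big[(\int_0^T V_u\,\diff u)^4\big]\le T^{3}\int_0^T\mathbb{E}[V_u^4]\,\diff u<\infty$ from Lemma \ref{momentboundvol}, or, since $x^4\le C e^{x}$, by $\mathbb{E}\big[(\int_0^T V_u\,\diff u)^4\big]\le C\,\mathbb{E}\big[\exp\big(\int_0^T V_u\,\diff u\big)\big]<\infty$ for $T<T^*$ from Lemma \ref{ExpIntVol} (with $\lambda=1$); the latter is the version consistent with the stated hypothesis $t<T^*$. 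The stochastic-integral term is treated by the Burkholder--Davis--Gundy inequality:
\[
\mathbb{E}\Big[\sup_{s\le t}\Big|\int_0^s \sigma\,\diff W^x_u\Big|^{4}\Big] \le C\,\mathbb{E}\Big[\Big(\int_0^t |\sigma(u,Z_u,\mu^Z_u)|^2\,\diff u\Big)^{2}\Big] \le C\,\tilde{\sigma}_{\max}^4\,\mathbb{E}\Big[\Big(\int_0^T V_u\,\diff u\Big)^{2}\Big],
\]
which is finite by the same moment/exponential-moment bound applied with exponent $2$ instead of $4$. Collecting the three estimates yields $\mathbb{E}\big[\sup_{s\le t}|X_s|^4\big]<\infty$, and hence the claim for all $n\in\{1,2,3,4\}$.

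I do not expect a genuine obstacle in this lemma; the only point that requires care is the structural observation above, namely that the $X$-dynamics carry no self-interaction (thanks to boundedness of $\tilde{\beta}$ and $\tilde{\sigma}$), so that all potential growth is funnelled through $\int_0^T V_u\,\diff u$, whose moments — and, up to the critical horizon $T^*$, whose exponential moments — are controlled by Lemmas \ref{momentboundvol} and \ref{ExpIntVol}. The restriction $t<T^*$ is precisely what lets one invoke the exponential-moment bound; if one is content with finiteness alone, the polynomial moment bound of Lemma \ref{momentboundvol} already suffices for every $t$, but stating the result up to $T^*$ keeps it uniform with the rest of the analysis.
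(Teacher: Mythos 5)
Your proof is correct, and it takes a genuinely different and noticeably simpler route than the paper's. The paper applies It\^o's formula to $|X_{s\wedge\tau}|^m$ for $m>n$, bounds the resulting terms by quantities of the form $|X_u|^{m-1}V_u$ and $|X_u|^{m-2}V_u$, and then has to run a stochastic time-change $g(t)=\int_0^t V_u\,\diff u$, a stopping-time argument with $\tau_\omega$, Gr\"onwall's inequality, and finally a H\"older step that invokes the exponential moment $\mathbb{E}[e^{\lambda\int_0^T V_s\,\diff s}]$ from Lemma~\ref{ExpIntVol} --- this last step is precisely where the restriction $T<T^*$ enters. You instead notice the crucial structural fact (already recorded in the Remark after \eqref{independentparticles2}) that $\tilde\beta,\tilde\sigma$ are uniformly bounded, so that along a solution $|\beta|\le\tilde\beta_{\max}V_t$ and $|\sigma|^2\le\tilde\sigma_{\max}^2 V_t$, with no $|X_t|$ or measure term left over on the right. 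Working directly with the integral form, a convexity bound, and BDG for the martingale part, the whole estimate collapses to moments of $\int_0^T V_u\,\diff u$; there is no self-interaction, hence no Gr\"onwall and no stopping times. Two consequences are worth flagging. First, as you point out, your route does not actually need $t<T^*$: the polynomial moment $\mathbb{E}\big[\big(\int_0^T V_u\,\diff u\big)^4\big]\le T^4\,\mathbb{E}\big[\sup_{u\le T}V_u^4\big]<\infty$ from Lemma~\ref{momentboundvol} already closes the argument for every $T$, so you recover the paper's statement but in fact prove something slightly stronger; invoking Lemma~\ref{ExpIntVol} is an optional reformulation consistent with the horizon used elsewhere. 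Second, the paper's heavier machinery is not wasted --- it is essentially a warm-up rehearsal of the time-change/Gr\"onwall technique that \emph{is} needed in Theorems~\ref{PropChaos} and~\ref{hlsvEM}, where genuine $|E_u|$-self-interaction does appear --- but for this particular a priori bound your more elementary argument is the cleaner one.
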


\begin{proof}
Let $2 \le n \le 4$, since the case $n=1$ follows from similar techniques.
Consider $m$ such that $n < m \le 4$ and let $\tau >0$ be a stopping time.
\begin{equation*}
\begin{aligned}
&|X_{s \wedge \tau}|^m \le |X_0|^m + m \int_0^{s \wedge \tau} \langle (X_u)^{m-1}, \beta(u, (X_u, V_u), \mu^Z_u) \rangle \, \diff u \\
&+ m \int_0^{s \wedge \tau} \langle (X_u)^{m-1}, \sigma(u, (X_u, V_u), \mu^Z_u)  \diff W_u \rangle + \frac{m(m-1)}{2} \int_0^{s \wedge \tau} |X_u|^{m-2}|\sigma(u, (X_u, V_u), \mu^Z_u)|^2 \diff u.
\end{aligned}
\end{equation*}
By taking the supremum over time and expectations on both sides, we get

\begin{equation}
\begin{aligned}
&\mathbb{E}\left[\underset{s \in [0,t]}{\sup}|X_{s \wedge \tau}|^m \right] \le \mathbb{E}\left[ |X_0|^m \right] + m \mathbb{E}\left[\int_0^{t \wedge \tau} |X_u|^{m-1}  |\beta(u, (X_u, V_u), \mu^Z_u)|\, \diff u \right] \\
& + m \mathbb{E}\left[\underset{s \in [0,t]}{\sup} \int_0^{s \wedge \tau} \langle (X_u)^{m-1}, \sigma(u, (X_u, V_u), \mu^Z_u)  \diff W_u \rangle \right]\\
& + \frac{m(m-1)}{2} \mathbb{E} \left[\int_0^{t \wedge \tau} |X_u|^{m-2}|\sigma(u, (X_u, V_u), \mu^Z_u)|^2 \diff u \right] \\
&\le \mathbb{E}\left[ |X_0|^m \right] + m{\tilde{\beta}}_{\max}\mathbb{E}\left[\int_0^{t \wedge \tau} |X_u|^{m-1} V_u \, \diff u \right]  + m \beta_0 \mathbb{E}\left[ \left( \int_0^{t \wedge \tau} |X_u|^{2(m-1)} |\sigma(u,(X_u, V_u), \mu^Z_u)|^2  \diff u \right)^{1/2} \right]\\
& + \frac{m(m-1)}{2} {\tilde{\sigma}^2}_{\max}\mathbb{E} \left[\int_0^{t \wedge \tau} |X_u|^{m-2} V_u \diff u \right],
\end{aligned}
\end{equation}
where $\beta_0$ is a constant coming from the application of the Burkholder--Davis--Gundy inequality to the third term of the RHS above.
\begin{equation}\label{split1}
\begin{aligned}
&\mathbb{E}\left[\underset{s \in [0,t]}{\sup}|X_{s \wedge \tau}|^m\right] \le  \mathbb{E}\left[ |X_0|^m \right] + m {\tilde{\beta}}_{\max}\mathbb{E}\left[\int_0^{t \wedge \tau} |X_u|^{m-1} V_u \, \diff u \right]\\
&+ m \beta_0 {\tilde{\sigma}}_{\max} \mathbb{E}\left[ \left( \int_0^{t \wedge \tau}  |X_u|^{2(m-1)} V_u \diff u \right)^{1/2} \right] + \frac{m(m-1)}{2} {\tilde{\sigma}^2}_{\max} \mathbb{E} \left[\int_0^{t \wedge \tau} |X_u|^{m-2} V_u \diff u \right].
\end{aligned}
\end{equation}
Notice that applying Young's inequality to the third term of the RHS above gives
\begin{equation*}
\begin{aligned}
& m \beta_0{\tilde{\sigma}}_{\max} \mathbb{E}\left[ \left( \int_0^{t \wedge \tau}  |X_u|^{2(m-1)} V_u \diff u \right)^{1/2} \right] \le  m \beta_0 {\tilde{\sigma}}_{\max}\mathbb{E}\left[ \underset{s \in [0,t]}{\sup} |X_{s\wedge \tau}|^{m/2}\left( \int_0^{t \wedge \tau}  |X_u|^{m-2} V_u \diff u \right)^{1/2} \right]\\
& \le \frac{1}{2} \mathbb{E}\left[ \underset{s \in [0,t]}{\sup} |X_{s\wedge \tau}|^{m} \right] + \frac{(m \beta_0 {\tilde{\sigma}}_{\max})^2}{2}\mathbb{E}\left[  \int_0^{t \wedge \tau}  |X_u|^{m-2} V_u \diff u  \right].\\
\end{aligned}
\end{equation*}
Substituting back in equation \eqref{split1} and rearranging gives 
\begin{equation*}
\begin{aligned}
&\mathbb{E}\left[\underset{s \in [0,t]}{\sup}|X_{s \wedge \tau}|^m\right] \le  2\mathbb{E}\left[ |X_0|^m \right] + 2m {\tilde{\beta}}_{\max}\mathbb{E}\left[\int_0^{t \wedge \tau} |X_u|^{m-1} V_u \, \diff u \right]\\
&+ \left(m(m-1){\tilde{\sigma}^2}_{\max}+(m \beta_0 {\tilde{\sigma}}_{\max})^2\right)  \mathbb{E} \left[\int_0^{t \wedge \tau} |X_u|^{m-2} V_u \diff u \right]\\
& \le 2\mathbb{E}\left[ |X_0|^m \right] + 2 \gamma(m)  \mathbb{E} \left[\int_0^{t \wedge \tau} \left(|X_u|^{m-2}+|X_u|^{m-1}\right) V_u \diff u \right],
\end{aligned}
\end{equation*}
where $ \gamma(m) :=\max \{2m{\tilde{\beta}}_{\max}, \left(m(m-1){\tilde{\sigma}^2}_{\max}+(m \beta_0 {\tilde{\sigma}}_{\max})^2\right) \}$. Also, note that for all $m > 2$, $|X_u|^{m-2}+|X_u|^{m-1} \le 1+2 |X_u|^{m-1} \le 3+ 2|X_u|^{m} $, so that
\begin{equation*}
\begin{aligned}
&\mathbb{E}\left[\underset{s \in [0,t]}{\sup}|X_{s \wedge \tau}|^m\right] \le 2\mathbb{E}\left[ |X_0|^m \right] + 2 \gamma(m)  \mathbb{E} \left[\int_0^{t \wedge \tau} \left(3+2|X_u|^{m}\right) V_u \diff u \right]\\
& \le 2\mathbb{E}\left[ |X_0|^m \right] + 6\gamma(m) t \mathbb{E}\left[\underset{s \in [0,t]}{\sup} V_s \right] + 4\gamma(m)  \mathbb{E} \left[\int_0^{t \wedge \tau} |X_u|^{m} V_u \diff u \right].\\
\end{aligned}
\end{equation*}
Following a similar reasoning as in the proofs above, we define the strictly increasing stochastic process
\begin{equation*}
g(t) := \int_0^t V_u \diff u.
\end{equation*}
We then have that
\begin{equation*}
\begin{aligned}
&\mathbb{E}\left[\underset{s \in [0,t]}{\sup}|X_{s \wedge \tau}|^m\right] \le 2\mathbb{E}\left[ |X_0|^m \right] + 6\gamma(m) t \mathbb{E}\left[\underset{s \in [0,t]}{\sup} V_s \right] + 4\gamma(m)  \mathbb{E} \left[\int_0^{t \wedge \tau} \underset{s \in [0,u]}{\sup} |X_s|^{m} \diff g(u) \right]\\
&\le 2\mathbb{E}\left[ |X_0|^m \right] + 6\gamma(m) t \mathbb{E}\left[\underset{s \in [0,t]}{\sup} V_s \right] + 4\gamma(m)  \mathbb{E} \left[\int_0^{t \wedge \tau} \underset{s \in [0,t]}{\sup} |X_{s \wedge u}|^{m} \diff g(u) \right].\\
\end{aligned}
\end{equation*}
For $\omega \ge 0$, let $\tau_{\omega}:=\{t \ge 0 \, | \, g(t)\ge \omega\}$. Notice that $g(t \wedge \tau_{\omega}) =  g(t) \wedge \omega$. In what follows, we fix $\omega >0$ and set $\tau = \tau_{\omega }$.\bigbreak

\noindent We consider a stochastic time change $r = g(u)$ so that $u = \tau _{r}$.

\begin{equation*}
\begin{aligned}
&\mathbb{E}\left[\underset{s \in [0,t]}{\sup}|X_{s \wedge \tau_\omega}|^m\right] \le 2\mathbb{E}\left[ |X_0|^m \right] + 6\gamma(m) t \mathbb{E}\left[\underset{s \in [0,t]}{\sup} V_s \right] + 4\gamma(m)  \mathbb{E} \left[\int_0^{\omega \wedge g(t)} \underset{s \in [0,t]}{\sup} |X_{s \wedge \tau_r}|^{m} \diff r \right]\\
& \le 2\mathbb{E}\left[ |X_0|^m \right] + 6\gamma(m) t \mathbb{E}\left[\underset{s \in [0,t]}{\sup} V_s \right] + 4\gamma(m)  \int_0^{\omega}\mathbb{E} \left[ \underset{s \in [0,t]}{\sup} |X_{s \wedge \tau_r}|^{m} \right] \diff r.\\
\end{aligned}
\end{equation*}
Applying Gr\"{o}nwall's inequality gives
\begin{equation}\label{Xbound}
\mathbb{E}\left[\underset{s \in [0,t]}{\sup}|X_{s \wedge \tau_\omega}|^m\right] \le T_1(t) e^{4 \gamma(m) \omega},
\end{equation}
for $T_1 (t) = 2\mathbb{E}\left[ |X_0|^m \right] + 6\gamma(m) t \mathbb{E}\left[\underset{s \in [0,t]}{\sup} V_s \right] < \infty $ by Lemma \ref{momentboundvol}.\newline

\noindent We now repeat the same process as above for $n<m$ and set $\tau = t$ to get:

\begin{equation*}
\begin{aligned}
&\mathbb{E}\left[\underset{s \in [0,t]}{\sup}|X_{s}|^n \right] \le \mathbb{E}\left[ |X_0|^n \right] + n \tilde{\beta}_{\max}\mathbb{E}\left[\int_0^{t} |X_u|^{n-1} V_u \, \diff u \right] \\
&+ n \beta_0 \mathbb{E}\left[ \left( \int_0^{t } |X_u|^{2(n-1)} |\sigma(u, X_u, V_u, \mu^Z_u)|^2  \diff u \right)^{1/2} \right] + \frac{n(n-1)}{2} \tilde{\sigma}^2_{\max}\mathbb{E} \left[\int_0^{t } |X_u|^{n-2} V_u \diff u \right],\\
& \le 2\mathbb{E}\left[ |X_0|^n \right] + 6\gamma(n) t \mathbb{E}\left[\underset{s \in [0,t]}{\sup} V_s \right] + 4\gamma(n)  \mathbb{E} \left[\int_0^{t} \underset{s \in [0,t]}{\sup} |X_{s \wedge u}|^{n} \diff g(u) \right].\\
\end{aligned}
\end{equation*}
Regarding the last term above, after applying a stochastic time-change as before followed by Fubini's theorem and H\"{o}lder's inequality we get
\begin{equation*}
\begin{aligned}
&\mathbb{E} \left[\int_0^{t} \underset{s \in [0,t]}{\sup} |X_{s \wedge u}|^{n} \diff g(u) \right] \le \int_0^{\infty} \mathbb{E} \left[\left(\underset{s \in [0,t]}{\sup} |X_{s \wedge \tau_r}|^{n}\right) \mathbbm{1}_{r \le g(t)}\right] \diff r \\
&  \le \int_0^{\infty} \mathbb{E} \left[\left(\underset{s \in [0,t]}{\sup} |X_{s \wedge \tau_r}|^{n}\right)^{\frac{m}{n}}\right]^{\frac{n}{m}} \mathbb{E}\left[\mathbbm{1}_{r \le g(t)}\right]^{\frac{m-n}{m}}\diff r \le  \int_0^{\infty} \mathbb{E} \left[\underset{s \in [0,t]}{\sup} |X_{s \wedge \tau_r}|^{m}\right]^{\frac{n}{m}}\mathbb{P}\Big(r \le g(t)\Big)^{\frac{m-n}{m}}\diff r \\
& \le \int_0^{\infty} \left(T_1(t) e^{4 \gamma(m) r}\right)^{\frac{n}{m}}\Big( 
e^{-\lambda r} \mathbb{E}\left[ e^{\lambda \int_0^t V_s \diff s} \right] \Big)^{\frac{m-n}{m}}\diff r,\\
\end{aligned}
\end{equation*}
where we applied bound \eqref{Xbound} and Markov's inequality with $\lambda >\frac{4n\gamma(m)}{m-n}$ in the last line. By Lemma \ref{ExpIntVol}, $\mathbb{E}\left[ e^{\lambda \int_0^t V_s \diff s} \right]$ is finite for $t<T^*$, with $T^*$ as in \eqref{Tbound}.
Also, for $\lambda > \frac{4n\gamma(m)}{m-n}$,
\[ \int_0^{\infty} e^{-r\left(\frac{\lambda (m-n)-4n\gamma(m)}{m}\right)}\diff r = \left[\frac{-m}{\lambda (m-n)-4n\gamma(m)}e^{-r\left(\frac{\lambda (m-n)-4n\gamma(m)}{m}\right)}\right]_0^{\infty} = \frac{m}{\lambda (m-n)-4n\gamma(m)}. \] 
Let $T_2(t) :=  2 \mathbb{E}\left[|X_0|^n \right] + 6\gamma(n) t \mathbb{E}\left[\underset{s \in [0,t]}{\sup} V_s \right] + 4\gamma(n)T_1(t)^{\frac{n}{m}}\frac{m}{\lambda (m-n)-4n\gamma(m)}\mathbb{E}\left[ e^{\lambda \int_0^t V_s \diff s} \right]^{\frac{m-n}{m}}.$\newline
We therefore conclude that 
\begin{equation*}\label{i}
\mathbb{E}\left[\underset{s \in [0,t]}{\sup}|X_{s}|^n \right] \le T_2(t) < \infty.
\end{equation*}
\newline
\end{proof}

\begin{proposition}[Pathwise uniqueness]\label{pathwiseuniqueness}
Let Assumptions \ref{A3} hold together with $\nu \ge 1$ and $\mathbb{E}[|X_0|^4] < \infty$. Then the solution to \eqref{independentparticles3} is pathwise unique until time $T^*$, for $T^*$ as in $\eqref{Tbound}$.
\end{proposition}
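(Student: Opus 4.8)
The plan is to adapt the argument of Theorem~\ref{PropChaos}, replacing the empirical measure there by the law of a second solution, and to close the resulting estimate by the two-pass Gr\"onwall technique used for Lemma~\ref{aprioriestimates}. Let $X$ and $\tilde X$ be two solutions of \eqref{independentparticles3} on the same probability space, driven by the same Brownian motion $(W^x,W^v)$ and with the same initial datum; since the CIR equation has a unique strong solution, both are built over the same volatility path $V$, so $E_t:=X_t-\tilde X_t$ satisfies $E_0=0$ and has identically vanishing volatility component. With $\tilde\beta,\tilde\sigma$ as in the first Remark of Section~\ref{HCIR} and $\tilde\mu^{Z}_t:=\mathbb P_{(\tilde X_t,V_t)}$, the differences of the coefficients along the two solutions equal $V_t\big(\tilde\beta(t,X_t,\mu^{Z}_t)-\tilde\beta(t,\tilde X_t,\tilde\mu^{Z}_t)\big)$ and $\sqrt{V_t}\,\big(\tilde\sigma(t,X_t,\mu^{Z}_t)-\tilde\sigma(t,\tilde X_t,\tilde\mu^{Z}_t)\big)$, so the $\tfrac12$-H\"older, unbounded part of the diffusion is confined to the multiplicative weight $V_t$, while $\tilde\beta,\tilde\sigma$ are Lipschitz in the state and in the measure. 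Since $X$ and $\tilde X$ share $V$ and the driving noise, the coupling $\big((X_t,V_t),(\tilde X_t,V_t)\big)$ gives $\mathcal{W}_2(\mu^{Z}_t,\tilde\mu^{Z}_t)^2\le\mathbb E[|E_t|^2]$, a \emph{deterministic} quantity.

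I would then apply It\^o's formula to $|E_{t\wedge\tau}|^{2p}$ for a fixed $p\in(1,2)$ and a stopping time $\tau$, take the supremum over $t\in[0,T]$ and then expectations, and estimate the three resulting terms just as in the proof of Theorem~\ref{PropChaos}. The stochastic integral is removed by the Burkholder--Davis--Gundy and AM--GM inequalities, absorbing $\tfrac12\,\mathbb E[\sup_{t\in[0,T]}|E_{t\wedge\tau}|^{2p}]$ into the left-hand side --- legitimate since Lemma~\ref{aprioriestimates} (using $\nu\ge1$, $\mathbb E[|X_0|^4]<\infty$ and $2p\le4$) makes it finite; Young's inequality reduces the drift and It\^o-correction terms to multiples of $|E_u|^{2p}V_u$ and $\mathcal{W}_2(\mu^{Z}_u,\tilde\mu^{Z}_u)^{2p}V_u$. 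As $\mathcal{W}_2(\mu^{Z}_u,\tilde\mu^{Z}_u)^{2p}\le(\mathbb E[|E_u|^2])^{p}\le\mathbb E[|E_u|^{2p}]$ is deterministic, the mean-field term is bounded, using $\sup_{t\in[0,T]}\mathbb E[V_t]<\infty$ from Lemma~\ref{momentboundvol}, by $C\int_0^T\mathbb E[|E_u|^{2p}]\,\diff u$. The remaining $|E_u|^{2p}V_u$ term is treated with the stochastic-time-change device of \cite{BerBosDio,CozRei2}: with $g(t):=\int_0^tV_u\,\diff u$ (strictly increasing because $\nu\ge1$ keeps $V$ a.s.\ strictly positive) and $\tau_\omega:=\inf\{t\ge0:g(t)\ge\omega\}$, Lebesgue's change-of-time formula turns $\mathbb E\big[\int_0^{T\wedge\tau_\omega}|E_u|^{2p}\,\diff g(u)\big]$ into $\mathbb E\big[\int_0^{g(T)\wedge\omega}\sup_{t}|E_{t\wedge\tau_s}|^{2p}\,\diff s\big]\le\int_0^\omega\mathbb E\big[\sup_{t}|E_{t\wedge\tau_s}|^{2p}\big]\,\diff s$.

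Combining these bounds gives, with $\Phi(\omega):=\mathbb E[\sup_{t\in[0,T]}|E_{t\wedge\tau_\omega}|^{2p}]$, an inequality of the form $\Phi(\omega)\le C\int_0^\omega\Phi(s)\,\diff s+C\int_0^T\mathbb E[|E_u|^{2p}]\,\diff u$. Exactly as in the proofs of Lemma~\ref{aprioriestimates} and Theorem~\ref{hlsvEM}, a first Gr\"onwall application produces an $\omega$-exponential bound for $\Phi$; one then sets $\tau=T$, and to dispose of the remaining $V$-weight one combines H\"older's inequality with Markov's inequality $\mathbb P(s\le g(T))\le e^{-\lambda s}\mathbb E[e^{\lambda\int_0^TV_u\,\diff u}]$, the expectation being finite precisely for $T<T^*$ by Lemma~\ref{ExpIntVol} --- which is where the restriction $T<T^*$ enters. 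Integrating the exponential bound in $s$ and feeding it back should leave a closed (linear) Gr\"onwall inequality for the deterministic map $u\mapsto\mathbb E[\sup_{r\le u}|E_r|^{2p}]$, which vanishes at $u=0$; Gr\"onwall's lemma then forces $\mathbb E[\sup_{t\in[0,T]}|E_t|^{2p}]=0$ for every $T<T^*$, i.e.\ $X$ and $\tilde X$ are indistinguishable, which is the claimed pathwise uniqueness.

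The step I expect to be the main obstacle is the interplay between the non-Lipschitz, unbounded square-root diffusion and the measure dependence of the coefficients. Factoring out $\sqrt V$ reduces the state- and law-dependence to the Lipschitz maps $\tilde\beta,\tilde\sigma$, but it leaves an unbounded $V$-weight on every resulting term; it is precisely the two-pass time-change/Markov argument (together with the exponential-moment bound of Lemma~\ref{ExpIntVol}) that must absorb this weight while still yielding a genuinely \emph{linear} Gr\"onwall inequality rather than a sub-linear one, which would not force uniqueness. Verifying that the mean-field term $\mathcal{W}_2(\mu^{Z}_u,\tilde\mu^{Z}_u)$ remains compatible with this scheme, and that every expectation involved is finite via Lemmas~\ref{momentboundvol}, \ref{ExpIntVol} and~\ref{aprioriestimates} (hence the hypotheses $1<p<2$, $\nu\ge1$, $\mathbb E[|X_0|^4]<\infty$), is the delicate bookkeeping.
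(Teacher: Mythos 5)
Your plan diverges from the paper's in two places, and the second one is a genuine gap.

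\textbf{Where the paper goes differently.} The paper works with $|E_{t\wedge\tau}|^2$ directly (no auxiliary exponent $2p$). After the time change $s=g(u)$, it bounds the Wasserstein term by $\mathbb{E}[|X^{(1)}_{\tau_s}-X^{(2)}_{\tau_s}|^2]$, which is then dominated by the \emph{same} quantity $\Phi(s):=\mathbb{E}[\sup_t|E_{t\wedge\tau_s}|^2]$ that appears from the state-difference term. This yields a \emph{homogeneous} Gr\"onwall inequality $\Phi(\omega)\le C\int_0^\omega\Phi(s)\,\diff s$, hence $\Phi(\omega)=0$ for every $\omega>0$ in one stroke --- no exponential-moment lemma is needed at this stage, and $T<T^*$ does not yet enter. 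The stopping time is then removed by a different device than the one you propose: split $\mathbb{E}[\sup_t|E_t|^2]$ over the events $\{\tau_\omega>T\}$ (where $E_{t\wedge\tau_\omega}=E_t$ and the bound $\Phi(\omega)=0$ applies) and $\{\tau_\omega\le T\}$ (where $\omega\le g(T)$, so one inserts $g(T)/\omega\ge1$ and controls the product by $\frac1\omega\mathbb{E}[\sup_t|X^{(1)}_t|^4+\sup_t|X^{(2)}_t|^4+2g(T)^2]$). It is here that Lemma~\ref{aprioriestimates} and its requirements $\mathbb{E}[|X_0|^4]<\infty$, $T<T^*$ are used; Lemma~\ref{ExpIntVol} is invoked only indirectly (through Lemma~\ref{aprioriestimates}), not via the Markov/exponential-integral device you borrow from Theorem~\ref{hlsvEM}. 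Letting $\omega\to\infty$ kills the remaining term.

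\textbf{The gap in your plan.} By insisting on the two-pass $|E|^{2p}$/$|E|^2$ machinery of Theorem~\ref{hlsvEM} and by treating $\mathcal{W}_2(\mu^Z_u,\tilde\mu^Z_u)^{2p}$ as a separate deterministic constant $A:=\int_0^T\mathbb{E}[|E_u|^{2p}]\,\diff u$, your first Gr\"onwall pass yields only $\Phi(\omega)\le Ce^{C\omega}A$. Feeding this into the second pass (with $\tau=T$, H\"older with exponents $p$ and $p/(p-1)$, Markov and Lemma~\ref{ExpIntVol}) produces an estimate of the schematic form
\begin{equation*}
\mathbb{E}\Big[\sup_{t\in[0,T]}|E_t|^2\Big]\ \le\ C\int_0^T\mathbb{E}[|E_u|^2]\,\diff u\ +\ C'\Big(\int_0^T\mathbb{E}[|E_u|^{2p}]\,\diff u\Big)^{1/p}.
\end{equation*}
The first term is Gr\"onwall-compatible, but the second is not: the left side is a second moment while the additive remainder involves a $2p$-th moment raised to the $1/p$, and there is no \emph{linear} relation in $\mathbb{E}[\sup|E|^2]$ forcing it to vanish (an interpolation against the finite fourth moment gives at best a sub-linear $\mathbb{E}[\sup|E|^2]\lesssim\mathbb{E}[\sup|E|^2]^{\theta/p}$ for some $\theta<p$, which admits nonzero fixed points). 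In Theorem~\ref{hlsvEM} this additive remainder is a genuine numerical error $(\log(2M)/M)^p$ and there is no circularity, so the two-pass structure closes; here the ``error'' \emph{is} the quantity you are trying to show vanishes, and the two-pass structure does not close. You correctly flagged this obstacle in your final paragraph --- ``yielding a genuinely linear Gr\"onwall inequality rather than a sub-linear one'' --- but did not resolve it. The resolution the paper uses is precisely to avoid it: keep the Wasserstein term inside the time-changed $\diff g(u)$ integral so it is absorbed into $\Phi(s)$, yielding a homogeneous inequality with no additive remainder at all.
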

\begin{proof}

Let $(X^{(1)}_t)_{t \in [0,T]}$ and $(X^{(2)}_t)_{t \in [0,T]}$ two solutions to \eqref{independentparticles3} with respect to the same initial solution $X_0$ and Brownian motion $W$. We aim to show that $\mathbb{E}\left[\underset{t \in [0,T]}{\sup}|X^{(1)}_{t}- X^{(2)}_{t}|^2\right] = 0 $.\newline

\noindent Let $\tau >0$ be a stopping time. Applying It\^{o}'s formula to $|X^{(1)}_{t \wedge \tau}- X^{(2)}_{t \wedge \tau}|^2$, and following similar arguments as in the proofs of Theorem \ref{PropChaos} and Theorem \ref{hlsvEM}, we get

\begin{equation*}
\begin{aligned}
\mathbb{E}\left[\underset{t \in [0,T]}{\sup}|X^{(1)}_{t \wedge \tau}- X^{(2)}_{t \wedge \tau}|^2\right] &\le C \mathbb{E}\left[\int_0^{T \wedge \tau} \left( |X^{(1)}_{r}- X^{(2)}_{r}|^2 + \mathcal{W}_2^2(\mu_{(1)}(r), \mu_{(2)}(r))\right) \diff g(r) \right]\\
& \le C\mathbb{E}\left[ \int_0^{T \wedge \tau}  \underset{t \in [0,T]}{\sup}|X^{(1)}_{t \wedge r}- X^{(2)}_{t \wedge r}|^2 + \mathcal{W}_2^2(\mu_{(1)}(r), \mu_{(2)}(r))  \diff g( r) \right],
\end{aligned}
\end{equation*}
where $g(\cdot) = \int_0^{\cdot} V_r \diff r$, and $C$ depends on the Lipschitz constants $L_{\tilde{\beta}}$ and $L_{\tilde{\sigma}}$. Note that since $\nu \ge 1$, then $V$ is strictly positive and $g$ increasing. Also, $\mu_{(1)}(r) = \mathbb{P}_{(X^{(1)}_r, V_r)}$ and similarly for $\mu_{(2)}(r)$.\newline

\noindent Similarly as in the previous proofs, let $\tau_{\omega}:= \inf\{t\ge 0 \, \lvert \, g(t) \ge \omega\}$, for any $\omega \ge 0$ and $\tau_0 = 0$.\newline 

\noindent Set $\tau = \tau_{\omega}$ for fixed $\omega >0$. Also, consider a stochastic time change $s= g(r)$ so that $r = \tau_s$. Applying the above and using the definition of the Wasserstein metric leads to

\begin{equation*}
\begin{aligned}
\mathbb{E}\left[\underset{t \in [0,T]}{\sup}|X^{(1)}_{t \wedge \tau_{\omega}}- X^{(2)}_{t \wedge \tau_{\omega}}|^2\right] &\le C  \mathbb{E}\left[ \int_0^{g(T) \wedge \omega}\underset{t \in [0,T]}{\sup} |X^{(1)}_{t \wedge \tau_s}- X^{(2)}_{t \wedge \tau_s}|^2 + \mathbb{E}\left[|X^{(1)}_{\tau_s}- X^{(2)}_{\tau_s}|^2 \right] \diff s \right] \\
&\le C \int_0^{\omega} \mathbb{E}\left[\underset{t \in [0,T]}{\sup} |X^{(1)}_{t \wedge \tau_s}- X^{(2)}_{t \wedge \tau_s}|^2 \right] \diff s.
\end{aligned}
\end{equation*}
Applying Gr\"{o}nwall's lemma leads to 
\(\mathbb{E}\left[\underset{t \in [0,T]}{\sup}|X^{(1)}_{t \wedge \tau_{\omega}}- X^{(2)}_{t \wedge \tau_{\omega}}|^2\right] =0. \)
Therefore, for fixed $\omega > 0$ and $t \in [0,T]$, we have $X^{(1)}_{t \wedge \tau_{\omega}} = X^{(2)}_{t \wedge \tau_{\omega}}.$
To finish the proof, we can write for any $\omega > 0 $,
\begin{equation*}
\begin{aligned}
&\mathbb{E}\left[\underset{t \in [0,T]}{\sup}|X^{(1)}_{t}- X^{(2)}_{t}|^2\right]  \le \mathbb{E}\left[\underset{t \in [0,T]}{\sup}|X^{(1)}_{t}- X^{(2)}_{t}|^2 \mathbbm{1}_{\{\tau_{\omega} > T\}}\right] + \mathbb{E}\left[\underset{t \in [0,T]}{\sup}|X^{(1)}_{t}- X^{(2)}_{t}|^2 \mathbbm{1}_{\{\tau_{\omega} \le T\}}\right] \\
& \le \mathbb{E}\left[\underset{t \in [0,T]}{\sup}|X^{(1)}_{t \wedge \tau_{\omega}}- X^{(2)}_{{t \wedge \tau_{\omega}}}|^2 \right] + \mathbb{E}\left[\underset{t \in [0,T]}{\sup}|X^{(1)}_{t}- X^{(2)}_{t}|^2 \mathbbm{1}_{\{\tau_{\omega} \le T\}}\right].
\end{aligned}
\end{equation*}
We've already shown that the first term above is $0$. Looking at the second term, since $\tau_{\omega} \le T$ and $g$ is increasing, then $\omega \le g(T)$. We therefore have
\begin{equation}\label{omegaeqn}
\begin{aligned}
\mathbb{E}\left[\underset{t \in [0,T]}{\sup}|X^{(1)}_{t}- X^{(2)}_{t}|^2\right]  &\le \mathbb{E}\left[\underset{t \in [0,T]}{\sup}|X^{(1)}_{t}- X^{(2)}_{t}|^2 \cdot \frac{g(T)}{\omega}\right]\\
&\le \frac{1}{\omega} \mathbb{E}\left[\underset{t \in [0,T]}{\sup}|X^{(1)}_{t}|^4 + \underset{t \in [0,T]}{\sup}|X^{(2)}_{t}|^4 + 2g^2(T)\right],
\end{aligned}
\end{equation}
where we used $|X^{(1)}_{t}- X^{(2)}_{t}|^2 g(T) \le (2|X^{(1)}_{t}|^2 + 2|X^{(2)}_{t}|^2)g(T) \le  |X^{(1)}_{t}|^4 + |X^{(2)}_{t}|^4 + 2g^2(T).$\newline

From Lemma \ref{aprioriestimates}, $\mathbb{E}\left[\underset{t \in [0,T]}{\sup}|X^{(1)}_{t}|^4]\right]$ and $\mathbb{E}\left[\underset{t \in [0,T]}{\sup}|X^{(2)}_{t}|^4\right]$  are finite for $T<T^{*}$. For the last term in equation \eqref{omegaeqn} notice that by Cauchy--Schwarz inequality,
\begin{equation*}
\mathbb{E}\left[g^2(T)\right] \le T \mathbb{E}\left[\int_0^T V^2_r \diff r \right]  \le T \int_0^T \mathbb{E}\left[\underset{r \in [0,T]}{\sup}V^2_r  \right] \diff r,
\end{equation*}
which, by Lemma \ref{momentboundvol}, is also finite. Sending $\omega \to \infty$ in equation \eqref{omegaeqn} gives
\begin{equation*}
\mathbb{E}\left[\underset{t \in [0,T]}{\sup}|X^{(1)}_{t}- X^{(2)}_{t}|^2\right]  = 0,
\end{equation*}
concluding that $X^{(1)}_{t}= X^{(2)}_{t}$ for all $t \in [0,T]$, as required.
\end{proof}

To establish the existence of a strong solution to \eqref{independentparticles3}, we partition $[0,T]$ into uniform time intervals and use an interpolated Euler-like sequence in which at each time-step of the partition the measure component is held constant. This technique is inspired by Section 3 in \cite{KloLor}, and Section 3 in \cite{LiMaoSongWuYin}. We will show that this sequence is Cauchy in $L^2(\Omega, \mathcal{F},\mathbb{P})$ and by the completeness of $L^2(\Omega, \mathcal{F},\mathbb{P})$ there exists a limit $X:\Omega \times [0,T] \to \mathbb{R}$, which we show to be a solution to \eqref{independentparticles3}.
\begin{proposition}[Existence]\label{existencetheorem}
Let Assumptions \ref{A3} hold and also assume that $\mathbb{E}[|X_0|^2] < \infty$ and $\nu \ge 1$. There exists a strong solution of \eqref{independentparticles3} on $[0,T]$ for $T < T^*$ for $T^*$ as in \eqref{Tbound}.
\end{proposition}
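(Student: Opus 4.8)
The plan is the one announced just before the statement: construct a sequence of Euler--type approximations on refining partitions of $[0,T]$ in which the measure argument is \emph{frozen} on each subinterval, prove that this sequence is Cauchy in $L^2(\Omega;C([0,T]))$ up to the critical time $T^*$, and identify the limit as a strong solution of \eqref{independentparticles3}; pathwise uniqueness is already supplied by Proposition~\ref{pathwiseuniqueness}. The main difficulty is that the diffusion coefficient is simultaneously measure--dependent, unbounded, and only $1/2$--H\"older (through $\sqrt V$): a plain Picard iteration gives no contraction, and the naive a priori $L^2$ bounds break because of the $V$-- and $\sqrt V$--factors. Freezing the measure on a shrinking partition resolves the first issue, while the stopping--time/time--change device $s=g(t):=\int_0^tV_u\,\diff u$ together with Lemma~\ref{ExpIntVol} is what makes every estimate survive up to --- and only up to --- $T^*$.

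\emph{Construction.} For $n\in\mathbb N$ put $h_n=T/2^n$, $t^n_k=kh_n$, and define $X^{(n)}$ recursively on the grid: set $X^{(n)}_0=X_0$ and, given $X^{(n)}_{t^n_k}$ with joint law $\mu^{(n)}_k:=\mathbb P_{(X^{(n)}_{t^n_k},V_{t^n_k})}$, let $X^{(n)}$ on $[t^n_k,t^n_{k+1}]$ solve the \emph{ordinary} SDE $\diff X^{(n)}_t=\beta(t,(X^{(n)}_t,V_t),\mu^{(n)}_k)\diff t+\sigma(t,(X^{(n)}_t,V_t),\mu^{(n)}_k)\diff W^x_t$ with the measure slot frozen at $\mu^{(n)}_k$. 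By the Remark following \eqref{independentparticles2}, once the measure is a fixed argument the coefficients are, in $x$, Lipschitz with the random constants $V_tL_{\tilde\beta}$, $\sqrt{V_t}\,L_{\tilde\sigma}$ and of at most linear growth $V_t\tilde\beta_{\max}$, $\sqrt{V_t}\,\tilde\sigma_{\max}$; localising by $\tau_\omega:=\inf\{t:\int_0^tV_u\,\diff u\ge\omega\}$ makes them genuinely Lipschitz of linear growth, so each step has a unique strong solution and these patch into a continuous $\mathbb F$--adapted $X^{(n)}$ on $[0,T]$. The second--moment analogue of Lemma~\ref{aprioriestimates} --- which, since the $x$--dependence enters only through $V$--weighted linear terms, needs only $\mathbb E[|X_0|^2]<\infty$ and Lemma~\ref{momentboundvol} --- gives $\sup_{t\le T}\mathbb E[|X^{(n)}_t|^2]\le C$ and, after the time change $s=g(t)$ and Gr\"onwall, $\mathbb E[\sup_{t\le T}|X^{(n)}_{t\wedge\tau_\omega}|^2]\le C(\omega)$, both uniform in $n$; moreover one has the one--step estimate $\sup_t\mathbb E[|X^{(n)}_t-X^{(n)}_{t'}|^{2p}]\le C(h_n\log(1/h_n))^p$, with $t'$ the gridpoint below $t$, because an increment of $X^{(n)}$ over one step is bounded purely in terms of $V$, $\tilde\beta_{\max}$, $\tilde\sigma_{\max}$.

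\emph{Cauchy property.} Since the partitions are nested, apply It\^o's formula to $|X^{(n)}_{t\wedge\tau_\omega}-X^{(n+1)}_{t\wedge\tau_\omega}|^{2p}$ for $1<p$ exactly as in the proof of Theorem~\ref{hlsvEM}, splitting the difference of the drifts and of the diffusions into (i) state terms $\propto V_tL_{\tilde\beta}|X^{(n)}_t-X^{(n+1)}_t|$ and $\sqrt{V_t}L_{\tilde\sigma}|X^{(n)}_t-X^{(n+1)}_t|$, (ii) time--freezing terms $\propto V_t|t-t'|^{1/2}$ and $V_t|X^{(n)}_t-X^{(n)}_{t'}|$ (and the $(n+1)$--analogues), and (iii) measure--freezing terms $\propto V_t\,\mathcal W_2(\mu^{(n)}_{\text{grid}},\mu^{(n+1)}_{\text{grid}})$ which, by the triangle inequality for $\mathcal W_2$, the bound $\mathcal W_2(\mathbb P_\xi,\mathbb P_\eta)\le\mathbb E[|\xi-\eta|^2]^{1/2}$ and the one--step estimates (for $X^{(n)}$ and for the CIR process), reduce to $(\mathbb E\sup_{s\le t}|X^{(n)}_{s\wedge\tau_\omega}-X^{(n+1)}_{s\wedge\tau_\omega}|^2)^{1/2}$ plus terms of order $(h_n\log(1/h_n))^{1/2}$. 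Absorbing the BDG contribution by AM--GM, performing the time change $s=g(t)$, and applying Gr\"onwall gives $\mathbb E[\sup_{t\le T}|X^{(n)}_{t\wedge\tau_\omega}-X^{(n+1)}_{t\wedge\tau_\omega}|^{2p}]\le C e^{C\omega}(h_n\log(1/h_n))^p$. Finally the localisation is removed exactly as at the end of the proof of Theorem~\ref{hlsvEM}: take $\tau=T$ in the pre--Gr\"onwall inequality, time--change so that $\tau_s$ reappears, and use H\"older with Markov's inequality and the finite exponential moment of Lemma~\ref{ExpIntVol} (valid precisely for $T<T^*$) to make $\int_0^\infty e^{Cs/p}\mathbb P(s\le g(T))^{1-1/p}\diff s$ finite. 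This yields $\mathbb E[\sup_{t\le T}|X^{(n)}_t-X^{(n+1)}_t|^2]\le Ch_n\log(1/h_n)$, which is summable in $n$, so $(X^{(n)})_n$ is Cauchy in $L^2(\Omega;C([0,T]))$ and converges to a continuous $\mathbb F$--adapted limit $X$. The delicate bookkeeping here, and the main obstacle of the proof, is to keep track of the $\omega$--dependence so that after $\omega\to\infty$ the Cauchy bound remains summable, which is what forces $\nu\ge1$ (so $g$ is strictly increasing) and $T<T^*$.

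\emph{Identification of the limit and conclusion.} From the $L^2(\sup)$--convergence, $\mathbb P_{(X^{(n)}_t,V_t)}\to\mu^Z_t:=\mathbb P_{(X_t,V_t)}$ in $\mathcal W_2$ uniformly in $t$; combined with the one--step estimates, also the frozen laws $\mu^{(n)}_{\text{grid}}\to\mu^Z_t$ in $\mathcal W_2$, so by the Lipschitz--in--state--and--measure and $\tfrac12$--H\"older--in--time continuity of $\tilde\beta,\tilde\sigma$ we get $\beta(t',(X^{(n)}_{t'},V_t),\mu^{(n)}_{\text{grid}})\to\beta(t,(X_t,V_t),\mu^Z_t)$ and likewise for $\sigma$. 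The uniform moment bounds and Lemma~\ref{momentboundvol} let one pass to the limit in the Lebesgue integral (dominated convergence) and in the It\^o integral (It\^o isometry, after localisation), giving item~(iv) of Definition~\ref{strongsolndefn}; items~(i)--(ii) are clear and (iii) follows from $\sup_{t\le T}\mathbb E[|X_t|^2]<\infty$, Lemma~\ref{momentboundvol}, and the linear growth of the coefficients in $(x,v)$. Together with Proposition~\ref{pathwiseuniqueness}, this establishes existence of a (unique) strong solution to \eqref{independentparticles3} on $[0,T]$ for $T<T^*$.
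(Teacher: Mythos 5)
Your proof follows essentially the same route as the paper's (Appendix~A): an Euler-like sequence with the measure argument frozen on each subinterval, a Cauchy argument in $L^2(\Omega;C([0,T]))$ via It\^o's formula, the time change $s=g(t)=\int_0^t V_u\,\diff u$ and Gr\"onwall under $\tau_\omega$, removal of the localisation via Markov's inequality together with Lemma~\ref{ExpIntVol} (which is exactly where $T<T^*$ enters), and identification of the limit as a strong solution via Wasserstein convergence of the frozen laws, continuity of $\tilde\beta,\tilde\sigma$, and uniform integrability / dominated convergence plus It\^o isometry. The only structural differences are cosmetic: you use nested dyadic grids so that comparing consecutive iterates yields a telescoping, summable Cauchy bound, whereas the paper compares two arbitrary refinement levels $n,l$ on uniform $n$-piece grids; and you carry along a $\log(1/h_n)$ factor from a Fischer--Nappo-type modulus bound, which the paper avoids since a one-step $\sup_t\mathbb E[\cdot]$ estimate (rather than $\mathbb E[\sup_t(\cdot)]$) suffices at that point.

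One small inaccuracy worth flagging: stopping at $\tau_\omega=\inf\{t:\int_0^t V_u\,\diff u\ge\omega\}$ bounds the time integral of $V$ but not $V_t$ itself, so it does \emph{not} make the per-step coefficients ``genuinely Lipschitz of linear growth'' in the usual deterministic-constant sense. The per-step SDE does indeed have a unique strong solution, but the clean justification is that $\tilde\beta(t,\cdot,\mu^{(n)}_k)$ and $\tilde\sigma(t,\cdot,\mu^{(n)}_k)$ are bounded and globally Lipschitz in $x$, while $V$ is an exogenous non-negative process with a.s.\ locally integrable paths; the SDE then falls under standard one-dimensional well-posedness results with random coefficients (the paper simply cites Section~2 of \cite{HuangWang}). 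This is easy to repair and does not affect the validity of the rest of your argument.
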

We include a sketch of the proof in the appendix, since it follows using similar arguments as those in Theorem 2.2 of \cite{LiMaoSongWuYin}.

\section{Numerical experiments}\label{numexp}
We now demonstrate our theoretical findings on the strong convergence of the discretised scheme \eqref{EM} in time and the pathwise propagation of chaos through numerical experiments. The main goal of these experiments is to evaluate the strong convergence of the time-discretisation scheme (as presented in Theorem \ref{hlsvEM}) in cases where the condition on the Feller ratio, $\nu > \nu^{*}$, is not satisfied, which is observed in practical applications.\bigbreak

\noindent Instead of using real market data, we simulate a \enquote{market} implied volatility surface using pre--calibrated Heston parameters, $V_0 = 0.0094$, $\theta = 0.0137$, $\kappa = 1.4124$, $\rho = -0.1194$, and $\xi = 0.2988$, sourced from \cite{CozMarRei2}. These parameters were calibrated to an FX market.
\bigbreak

\noindent Recall that we model the spot price process \( S \) using the risk--neutral dynamics \eqref{HestonLVM}. To calibrate this model to market prices, we begin by setting the leverage function to \( 1 \), effectively reducing the model to a pure Heston process. Since real market data is not used in our analysis, we adjust the parameters from our \enquote{market} volatility surface to \( V_0 = 0.0094 \), \( \theta = 0.01 \), \( \kappa = 1.5 \), \( \rho = - 0.1 \), and \( \xi = 0.3 \), treating these as the \enquote{calibrated} Heston parameters. Subsequently, we calibrate the leverage function \( \sigma(t, S_t) \). Using condition \eqref{Dupireiff} for exact calibration, we obtain the calibrated system \eqref{sde}, which we then approximate using the particle system \eqref{particlessystem} as described in Section \ref{particlemethod}.\bigbreak

\noindent Important for the simulation of the particle system \eqref{particlessystem} is the choice of regularisation parameters $\epsilon$ and $\delta$. The authors in \cite{GuyHen} suggest using a time-dependent bandwidth $\epsilon_{t,N} = k_t N^{-1/5}$, where $N^{-1/5}$ comes from the minimisation of the asymptotic mean-integrated squared error (AMISE) of the Nadaraya--Watson estimator, and $k_t$ follows Silverman's rule of thumb (see \cite{Silv}). In our experiments, we fix $\epsilon = S_0N^{-\frac{1}{5}}$, where $S_0$ is the initial value of the stock price, and $\delta = 0.01$. We refer the reader to \cite{ReiTsi} for tests on the accuracy of the particle method applied to the calibrated Heston-type LSV model for different values of the regularisation parameters $\epsilon$ and $\delta$.\bigbreak 

\noindent We first test the strong convergence in time of the Euler--Maruyama scheme defined in \eqref{EM}. We set $T=1$, $S_0 = 100$, $V_0 = 0.0094$, and only use $N=10^3$ particles to save computational time. In Theorem \ref{hlsvEM}, we proved the strong convergence of the EM scheme provided that $\nu >\nu^* = 2+\sqrt{3}$. Here, we consider different values of $\kappa$ and $\xi$ so that we test both cases when the condition on the Feller ratio is satisfied and violated, and check the rate of convergence. Specifically, we run tests for $\kappa_1 = 1.5$, $\kappa_2 = 6$, $\kappa_3 = 18$, while we keep the rest of the parameters fixed as mentioned above, giving $\nu_1 = 0.33$, $\nu_2 = 1.33$, and $\nu_3 = 4.00$ respectively. We observe strong convergence with rates $0.43$, $0.51$, and $0.51$, as shown in Figures \ref{fig:EM1}, \ref{fig:EM2}, and \ref{fig:EM3}, respectively. In the plots below, we show the true data points and fit a regression line to estimate the rate of convergence for each case. We also provide a $95$\% confidence interval for each estimated gradient. Noticeably, for $\nu_2 = 1.33 < \nu^*$, the rate of convergence agrees with our theoretical findings, even if the condition on the Feller ratio is violated. For $\nu_1 = 0.33 <  \nu^*$ the rate is slower, as expected, while for $\nu_3 = 4.00 > \nu^*$ we observe half-order convergence as suggested by the theory.\bigbreak
\begin{figure}[H]
  \centering
  \begin{minipage}[b]{0.45\textwidth} 
    \centering
    \includegraphics[width=\textwidth]{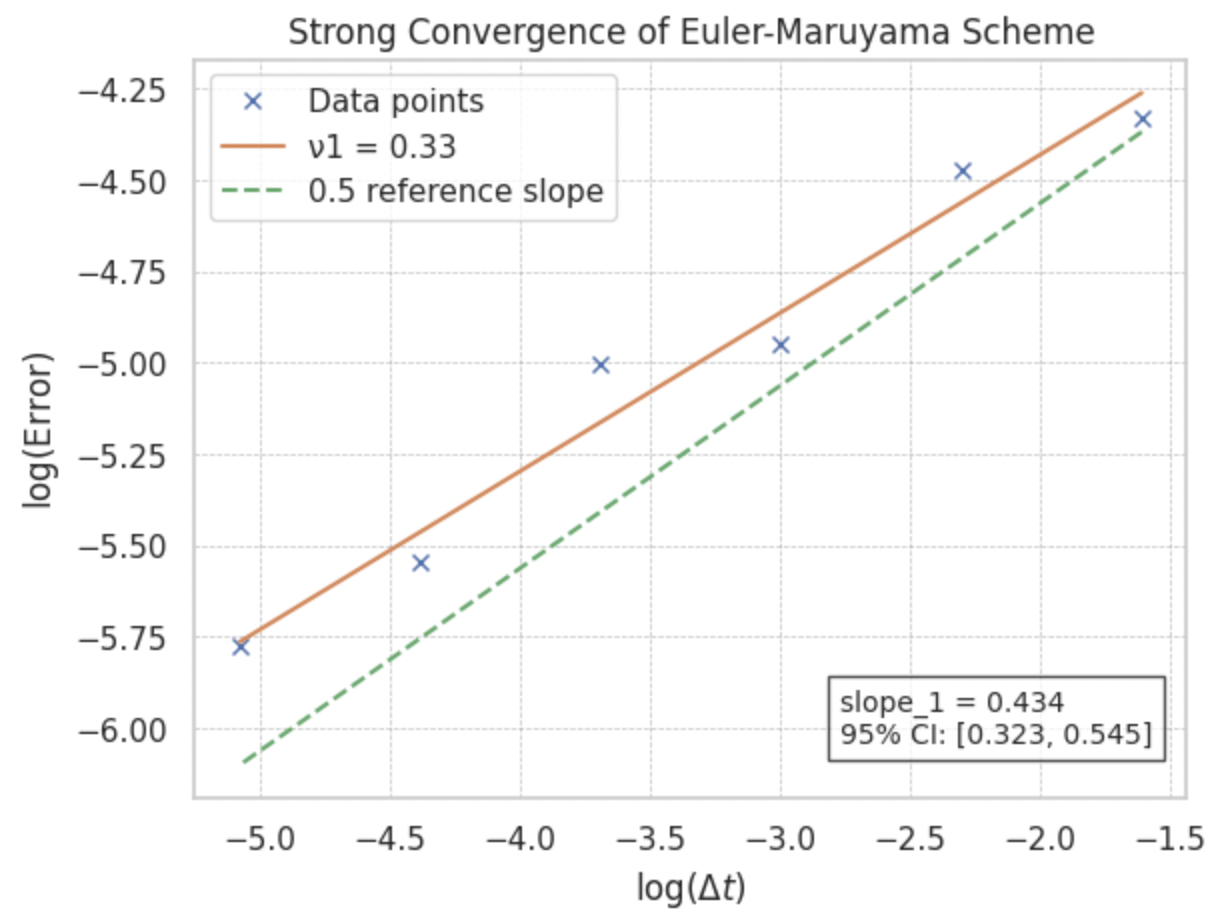}
    \caption{$\kappa_1 =1.5$}
    \label{fig:EM1}
  \end{minipage}
  \hfill
  \begin{minipage}[b]{0.45\textwidth} 
    \centering
    \includegraphics[width=\textwidth]{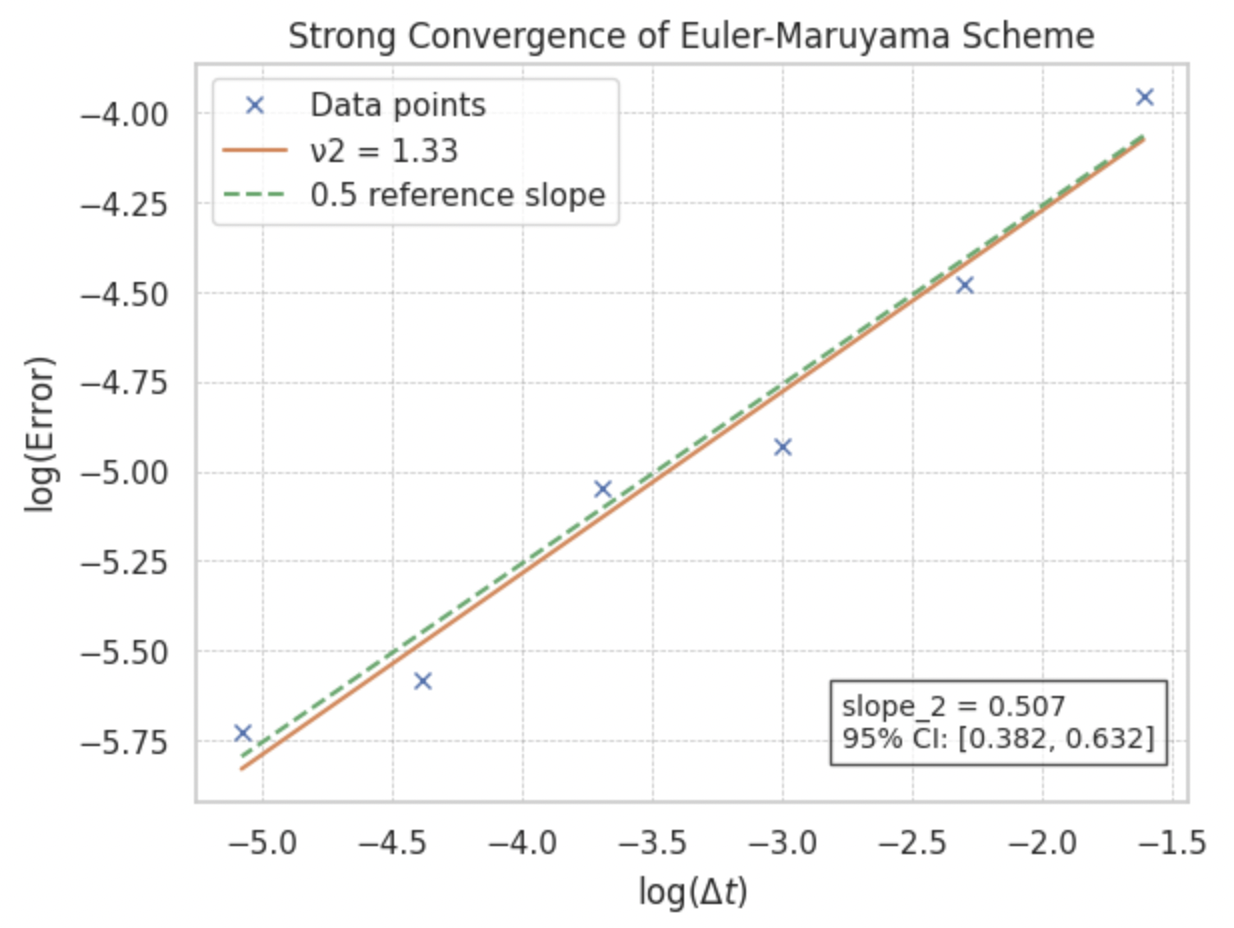}
    \caption{$\kappa_2 =6$}
    \label{fig:EM2}
  \end{minipage}
\end{figure}

\begin{figure}[H]
  \centering
  \begin{minipage}[b]{0.45\textwidth} 
    \centering
    \includegraphics[width=\textwidth]{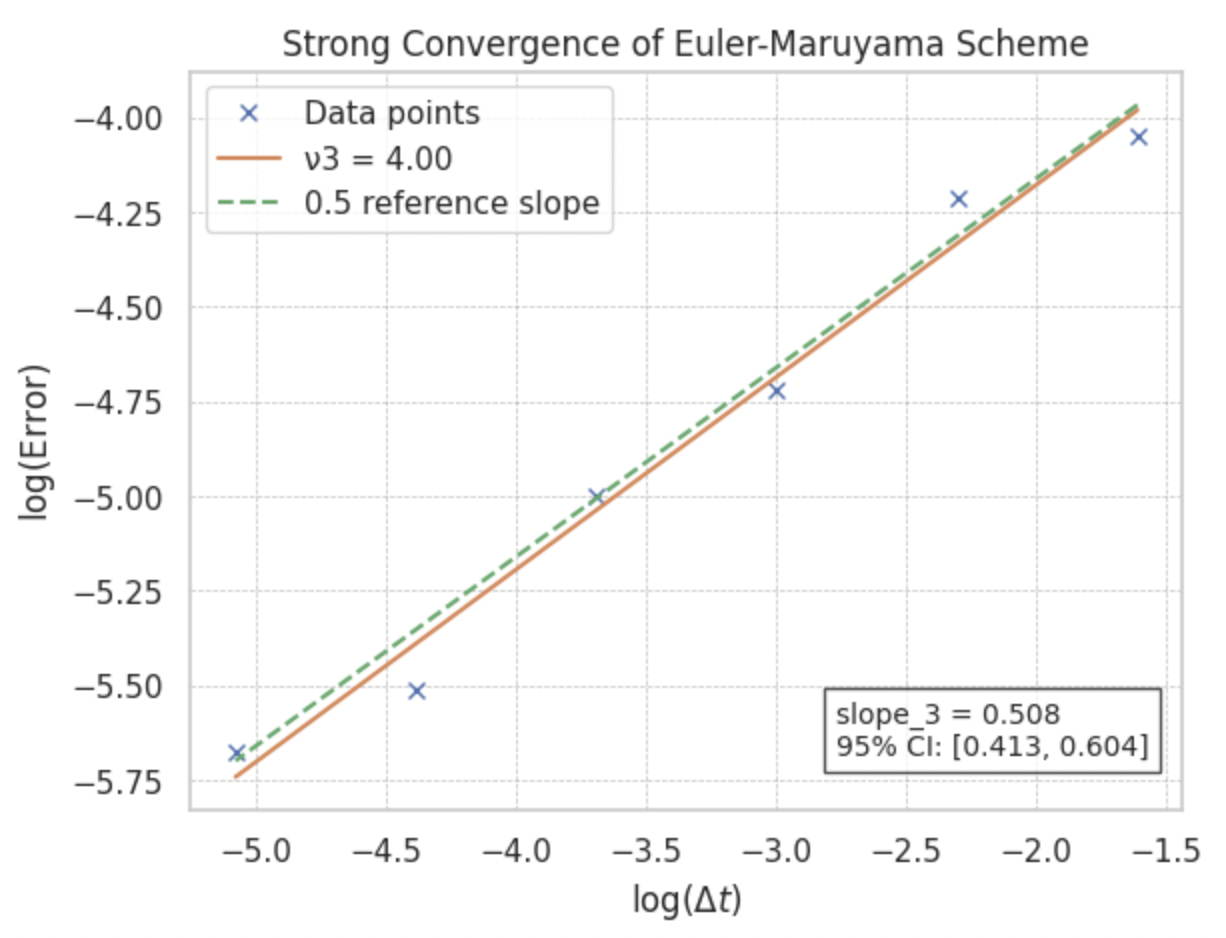}
    \caption{$\kappa_3 = 18$}
    \label{fig:EM3}
  \end{minipage}
  \hfill
  \begin{minipage}[b]{0.45\textwidth} 
    \centering
    \includegraphics[width=\textwidth]{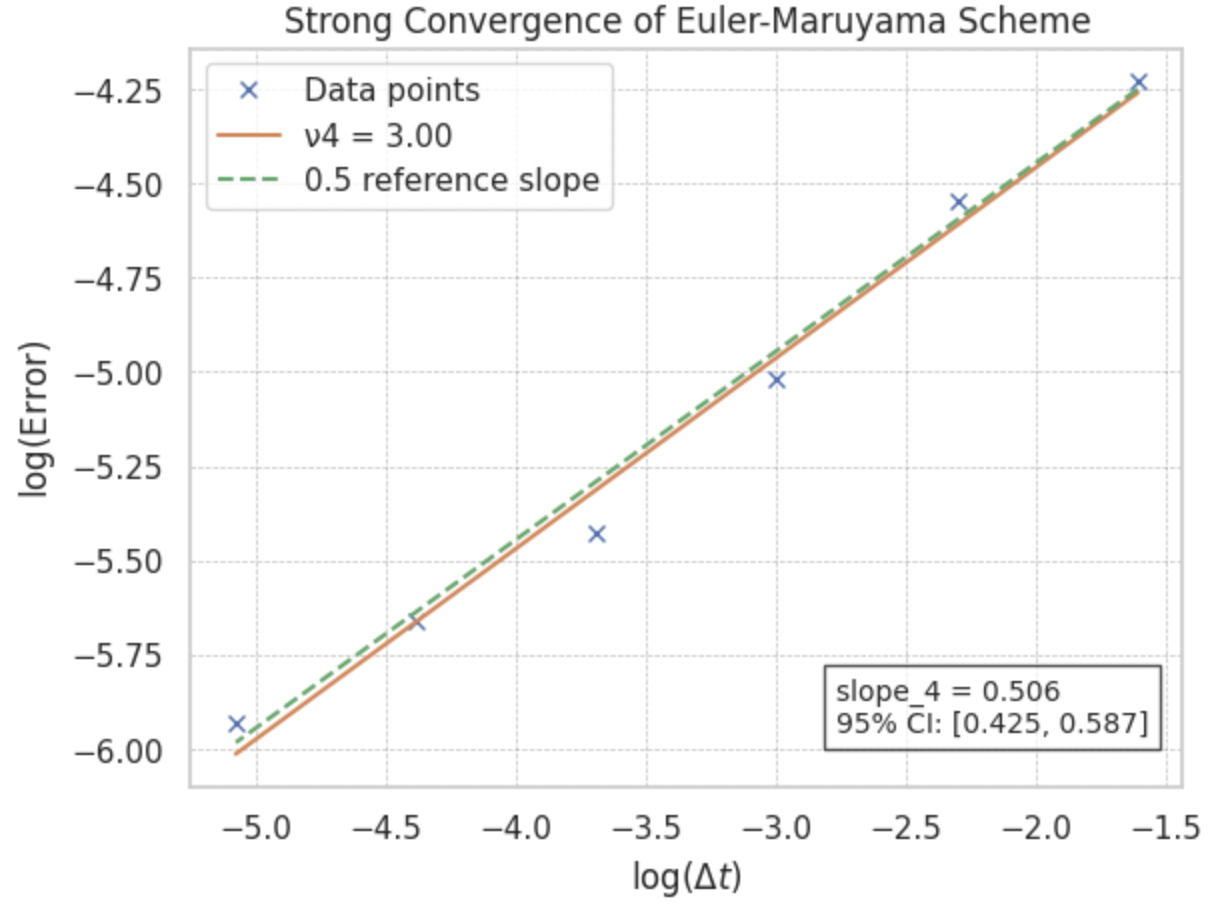}
    \caption{$\xi_4 = 0.1$}
    \label{fig:EM4}
  \end{minipage}
\end{figure}

 \noindent We also run further tests in which we keep $\theta = 0.01$, $\kappa = 1.5$, and $\rho = -0.1$, and let $\xi$ vary. Specifically, we consider $\xi_4 = 0.1$, $\xi_5 = 0.5$, $\xi_6 = 0.8$ giving $\nu_4 = 3.00$, $\nu_5 = 0.12$, and $\nu_6 = 0.05$, respectively. In Figure \ref{fig:EM4}, we observe a rate of $0.51$ for $\nu_4 = 3.00$, which is consistent with the one proved in theory even if $\nu_4 < \nu^*$. For the smaller Feller ratios $\nu_5$ and $\nu_6$ in Figures \ref{fig:StrongConvEM3} and \ref{fig:StrongConvEM4}, the data become noisier. This could occur because in these cases, the value of $\xi$ is larger and therefore the diffusion term in the dynamics of $V$ is larger, adding noise in the process. These instabilities also affect our state process through the leverage function, causing noise in our data.\bigbreak

\begin{figure}[H]
  \centering
  \begin{minipage}[b]{0.45\textwidth} 
    \centering
    \includegraphics[width=\textwidth]{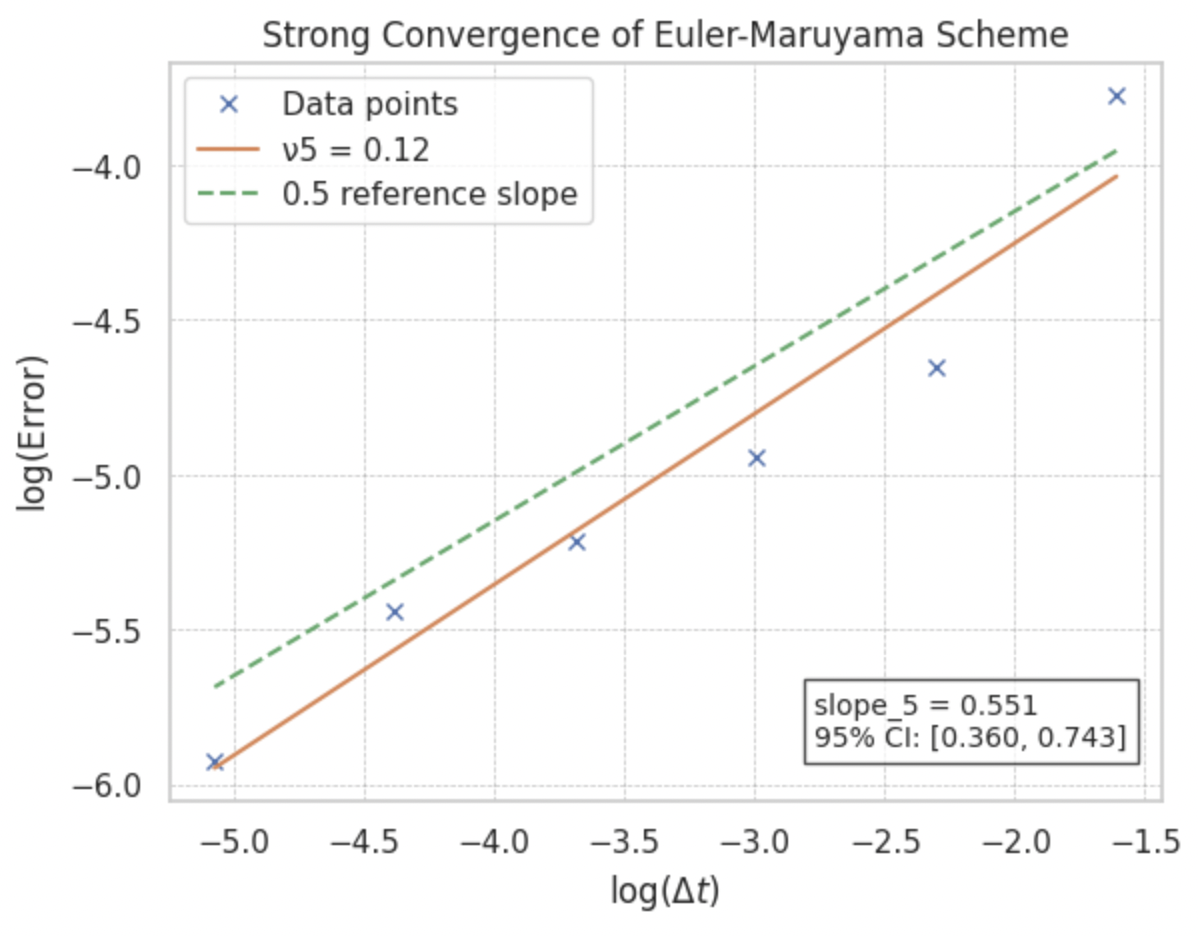}
    \caption{$\xi_5 = 0.5$}
    \label{fig:StrongConvEM3}
  \end{minipage}
  \hfill
  \begin{minipage}[b]{0.45\textwidth} 
    \centering
    \includegraphics[width=\textwidth]{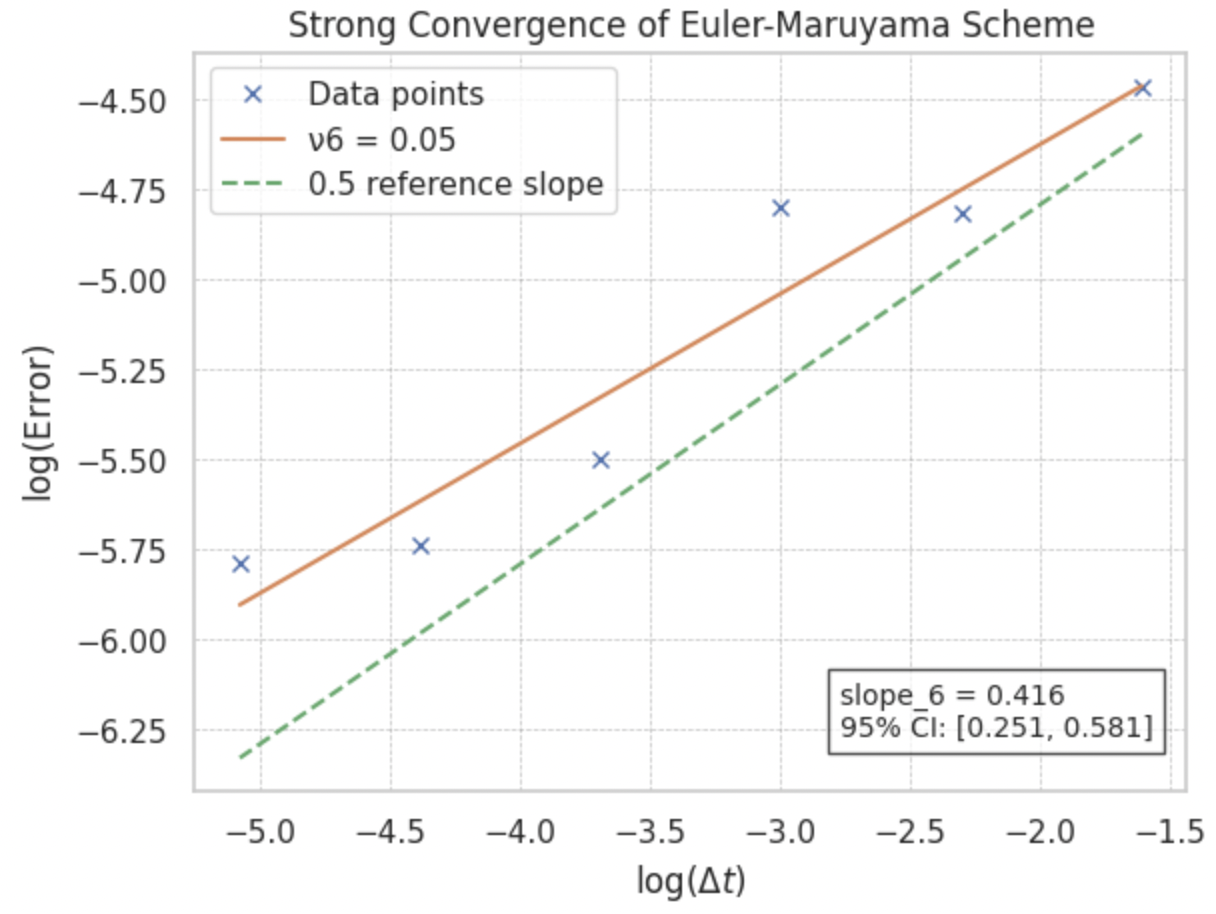}
    \caption{$\xi_6 = 0.8$}
    \label{fig:StrongConvEM4}
  \end{minipage}
\end{figure}

\begin{figure}[H]
  \centering
  \begin{minipage}[b]{0.45\textwidth}
    \centering
    \includegraphics[width=\textwidth]{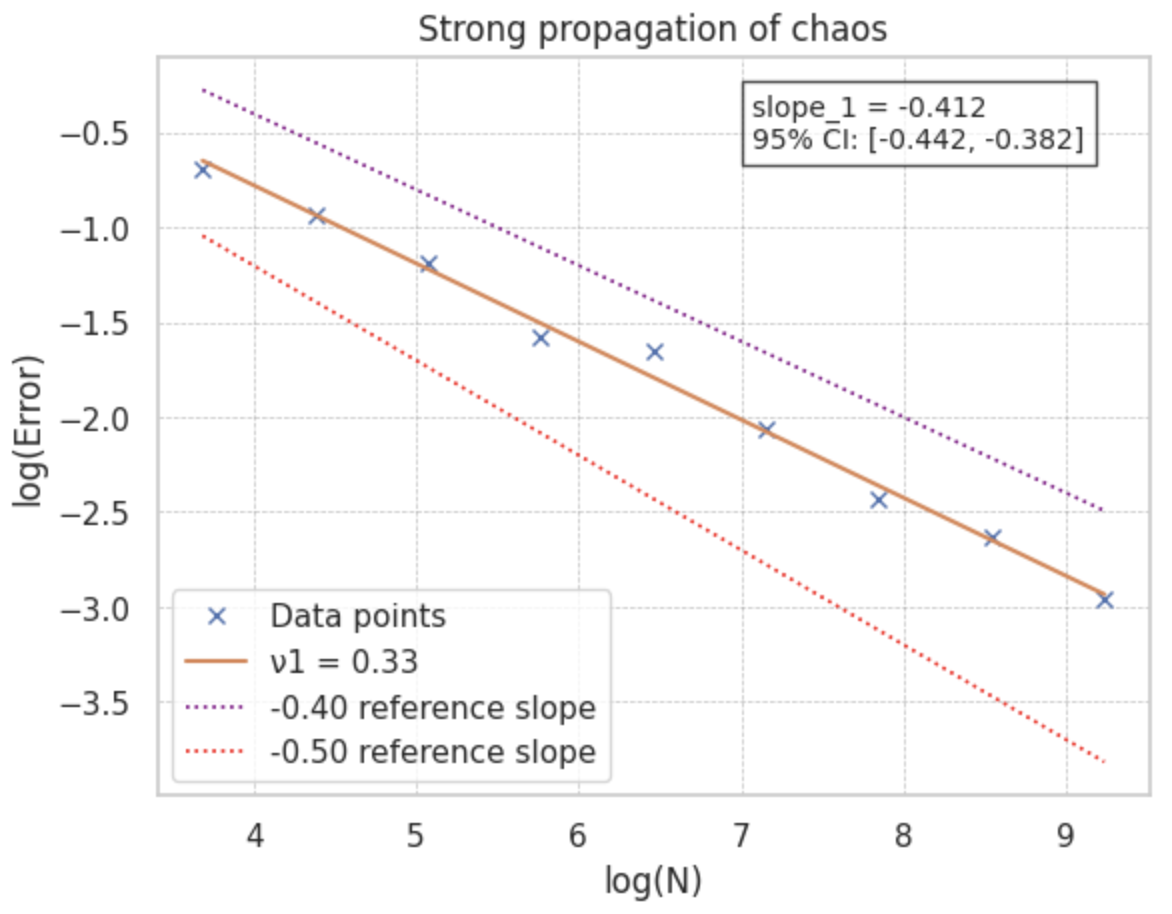}
    \caption{$\kappa_1 = 1.5$}
    \label{fig:prop_chaos_1}
  \end{minipage}
  \hfill
  \begin{minipage}[b]{0.45\textwidth} 
    \centering
    \includegraphics[width=\textwidth]{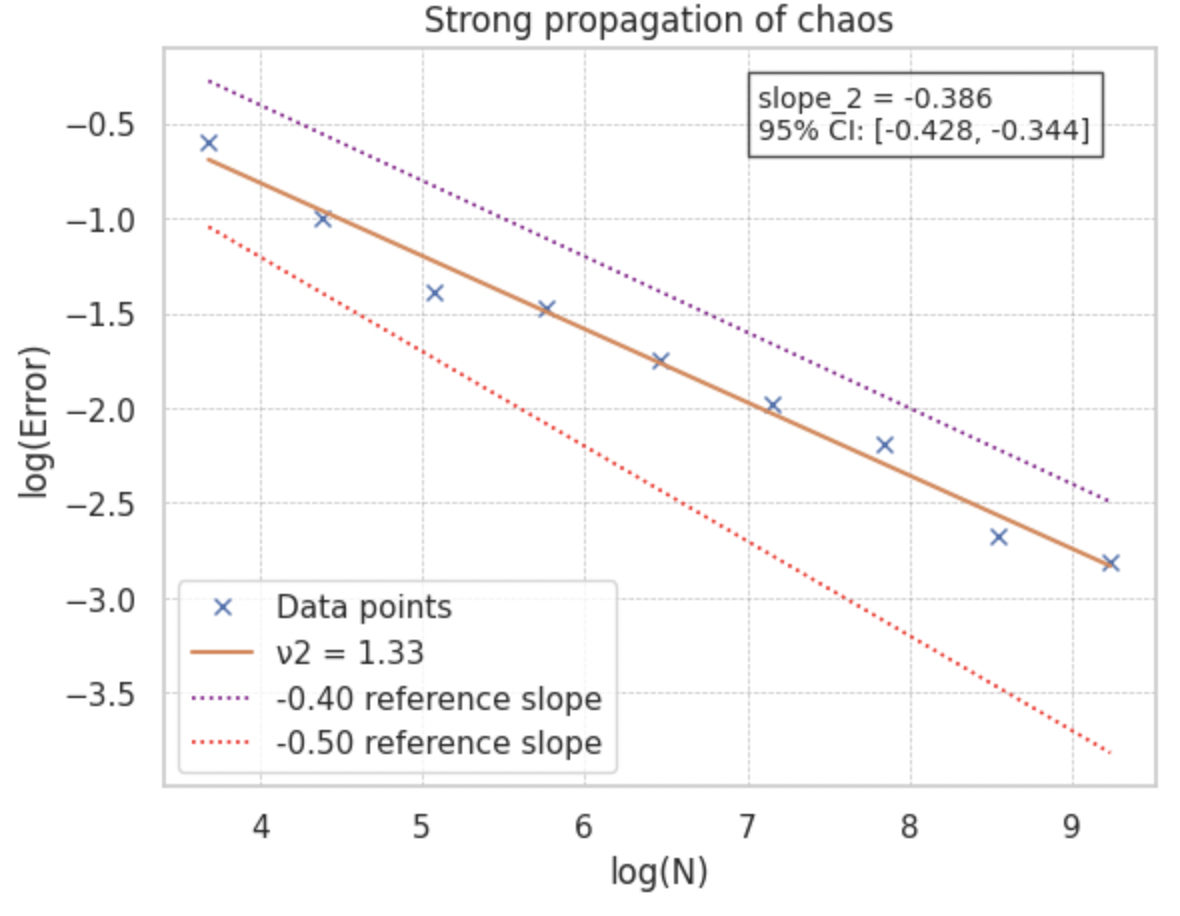}
    \caption{$\kappa_2 = 6$}
    \label{fig:prop_chaos_2}
  \end{minipage}
\end{figure}

\noindent Furthermore, for these different parameters and corresponding Feller ratios, we illustrate the strong propagation of chaos obtained in Theorem \ref{PropChaos}, in which the condition on the Feller ratio is $\nu \ge 1$. Specifically, we plot in log-log scale the root-mean-square error (RMSE) for increasing values of the number of particles $N$. We measure
\begin{equation}
\label{eq:rmse}
\text{error} \;:=\; \sqrt{\frac{1}{2N}\sum_{i=1}^{2N} \bigl(S_{T}^{i,2N} \;-\; \widetilde{S}_{T}^{i,2N}\bigr)^{2}} \,,
\end{equation}
where both systems $\{ S_{T}^{i,2N} \}_{i=1,\ldots,2N}
\quad \text{and} \quad
\{\widetilde{S}_{T}^{i,2N}\}_{i=1,\ldots,2N}$ are driven by the same Brownian motion. However, for the particles denoted by $\widetilde{S}_{T}^{i,2N}$, the leverage function is determined only using the first $N$ particles. In Figures \ref{fig:prop_chaos_1} -- \ref{fig:prop_chaos_6}, we observe a convergence rate of approximately $0.40$, which we compare to the established rate of $0.50$ in the literature in simpler settings (see, e.g., \cite{ReiEngSmi} for the case of super-linear growth in the drift coefficient). We observe that Feller ratios less than $1$ do not significantly impact convergence.
\begin{figure}[H]
  \centering
  \begin{minipage}[b]{0.45\textwidth}
    \centering
\includegraphics[width=\textwidth]{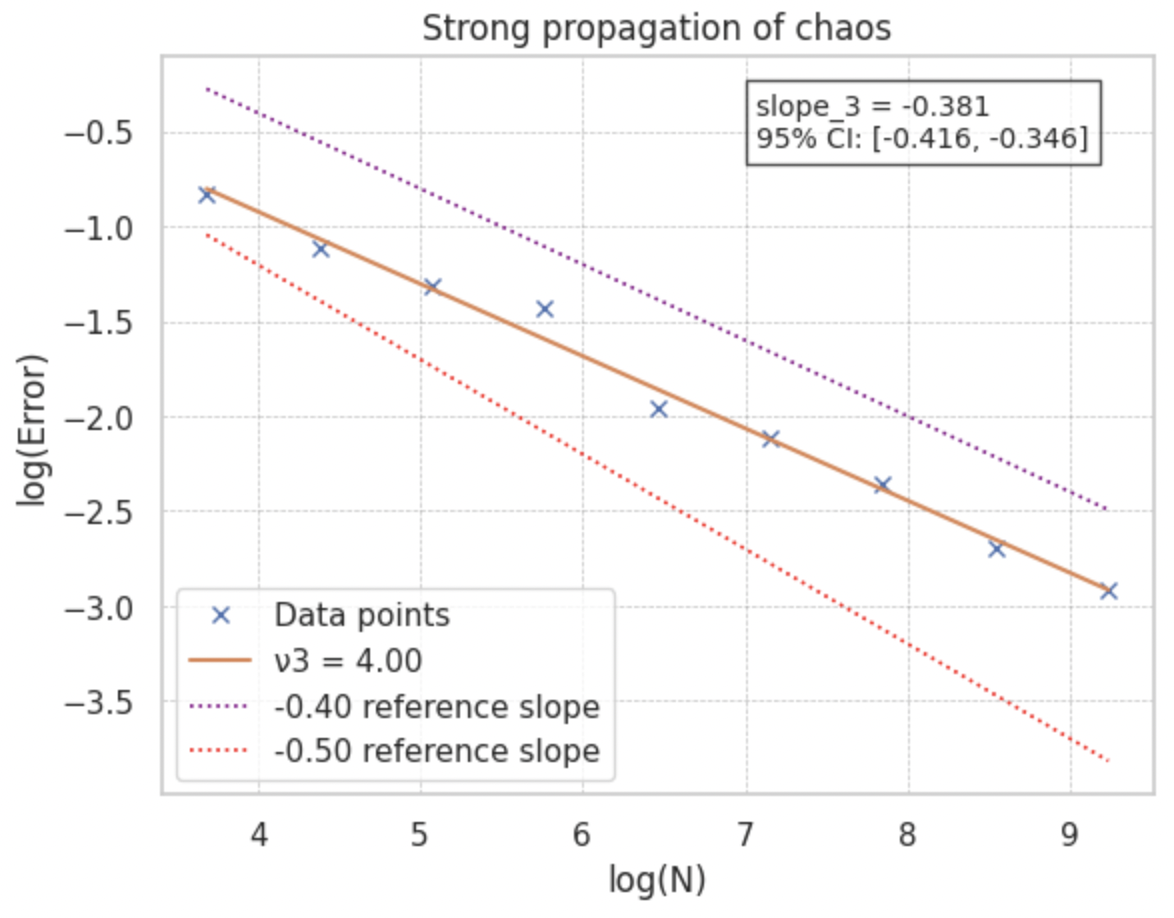}
    \caption{$\kappa_3 = 18$}
    \label{fig:prop_chaos_3}
  \end{minipage}\hfill
  \begin{minipage}[b]{0.45\textwidth}
    \centering
    \includegraphics[width=\textwidth]{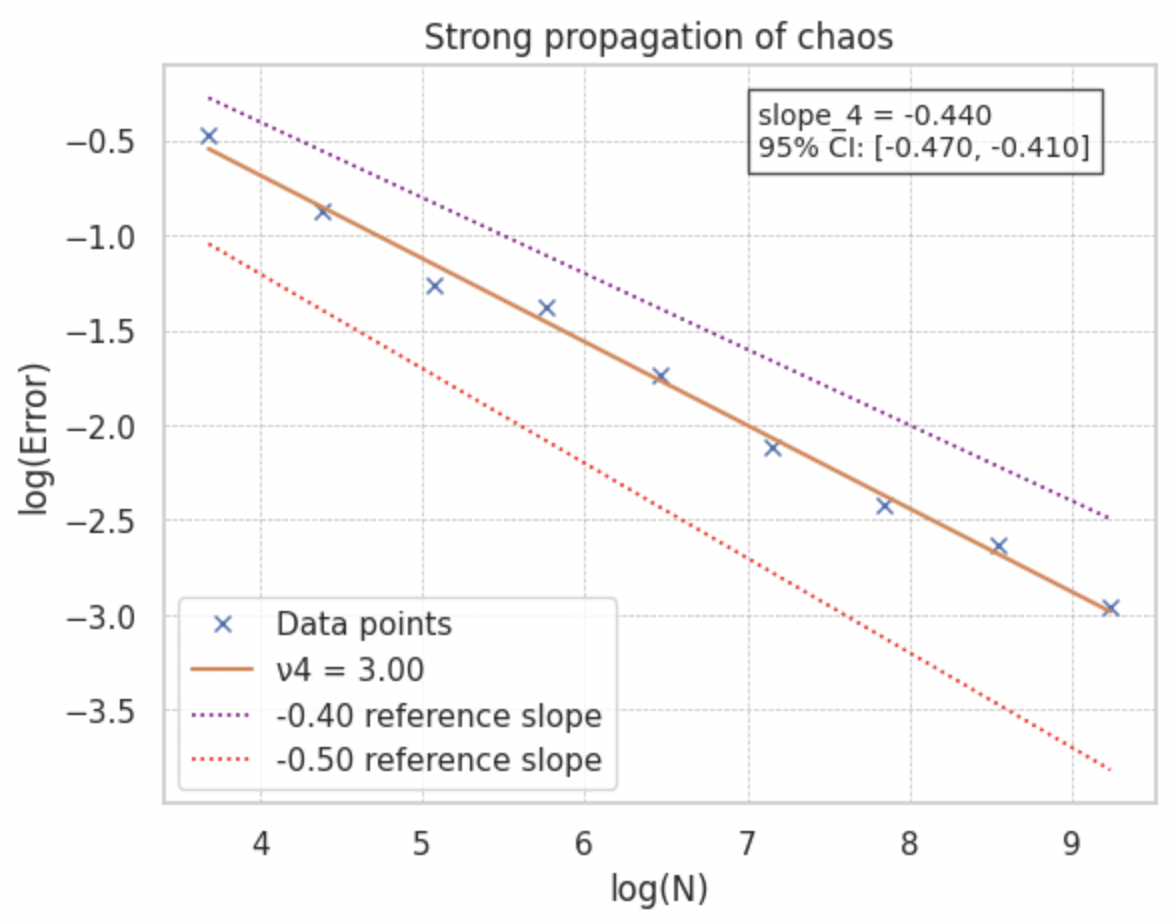}
    \caption{$\xi_4 = 0.1$}
    \label{fig:prop_chaos_4}
  \end{minipage}

  \vspace{1em} 
  \begin{minipage}[b]{0.45\textwidth}
    \centering
    \includegraphics[width=\textwidth]{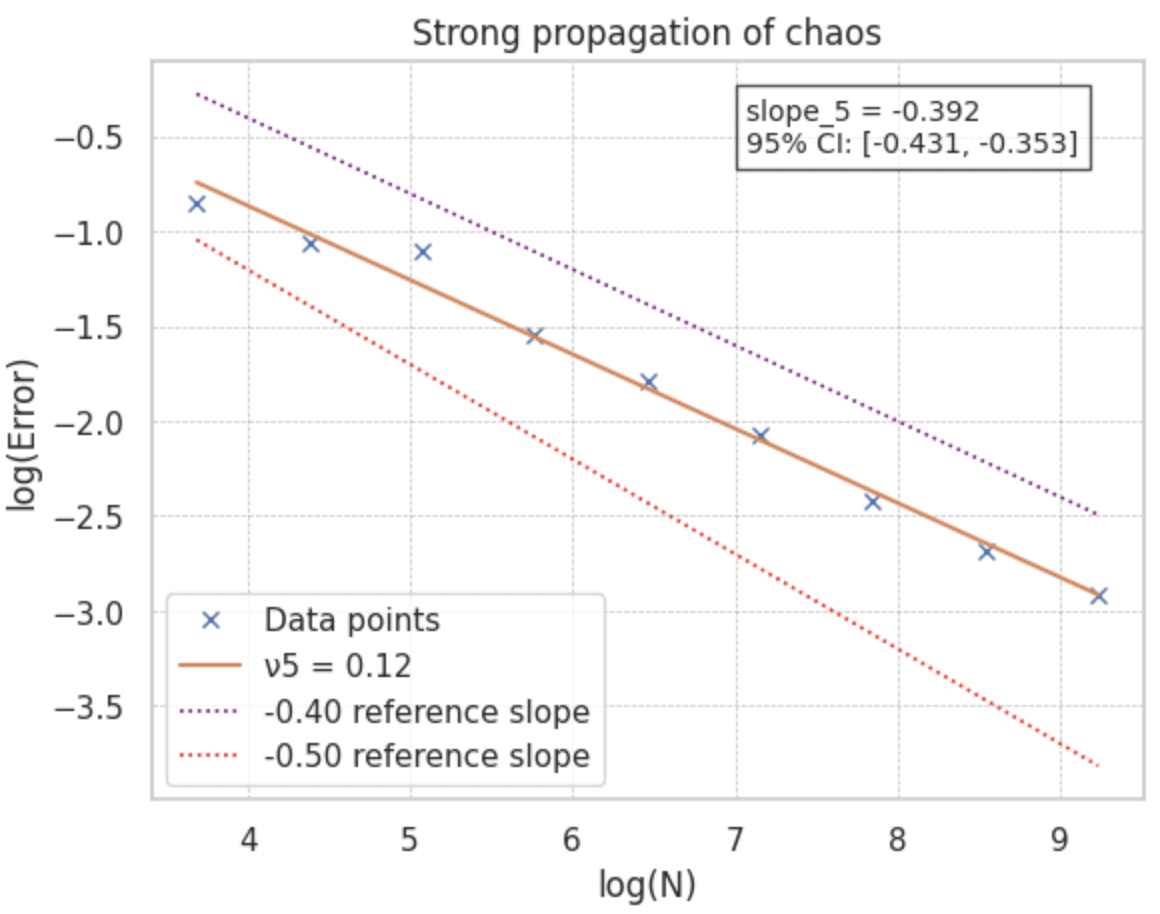}
    \caption{$\xi_5 = 0.5$}
    \label{fig:prop_chaos_5} 
  \end{minipage}\hfill
  \begin{minipage}[b]{0.45\textwidth}
    \centering
    \includegraphics[width=\textwidth]{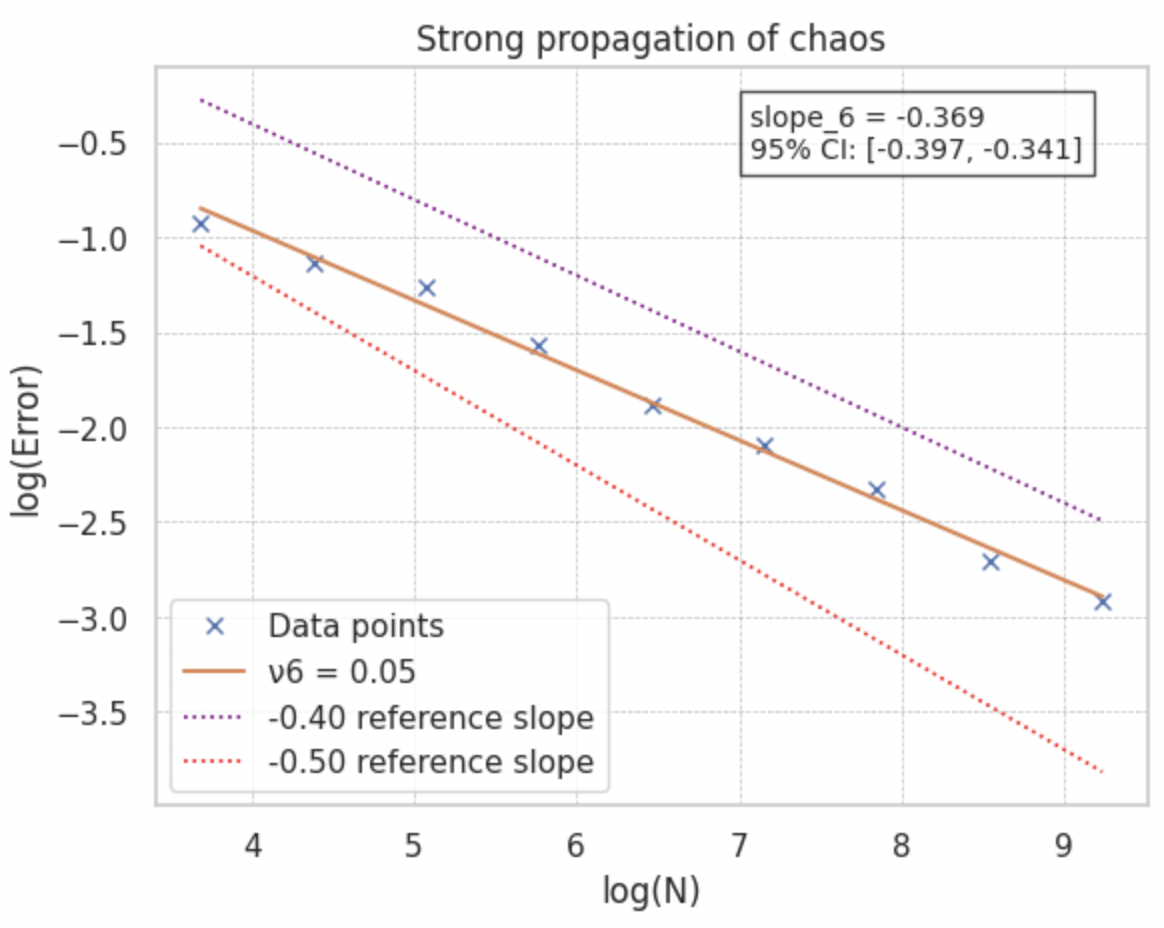}
    \caption{$\xi_6 = 0.8$}
    \label{fig:prop_chaos_6}
  \end{minipage}
\end{figure}
\section*{Acknowledgement}
This work was supported by the Additional Funding Programme for Mathematical Sciences, delivered by EPSRC (EP/V521917/1) and the Heilbronn Institute for Mathematical Research.
\bibliographystyle{plain}


\appendix
\section{Proof of Proposition \ref{existencetheorem}}
\textit{\textbf{Proposition 3.} Let Assumptions \ref{A3} hold and also assume that $\mathbb{E}[|X_0|^2] < \infty$ and $\nu \ge 1$. There exists a strong solution of \eqref{independentparticles3} on $[0,T]$ for $T < T^*$ for $T^*$ as in \eqref{Tbound}.}
\begin{proof}
We can prove the result by following similar arguments as those in \cite{LiMaoSongWuYin}, Theorem 2.2, so here for brevity we summarise the key steps.
\begin{enumerate}[label=\textopenbullet]
\item We start by defining an Euler-like sequence where the measure component is piecewise constant. 

Specifically, let $n \ge 1$ and partition $[0,T]$ to $[t_0 = 0, t_1], (t_1,t_2], ..., (t_{n-1}, t_n = T]$, where for all $m \in \{0,1, ..., n-1\}$, $t_m = \frac{mT}{n} $ and for $t \in (t_{m}, t_{m+1}]$ we define $X^{n}_t$ as

\begin{equation*}
\diff X^{n}_t =  \beta(t,(X^{n}_t,V_t),\mu^{Z^n}_{t_m})\diff t + \sigma(t,(X^{n}_t,V_t),\mu^{Z^n}_{t_m})\diff  W^{x}_t,\,\,\, X^{n}_0=X_0.
\end{equation*}
This is a local SDE since the non--local dependence on the distribution is replaced by $\mu^{Z^n}_{t_m} = \mathbb{P}_{(X^{n}_{t_m},V_{t_m})}$ which is deterministic and independent of $X^{n}_t$ at each time-step. The existence of the above SDE is established (see, for example, Section 2 in \cite{HuangWang}).\bigbreak

\noindent Note that $X^{n} \in L^2(\Omega, \mathcal{F}, \mathbb{P})$ since for $T < T^*$
\[ \mathbb{E}\left[\underset{t\in [0,T]}{\sup}|X^{n}_t|^2\right] \le \sum_{m=0}^{n-1}\mathbb{E}\left[ \underset{t \in [t_m, t_{m+1}]}{\sup}|X^{n}_t|^2\right] < \infty, \]
which follows by similar arguments as those in Lemma \ref{aprioriestimates}. 
\item We can then show that $(X^{n})_{n \in \mathbb{N}}$ is Cauchy in $L^2(\Omega, \mathcal{F},\mathbb{P})$ by showing that 


\begin{equation*}
\underset{n,l \to \infty}{\lim} \sqrt{\mathbb{E}\left[\underset{t\in [0,T]}{\sup}|X^{n}_t - X^{l}_t |^2 \right]} =0.
\end{equation*}
By the completeness of $L^2(\Omega, \mathcal{F},\mathbb{P})$ we conclude that there exists a unique limit $X \in L^2(\Omega, \mathcal{F},\mathbb{P})$ of the sequence $\{X^{n}\}_{n \in \mathbb{N}}$, i.e.,
\begin{equation}\label{Cauchylimit}
\underset{n \to \infty}{\lim}\mathbb{E}\left[\underset{t\in [0,T]}{\sup}|X^{n}_t - X_t |^2 \right] = 0
\end{equation}
\item It remains to show that this limit satisfies equation \eqref{independentparticles3}.
From \eqref{Cauchylimit}, we get
\begin{equation}
\underset{n \to \infty}{\lim}\mathbb{E}\left[|X^{n}_t - X_t |^2 \right] = 0.
\end{equation}
The convergence in mean-square for fixed $t \in [0,T]$ implies that there exists a subsequence $\{X^{n_k}\}_{k=1}^{\infty}$ that converges to $X$ almost surely for all $t \in [0,T]$, i.e.,\newline 
for almost all $\omega \in \Omega$,
\[X^{n_k}_t(\omega) \xrightarrow{} X_t(\omega) \text{ as } k \to \infty.\]
Furthermore, we need to ensure that the measure $\mu^{Z^{n_k}}_{t_m}$ converges to $\mu^{Z}_{t}$ in an appropriate metric. Here, we consider the Wasserstein $\mathcal{W}_2$ distance and show that
\begin{equation*}
\begin{split}
\underset{k \to \infty}{\lim}\underset{t \in [0, T]}{\sup}\mathcal{W}_2(\mu^{Z^{n_k}}_{t_m},\mu^{Z}_{t})^2 &\le 2 \underset{k \to \infty}{\lim}\underset{t \in [0, T]}{\sup}\mathbb{E}\left[|X^{{n_k}}_t - X^{{n_k}}_{t_m} |^2 \right] + 2\underset{k \to \infty}{\lim}\underset{t \in [0, T]}{\sup}\mathbb{E}\left[|X^{{n_k}}_t - X_{t} |^2 \right]\\
& \le 2 \underset{k \to \infty}{\lim} C\frac{T}{{n_k}} = 0.
\end{split}
\end{equation*}
Therefore, since for all $t \in [0,T]$, $\beta(t,.,.,.)$ and $\sigma(t,.,.,.)$ are jointly continuous in $\mathbb{R} \times \mathbb{R} \times \mathcal{P}_2(\mathbb{R} \times \mathbb{R})$, we deduce that for all $t \in [0,T]$ and almost all $\omega \in \Omega$,
$$\underset{k \to \infty}{\lim}\beta(t,(X^{{n_k}}_t,V_t),\mu^{Z^{n_k}}_{t_m}) = \beta(t,(X_t,V_t),\mu^{Z}_{t}),$$ and 
\[\underset{k \to \infty}{\lim}\sigma(t,(X^{{n_k}}_t,V_t),\mu^{Z^{n_k}}_{t_m}) = \sigma(t,(X_t,V_t),\mu^{Z}_{t}).\]
\item We now show that conditions \textit{(iii)} and \textit{(iv)} of Definition \ref{strongsolndefn} hold. Notice that for all $k \ge 1$, $\mathbb{E}\left[\left| \beta(t,(X^{{n_k}}_t,V_t),\mu^{Z^{n_k}}_{t_m}) -\beta(t,(X_t,V_t),\mu^{Z}_{t}) \right|\right] \le 2\tilde{\beta}_{\max}\mathbb{E}\left[|V_t| \right]$. By Lemma \ref{momentboundvol}, $\underset{t\in [0,T]}{\mathbb{E}\left[|V_t| \right]} <\infty$ which concludes that $\{\mathbb{E}\left[\left| \beta(t,(X^{{n_k}}_t,V_t),\mu^{Z^{n_k}}_{t_m}) -\beta(t,(X_t,V_t),\mu^{Z}_{t}) \right|\right]\}_{k\ge 1}$ is uniformly integrable.\newline Similarly, one can show the same for $\{\mathbb{E}\left[\left| \sigma(t,(X^{{n_k}}_t,V_t),\mu^{Z^{n_k}}_{t_m}) -\sigma(t,(X_t,V_t),\mu^{Z}_{t}) \right|^2\right]\}_{k\ge 1}$.\newline

Also, $\left| \beta(t,(X^{{n_k}}_t,V_t),\mu^{Z^{n_k}}_{t_m})-\beta(t,(X_t,V_t),\mu^{Z^{n}}_{t}) \right| \le 2 \tilde{\beta}_{\max}\left|V_t \right|$ and $\underset{t \in [0,T]}{\sup}\mathbb{E}\left[\left|V_t \right| \right] < \infty$, concluding that $\{|\beta(t,(X^{{n_k}}_t,V_t),\mu^{Z^{n_k}}_{t_m})- \beta(t,(X_t,V_t),\mu^{Z^{n}}_{t})|\}_{k \ge 1}$ is uniformly integrable. The same holds for $\{|\sigma(t,(X^{{n_k}}_t,V_t),\mu^{Z^{n_k}}_{t_m})- \sigma(t,(X_t,V_t),\mu^{Z^{n}}_{t})|^2\}_{k \ge 1}$.\bigbreak

\noindent We can now show that for $t \in [0,T]$,
$\int_0^t\beta(s,(X^{{n_k}}_s,V_s),\mu^{Z^{n_k}}_{s_m})\diff s$ converges to\newline $\int_0^t\beta(s,(X_s,V_s),\mu^{Z^{n}}_{s})\diff s$ in the $L^1$ sense. Applying the dominated convergence theorem twice, see, for example, Theorem $4$ in \cite{ANShi}, we can show
\begin{equation}
\begin{split}
& \underset{k \to \infty}{\lim}\mathbb{E}\left[\left|\int_0^t\beta(s,(X^{{n_k}}_s,V_s),\mu^{Z^{n_k}}_{s_m})\diff s - \int_0^t\beta(s,(X_s,V_s),\mu^{Z^{n}}_{s})\diff s \right| \right]\\
& \le \int_0^t  \underset{k \to \infty}{\lim}\mathbb{E}\left[\left|\beta(s,(X^{{n_k}}_s,V_s),\mu^{Z^{n_k}}_{s_m})\diff s - \beta(s,(X_s,V_s),\mu^{Z^{n}}_{s})\right| \right]\diff s\\
& = \int_0^t \mathbb{E}\left[ \underset{k \to \infty}{\lim}\left|\beta(s,(X^{{n_k}}_s,V_s),\mu^{Z^{n_k}}_{s_m})\diff s - \beta(s,(X_s,V_s),\mu^{Z^{n}}_{s})\right| \right]\diff s = 0.
\end{split}
\end{equation}
For the sequence of stochastic integrals $\{\int_0^t\sigma(s,(X^{{n_k}}_s,V_s),\mu^{Z^{n_k}}_{s_m})\diff W_s \}_{k \ge 1}$, we prove convergence in an $L_2$ sense using the BDG inequality. Specifically,
\begin{equation}
\begin{split}
& \underset{k \to \infty}{\lim}\mathbb{E}\left[\left|\int_0^t\sigma(s,(X^{{n_k}}_s,V_s),\mu^{Z^{n_k}}_{s_m})-\sigma(s,(X_s,V_s),\mu^{Z^{n}}_{s})\diff W_s\right|^2 \right]\\
& \le \underset{k \to \infty}{\lim}\mathbb{E}\left[\left(\underset{t \in [0,T]}{\sup}\left|\int_0^t\sigma(s,(X^{{n_k}}_s,V_s),\mu^{Z^{n_k}}_{s_m})-\sigma(s,(X_s,V_s),\mu^{Z^{n}}_{s})\diff W_s\right| \right)^2 \right]\\
& \le C\underset{k \to \infty}{\lim} \mathbb{E}\left[\int_0^T \left|\sigma(s,(X^{{n_k}}_s,V_s),\mu^{Z^{n_k}}_{s_m}) - \sigma(s,(X_s,V_s),\mu^{Z^{n}}_{s}) \right|^2 \diff s\right]\\
& = C \int_0^T \mathbb{E} \left[ \underset{k \to \infty}{\lim} \left|\sigma(s,(X^{{n_k}}_s,V_s),\mu^{Z^{n_k}}_{s_m}) - \sigma(s,(X_s,V_s),\mu^{Z^{n}}_{s}) \right|^2 \right] \diff s =0.\\
\end{split}
\end{equation}

The convergence in mean-square, concludes that we can extract a subsequence that converges almost surely.  
The conclusion follows.
\end{enumerate}
\end{proof}

\end{document}